\documentclass[journal,onecolumn,12pt]{IEEEtran}
\usepackage[doublespacing,nodisplayskipstretch]{setspace}
\usepackage{amsmath,graphicx,mathtools}
\usepackage{amssymb,amsthm,verbatim,mathtools}
\usepackage[english]{babel}
\usepackage{csquotes}
\usepackage{bbm}
\usepackage{xcolor,tikz}
\usepackage{url,cite}
\usepackage[colorlinks,pdfstartview=FitH,citecolor=blue,linkcolor=blue,urlcolor=blue]{hyperref}
\usepackage{transparent}
\usepackage{mathrsfs}
\usepackage{enumitem}
\usepackage{subfig}
\usepackage[most]{tcolorbox} % For inline code styling
\usetikzlibrary{calc, bending, backgrounds}
\definecolor{lightgray}{RGB}{224,224,224}

\newtheorem{theorem}{Theorem}
\newtheorem{example}{Example}
%[theorem]
\newtheorem{definition}{Definition}
\newtheorem{proposition}{Proposition}
\newtheorem{lemma}{Lemma}
\newtheorem{remark}{Remark}

\newcommand{\srr}{\mathcal{S}(\boldsymbol{G})}

% \addbibresource{bibliography.bib}
\newcounter{myboxctr}
\newtcolorbox{myrefbox}[2][]{%
  % colback=gray!10,
  % colframe=black!80,
  % fonttitle=\bfseries,
  % title=Box~\refstepcounter{myboxctr}\themyboxctr:~#2,
  % label=#1
  % colback=gray!10,
  % colframe=black!80,
  % fonttitle=\bfseries,
  % before upper={\refstepcounter{myboxctr}\label{#1}},
  % title=Box~\themyboxctr:~#2
  colback= white,
  colframe=gray,
  fonttitle=\bfseries,
  before upper={},
  title={\refstepcounter{myboxctr}\label{#1}Box~\themyboxctr:~#2}
}

\author{Hoang~Ly %~\IEEEmembership{Applying Student Member,}
        and Emina~Soljanin
        %~\IEEEmembership{Fellow,}
        %~\IEEEmembership{Senior Member,~IEEE}% <-this % stops a space
% \thanks{%M. Shell was with the Department
% of Electrical and Computer Engineering, Georgia Institute of Technology, Atlanta,
% GA, 30332 USA e-mail: (see http://www.michaelshell.org/contact.html).}% <-this % stops a space
% \thanks{J. Doe and J. Doe are with Anonymous University.}% <-this % stops a space
\thanks{H.~Ly and E.~Soljanin are with the Department of Electrical and Computer Engineering, Rutgers University, New Jersey, USA. E-mail: \{\texttt{mh.ly;emina.soljanin}\}@rutgers.edu.}}
% \thanks{This research was partially supported by the National Science Foundation under Grant No. CIF-2122400, and was presented in part at the Joint Mathematics Meetings (JMM), Seattle, January 2025, and at the IEEE International Symposium on Information Theory (ISIT 2025), Ann Arbor, June 2025.}}

% %%%%%%
% % correct bad hyphenation here
% \hyphenation{op-tical net-works semi-conduc-tor}
% % ------------------------------------------------------------
\begin{document}
\tikzset{every picture/.style={line width=1.15pt}}
\title{Service Rate Regions of MDS Codes and\\[-1ex] Fractional Matchings in Quasi-uniform Hypergraphs}
%\title{... Service Rates of Reed--Muller Codes}

\maketitle
% ------------------------------------------------------------

%%% Several authors with up to three affiliations:
\author{%
  \IEEEauthorblockN{Hoang Ly and Emina Soljanin}
  \IEEEauthorblockA{
 %  Department of Electrical \& Computer Engineering\\
  Rutgers University\\
  %, New Brunswick, NJ\\
                    E-mail: \{\texttt{mh.ly,emina.soljanin}\}@rutgers.edu}
%   \and
%   \IEEEauthorblockN{V. Lalitha}
%   \IEEEauthorblockA{
% %Signal Processing and Communications Research Center\\ 
% IIIT Hyderabad\\
%                     E-mail: lalitha.v@iiit.ac.in}
}

\maketitle
\begin{abstract}
%% Text of abstract
The Service Rate Region (SRR) is a fundamental performance metric for distributed storage systems with redundancy, quantifying the range of data request patterns that the system can serve concurrently without exceeding the servers' capacities. Geometrically, the SRR is a convex polytope in $\mathbb{R}^k$, where each axis corresponds to the request rate for one of $k$ data objects. This paper focuses on the SRR of systems encoded using a large class of Maximum Distance Separable (MDS) codes. We characterize the axial intercepts of the SRR and derive the smallest standard simplex containing it, demonstrating that the SRR expands strictly as the number of systematic columns in the generator matrix of the code increases. To facilitate this analysis, we develop a graph-theoretic framework that models the SRR as the image of a fractional matching polytope of a quasi-uniform hypergraph. We introduce a novel allocation strategy, \emph{Greedy Matching}, and use it to characterize the SRRs without requiring an exhaustive evaluation of fractional matchings. These findings yield exact SRRs for a wide range of MDS codes and unify previous results regarding systematic and non-systematic constructions. Our work offers both theoretical and practical guidance for designing storage systems with optimized data service. 
\end{abstract}

%%Graphical abstract
%\begin{graphicalabstract}
%\, \in \,cludegraphics{grabs}
%\end{graphicalabstract}

%%Research highlights
% \begin{highlights}
% \item Characterizes Service rate region (SRR) for a broad class of MDS-coded storage systems.
% \item Establishes a graph-theoretic framework using fractional matchings in hypergraphs.
% \item Proves that increasing systematic nodes strictly enlarges the SRR.
% \item Introduces and proves optimality of a novel Greedy Matching allocation strategy.
% \end{highlights}

% \begin{keyworthe d}
%% keywords here, in the form: keyword \sep keyword
% service rate region, fractional matchings, recovery graph, recovery set, duality theorem 
%% PACS codes here, in the form: \PACS code \sep code
% \PACS 0000 \sep 1111
%% MSC codes here, in the form: \MSC code \sep code
%% or \MSC[2008] code \sep code (2000 is the default)
% \MSC 0000 \sep 1111
% \end{keyword}

% \end{frontmatter}

%% \linenumbers

% Note that keywords are not normally used for peerreview papers.
% For peer review papers, you can put extra information on the cover
% page as needed:
% \ifCLASSOPTIONpeerreview
% \begin{center} \bfseries EDICS Category: 3-BBND \end{center}
% \fi
%
% For peerreview papers, this IEEEtran command inserts a page break and
% creates the second title. It will be ignored for other modes.
% \IEEEpeerreviewmaketitle
\newpage
\section{Introduction}
\label{sec:Introduction}
Erasure-coded distributed storage systems provide reliability and availability by storing encoded data objects across multiple servers. Redundancy not only protects against node failures but also mitigates performance degradation in download times caused by stragglers (slow or unresponsive servers). Previous research has primarily focused on optimizing storage efficiency by minimizing redundancy overhead and improving recovery efficiency of lost data through reduced repair bandwidth or repair degree~\cite{DSS:journals/ftcit/RamkumarBSVKK22}. Another line of work focuses on the download-time performance evaluation of distributed storage systems \cite{Download:KadheSS15, Download:AktasKSS21}. Evaluating download time is challenging because coded systems introduce complex data-access queues \cite{Download:AktasNS17}. Even in simple cases, one has to resort to heuristics \cite{Download:AktasS18}. 
The stability regions of these queues, known as the \emph{service rate region} (SRR), have emerged as a fundamental metric for evaluating the performance of erasure-coded distributed storage systems~\cite{Service:journals/tit/AktasJKKS21}. 

For a system that stores \(k\) data objects redundantly on \(n\) servers, each capable of handling requests at a rate \(\mu\), we define the SRR as the set of request rate vectors \((\lambda_1, \dots, \lambda_k)\) that the system can serve concurrently without exceeding the capacity of any server. Geometrically, the SRR forms a convex polytope in \(\mathbb{R}^k\), with each axis corresponding to the request rate for a distinct data object~\cite{Service:conf/allerton/AktasAJJKMMS17}. Recent work has focused on characterizing SRRs under various coding schemes. The exact SRRs are known for Simplex codes~\cite{Service:conf/allerton/AktasAJJKMMS17}, Hamming codes~\cite{Service_Hamming,Service_Design:preprint/arxiv/LyS25}, certain structured MDS codes~\cite{Service:journals/tit/AktasJKKS21, Service:conf/isit/KazemiKSS20}, and for Reed-Muller codes of first-order and general (higher orders)~\cite{Service:conf/isit/KazemiKS20, SRR_RM:conf/isit, Service:preprint/arxiv/LySL25}. The general properties of SRR polytopes have been explored in~\cite{Service:journals/siaga/AlfaranoKRS24}, and connections to the majority-logic decoding and combinatorial design theory have also been established~\cite{Service_Design:preprint/arxiv/LyS25}. Parallel research has examined the code design problem: constructing codes whose SRRs meet specific performance or structural constraints~\cite{Parameters:conf/isit/KilicRS24}.

This paper characterizes the SRRs of a broad class of MDS codes, which remain central to both practical deployments and theoretical analyses of erasure-coded systems. Building on a graph-theoretic foundation, we extend prior frameworks to methodically analyze MDS codes with varying numbers of systematic (uncoded) and coded server nodes. Our approach leverages fractional matchings in hypergraphs to expose new structural insights into the geometry of SRRs.

Distributed storage allocation problems are often reformulated as hypergraph matching problems, where recovery sets correspond to hyperedges~\cite{allocations:journals/tcom/PengNS21, LargeMatching:journals/jct/AlonFHRRS12}. These formulations lend themselves to efficient relaxation via fractional matchings, enabling linear programming techniques for performance analysis and optimization. However, in practice, these problems often become computationally intractable due to the high degree of overlap among hyperedges or the inherent complexity of identifying all such recovery sets~\cite{Service:preprint/arxiv/LySL25}. Such structural challenges motivate the need for more refined combinatorial approaches to understanding and characterizing the underlying SRR.

We here leverage the concept of \emph{recovery hypergraphs}, first defined in~\cite{Service:conf/isit/KazemiKSS20}, to translate the problem of analyzing SRRs into an equivalent graph-theoretic problem. We derive explicit geometric inner and outer bounds for these regions, providing clear insights into how different storage strategies directly influence the SRR.
In particular, by analyzing these recovery hypergraphs, we prove that increasing the number of systematic servers enlarges the SRR, highlighting a fundamental trade-off between simplicity and flexibility in data service. Moreover, we introduce a novel data request allocation strategy called \emph{Greedy Matching} and establish its optimality, i.e., that it suffices to focus on these matchings to characterize the SRR rather than all possible fractional matchings. Utilizing this strategy, we explicitly characterize SRRs across various system configurations, generalizing and refining existing results. Our findings offer a rigorous foundation for understanding and optimizing data service capabilities in distributed storage systems, paving the way for superior system design and more efficient data access strategies.
\textbf{}

The remainder of the paper is organized as follows. Section~\ref{Sec:Problem_Formulate} first introduces the redundant storage model and formally defines service rate region (SRR) of distributed storage systems. It then reformulates the problem in graph-theoretic and linear programming terms and introduces relevant combinatorial concepts and tools. Section~\ref{Sec:Bounding_simplices} derives two geometric bounds for SRRs. Section~\ref{Sec:Inclusion} uses the derived bounds to establish an inclusion theorem for SRRs, proving that SRRs strictly expand as the number of systematic nodes increases. In Section~\ref{Sec:Greedy_Matching}, we propose a novel \emph{Greedy Matching} allocation strategy and prove its optimality. Section~\ref{Sec:SRR_polytopes} explicitly characterizes the SRRs in various configurations of the system. Finally, Section~\ref{Sec:Conclusion} concludes the paper and outlines potential directions for future research.

\section{Problem Formulation}\label{Sec:Problem_Formulate}

% This section formulates the problem and sets the groundwork for subsequent analysis. We begin by describing the redundant storage model consisting of $n$ identical nodes (servers) that redundantly store various data object. We then construct a class of underlying MDS \emph{generator matrices} that the system used for storage.

% \emph{recovery set} of a data object as the set of servers storing copies that collectively enable its recovery and specifying their sizes. We then introduce the service rate regions of distributed storage systems. 

% Next, we construct a class of MDS \emph{generator matrices}, which encode data objects into their respective copies, and associate each generator matrix with a \emph{recovery hypergraph}, where server nodes are represented as vertices. In this graph, a set of vertices forms a hyperedge if their corresponding servers collaboratively form a recovery set for a data object. This representation helps visualize recovery set formation and overlap while leveraging graph theory tools.

% We then introduce \emph{fractional matching} in hypergraphs and show that each achievable data access request corresponds to at least one fractional matching. The service rate regions emerge from these matchings under a suitable linear mapping. Finally, we close the section by outlining the main contributions of this work, which center on characterizing those service rate regions. All critical concepts introduced herein are summarized in the \emph{Glossary} at the end of the section.

 We first describe a redundant storage model consisting of $n$ identical nodes (servers) that collectively store multiple data objects with redundancy. We then construct a family of underlying MDS \emph{generator matrices} that govern how data objects are encoded and distributed across these servers. Finally, we introduce certain concepts necessary for our analysis.
 
 When mentioning well-known concepts in the literature, we use \emph{italic}. We formally introduce the new and less standard concepts through a \emph{Definition}. The matrices and standard basis vectors are denoted \textbf{in bold}. The finite field over a prime power \(q\) is denoted by \(\mathbb{F}_q\). A linear code \(\mathcal{C}\) on \(\mathbb{F}_q\) with parameters \([n, k, d]_q\), is a $k$-dimensional subspace of the $n$-dimensional vector space \(\mathbb{F}_q^n\) with a minimum Hamming distance $d$. The symbols \(\boldsymbol{0}_k\) and \(\boldsymbol{1}_k\) denote the all-zero and all-one column vectors of length \(k\), respectively. The standard basis (column) vector with a one at position \(i\) and zero elsewhere is represented by \(\boldsymbol{e}_i\), and its transpose by $\boldsymbol{e}_i^{\top}$. The set of positive integers not exceeding \(i\), which is $\{1, 2, \dots, i\}$, is denoted \([i]\). Finally, $(x)^+ = \max\{x, 0\}$ for any real number $x$.

\subsection{Redundant Storage using MDS Codes}
Consider a storage system that stores \( k \ge 2 \) data objects on \( n \geq k \) servers (or, \emph{nodes}), labeled \( 1, \dots, n \). We assume that all data objects have the same size and that each server has a storage capacity of one object, that is, \emph{unit storage capacity}. The assumption allows us to mathematically represent objects as elements of some finite field \( \mathbb{F}_q \). 
Each server stores a linear combination of the data objects in $\mathbb{F}_q$. The system can therefore be specified by a matrix \( \boldsymbol{G} \, \in \, \mathbb{F}_q^{k \times n} \), called the {\it generator matrix}. If $\mathrm{o} = (o_1,\dots, o_k)$ is a row vector in $\mathbb{F}_q^n$ of data objects, then the $j$-th coordinate of the resultant vector $\mathrm{o}\cdot \boldsymbol{G}$, for $j \, \in \, \{1, 2, \dots, n\}$, is the coded object stored on the $j$-th server. Each column \( \boldsymbol{c}_i \), \( 1 \le i \le n \), of the generator matrix \( \boldsymbol{G} \) corresponds to server \( i \). If \( \boldsymbol{c}_i \) is equal to a standard basis vector, the server is called \emph{systematic} (or \emph{uncoded}); otherwise, it is referred to as \emph{coded}.

We construct a family of generator matrices for MDS codes over $\mathbb{F}_q$ with varying numbers of systematic columns. We aim to understand and quantify how these variations impact the SRR. The construction is as follows: start with a $k \times (k+n)$ MDS matrix $\boldsymbol{M}$ over a finite field $\mathbb{F}_q$ where $q$ is a prime or prime power such that $q \ge k+n+1$. Its structure is given by
\[
\boldsymbol{M} =
\Bigl[\boldsymbol{e}_1 \mid \boldsymbol{e}_2 \mid \dots \mid \boldsymbol{e}_k \mid p_1 \mid \dots \mid p_n\Bigr],
\]
where the first $k$ columns, $\boldsymbol{e}_j$, are standard basis vectors in $\mathbb{F}_q^k$. The remaining $n$ columns, $p_1, \dots, p_n$, are parity-check columns. Because its leftmost $k$ columns form the identity matrix $\boldsymbol{I}_k$, $\boldsymbol{M}$ is called systematic. The construction and existence of such systematic MDS matrices over $\mathbb{F}_q$ have been shown, for example, in~\cite{DSS:journals/ftcit/RamkumarBSVKK22,MDS:journals/ComLet/LacanF04}.

From $\boldsymbol{M}$, we construct a family of $k \times n$ matrices $\boldsymbol{G}_i(n, k)$, indexed by $i$ where $0 \le i \le k$. Each $\boldsymbol{G}_i(n, k)$ is obtained by selecting the $i$ leftmost systematic columns and the $n-i$ rightmost parity columns of $\boldsymbol{M}$:
\[
\boldsymbol{G}_i(n, k) \;=\;
\Bigl[\boldsymbol{e}_1 \mid \boldsymbol{e}_2 \mid \dots \mid \boldsymbol{e}_i \mid p_{i+1} \mid \dots \mid p_n\Bigr].
\] 
We will often write $\boldsymbol{G}_i$ instead of $\boldsymbol{G}_i(n,k)$, and denote by $G_i^l$ (without boldface) the $l$-th column of $\boldsymbol{G}_i$. Since $\boldsymbol{M}$ is an MDS matrix, any $k$ columns of $\boldsymbol{M}$ are linearly independent, and the same holds for any $k$ among the $n \ge k$ columns of $\boldsymbol{G}_i$. Hence each $\boldsymbol{G}_i$ also generates an MDS code. Moreover, $\boldsymbol{G}_i$ and $\boldsymbol{G}_{i+1}$ differ in exactly one column, namely, the $(i+1)$-th column. These observations will play a key role in proving the inclusion relations for service regions (see Section~\ref{Sec:Inclusion}).

The matrices \( \boldsymbol{G}_0 \) and \( \boldsymbol{G}_k \) correspond to two extremal cases: systems with no systematic nodes (i.e., all nodes are coded) and systems having all \( k \) systematic nodes, respectively. These boundary cases have been the primary focus of prior SRR studies for MDS codes~\cite{Service:journals/tit/AktasJKKS21}. In this work, we consider a more general setting in which the generator matrices have $i$ systematic nodes, for any \( 0 \le i \le k \), thereby interpolating between the fully coded and fully systematic regimes. Practically, in a storage system generated by \( \boldsymbol{G}_i \), each of the first \( i \) servers stores the uncoded copy of a single data object (raw data). The remaining \( n - i \) servers store linearly coded copies.

\subsection{Recovery Sets and Service Rate Region} 
A recovery set for object $o_i$ is a set of stored symbols that can be used to recover $o_i$. Let $\boldsymbol{c}_i$ denote the $i$-th column of the generator matrix $\boldsymbol{G}$. A set $R \subseteq [n]$ of column indices in $\boldsymbol{G}$ is a \emph{recovery set} for data object $o_j$ (aka recovery set for vector $\boldsymbol{e}_j$), if
\[
  \boldsymbol{e}_j \, \in \, \textsf{span}(R) \triangleq \textsf{span}(\,\cup\,_{i \, \in \, R}\{\boldsymbol{c}_i\}), \,\text{ and }\,  \boldsymbol{e}_j \notin \textsf{span}(S) \text{ for any proper subset } S \subsetneq R.
\]
In other words, $R$ is a minimal set of columns whose span in $\mathbb{F}_q$ includes $\boldsymbol{e}_j$. The minimality of $R$ implies that to recover an object, we do not use more servers than necessary. Let \( \mathcal{R}_i = \{R_{i, 1}, \hdots, R_{i, t_i}\} \) be the $t_i$ recovery sets for the object \( o_i \). Define \( \mu_l \, \in \, \mathbb{R}_{\ge 0} \) as the average rate at which the server \( l \, \in \, [n] \) processes requests for data objects, also referred to as its \emph{capacity}. The vector \( \boldsymbol{\mu} = (\mu_1, \hdots, \mu_n) \) represents the service capacities of all servers, and we assume that servers have \emph{uniform service capacity}; that is, \( \mu_j = 1, \ \forall \,\, j \in [n] \), or equivalently, \( \boldsymbol{\mu} = (\boldsymbol{1}_n)^{\top} \). Furthermore, requests for an object \( o_i \) arrive at a rate \( \lambda_i \) for all \( i \, \in \, [k] \), with the request rates for all objects represented by the vector \( \boldsymbol{\lambda} = (\lambda_1, \hdots, \lambda_k) \, \in \, \mathbb{R}_{\ge 0}^k \).

Consider the class of scheduling strategies that assign a fraction of requests for an object to each of its recovery sets. Let $\lambda_{i,j}$ be the portion of requests for object $o_i$ that are assigned to the recovery set $R_{i,j}, j \, \in \, [t_i]$. The \emph{service rate region} $\srr \subset \mathbb{R}_{\ge0}^k$ is defined as the set of all request vectors $\boldsymbol{\lambda}$ that can be served by a coded storage system generated by a matrix $\boldsymbol{G}$ and having service rate $\boldsymbol{\mu}$. Such vectors $\boldsymbol{\lambda}$ are called \textit{achievable}. Therefore, $\srr$ is the set of all vectors $\boldsymbol{\lambda}$ for which there exist $\lambda_{i,j} \, \in \, \mathbb{R}_{\ge 0}, i \, \in \, [k]$ and $j \, \in \, [t_i]$, satisfying the following constraints:
\begin{align}
        \sum_{j=1}^{t_i}\lambda_{i, j} &= \lambda_i, \quad \forall \,\, i \, \in \, [k], \label{eq:SRR_1}\\   
        \sum_{i=1}^{k}\sum_{\substack{j=1 \\ l \, \in \, R_{i, j}}}^{t_i}\lambda_{i, j} &\le \mu_l, \quad \forall \,\, l \, \in \, [n], \label{eq:SRR_2}\\
        \lambda_{i, j} &\, \in \, \mathbb{R}_{\ge 0}, \quad \forall \,\, i \, \in \, [k], j \, \in \, [t_i]. \label{eq:SRR_3}
\end{align}
Equation~\eqref{eq:SRR_1} ensures demand satisfaction for all objects, while constraints~\eqref{eq:SRR_2} ensure that no server exceeds its service capacity. The set of demand vectors $\boldsymbol{\lambda}$ for which a solution exists forms the service \textit{polytope} in $\mathbb{R}_{\ge 0}^k$. Any set $\{\lambda_{i, j}\}$ satisfying \eqref{eq:SRR_1}--\eqref{eq:SRR_3} is termed a \emph{valid allocation} for $\boldsymbol{\lambda}$; it determines the rate $\lambda_{i, j}$ at which the total request for object $i$ are routed to the $j$-th recovery set of object $i$.
%Figure~\ref{fig:2m1s} shows a request vector $\boldsymbol{\lambda}$ and two associated valid allocations.

In $\boldsymbol{G}_i(n,k)$, recovery sets for each basis vector $\boldsymbol{e}_j$ follow:
(i) If $i < j$, any $k$-subset of columns forms a \emph{non-systematic} recovery set for $\boldsymbol{e}_j$.
(ii) If $i \geq j$, the $j$-th column equals $\boldsymbol{e}_j$, giving a \emph{systematic} recovery set of size $1$, while any $k$-subset not containing the $j$-th column serves as a non-systematic recovery set.
\begin{remark}\label{remark:MDS}
All non-systematic recovery sets have size $k$, since no smaller set (excluding $\boldsymbol{e}_j$) can span $\boldsymbol{e}_j$ without violating the MDS property. Hence, the size of a recovery set is either $1$ or $k$.
\end{remark}

% Our focus is on employing MDS codes and analyzing its SRRs.

\subsection{Recovery Hypergraphs}
A \emph{hypergraph} (or simply, \emph{graph}) is a pair \((V, E)\), where \(V\) is a finite set of \emph{vertices}, and \(E\) is a multiset whose elements are subsets of \(V\), called \emph{hyperedges} (or \emph{edges}). The \emph{size} of an edge is its cardinality. A hypergraph is \emph{\(k\)-uniform} if each edge has size exactly \(k\). We further generalize uniformity to define a \((k,r)\)-quasi-uniform hypergraph: 
\begin{definition}
    A hypergraph is called \emph{\((k,r)\)-quasi-uniform} if each hyperedge has size either \(k\) or \(r\).
\end{definition}

Next, we introduce the concept of \emph{recovery hypergraph}, first defined in~\cite{Service:conf/isit/KazemiKSS20}, associated with each generator matrix. This notion enables us to reformulate our problem into a graph theory one, allowing us to leverage its well-developed tools and analytical results.

For the generator matrix \( \boldsymbol{G}_i(n,k) \), we define its associated \emph{recovery hypergraph} \( \Gamma_i(n,k) \) as follows. The hypergraph \( \Gamma_i(n,k) \) has \( n + i \) vertices:
\begin{itemize}
  \item \( n \) (column) vertices, each corresponding to a distinct column of \( \boldsymbol{G}_i(n,k) \),
  \item \( i \) auxiliary vertices, denoted \( (\boldsymbol{0}_k)_j = \bigl([0,\dots,0]^{\top}\bigr)_j \) for \( j = 1,\dots,i \). Each such vertex represents a length-\(k\) zero vector and is associated exclusively with the systematic column \( \boldsymbol{e}_j \), to which it is the only vertex connected. 
 %Since each systematic column $\boldsymbol{e}_j$ can itself form a recovery set for object $i$
\end{itemize}
Next, we precisely describe how vertices form hyperedges in the recovery hypergraph. A set of vertices forms a hyperedge \emph{labeled} \(\boldsymbol{e}_j\) if their corresponding columns constitute a recovery set for the basis vector \(\boldsymbol{e}_j\). Because a systematic column forms a systematic recovery set by itself, introducing the auxiliary vertices $(\boldsymbol{0}_k)_j$ creates size-2 hyperedges that represent these singleton recovery sets, thereby avoiding self-loops in the hypergraph. Also, if a systematic column \(\boldsymbol{e}_j\) appears in the matrix \(\boldsymbol{G}_i\), the vertex corresponding to this systematic column is called a \emph{systematic vertex}. From the construction of the recovery graph, each systematic vertex is directly connected to an additional zero vertex \(\big([0, \dots, 0]^{\top}\big)_j\), forming a size-2 edge labeled \(\boldsymbol{e}_j\), called a \emph{systematic edge}. Edges formed by any minimal recovery set for an object $o_j$ that does not include its associated systematic vertex $\boldsymbol{e}_j$ are called \emph{non-systematic edges} (although it may contain other systematic vertices $\boldsymbol{e}_i, \, i\neq j$). Thus by Remark~\ref{remark:MDS}, every edge is either systematic (of size 2) or non-systematic (of size \(k\)). Additionally, a same vertex set may form multiple parallel edges with distinct labels if it can serve as recovery sets for multiple basis vectors. However, each hyperedge carries exactly one unique label. Figure~\ref{fig:G42-RG} illustrates the matrices \( \boldsymbol{G}_i(4,2) \) along with their corresponding recovery hypergraphs \( \Gamma_i(4,2) \) for \( i = 0,1,2 \). In these matrices, the four parity columns are explicitly given by:
\[
\bigl[\begin{smallmatrix} 1 \\ 1 \end{smallmatrix}\bigr], 
\bigl[\begin{smallmatrix} 1 \\ \alpha \end{smallmatrix}\bigr], 
\bigl[\begin{smallmatrix} 1 \\ \alpha^2 \end{smallmatrix}\bigr], 
\bigl[\begin{smallmatrix} 1 \\ \alpha^3 \end{smallmatrix}\bigr],
\]
where $\alpha$ is a primitive element of $\mathbb{F}_7$.
In \(\boldsymbol{G}_0(4,2)\), for example, vertices \(\bigl[\begin{smallmatrix} 1\\1 \end{smallmatrix}\bigr]\) and \(\bigl[\begin{smallmatrix} 1\\\alpha \end{smallmatrix}\bigr]\) form two parallel edges, one labeled \(\boldsymbol{e}_1\) and the other \(\boldsymbol{e}_2\). The two additional vertices are $\bigl[\begin{smallmatrix} 0\\0 \end{smallmatrix}\bigr]_1$ and  \(\bigl[\begin{smallmatrix} 0\\0 \end{smallmatrix}\bigr]_2\), exclusively connected to $\boldsymbol{e}_1$ and $\boldsymbol{e}_2$, respectively. In $\boldsymbol{G}_1$, for instance, edge connecting $\bigl[\begin{smallmatrix} 1\\0 \end{smallmatrix}\bigr]$ and $\bigl[\begin{smallmatrix} 0\\0 \end{smallmatrix}\bigr]_1$ is a systematic edge for $\boldsymbol{e}_1$ while edge connecting $\bigl[\begin{smallmatrix} 1\\0 \end{smallmatrix}\bigr]$ and $\bigl[\begin{smallmatrix} 1\\ \alpha^3 \end{smallmatrix}\bigr]$ is a non-systematic edge for $\boldsymbol{e}_2$, although it contains the systematic vertex $\bigl[\begin{smallmatrix} 1\\0 \end{smallmatrix}\bigr]$.

\begin{figure}[hbt]
    \centering
    \includegraphics[scale=1]{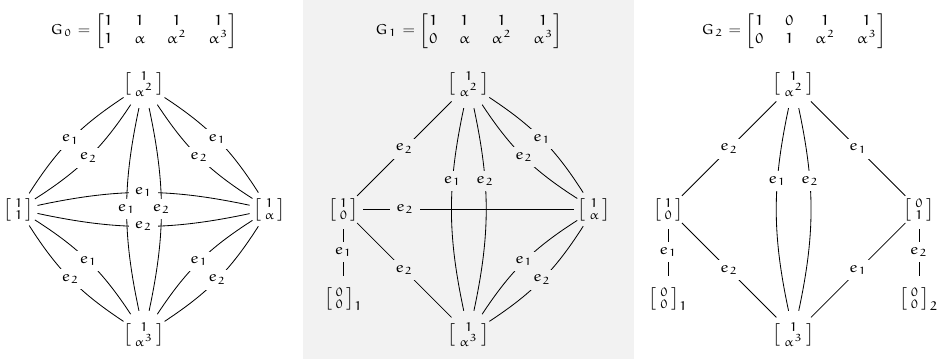}
    \caption{\(\boldsymbol{G}_i(4,2)\) matrices with their recovery hypergraphs \(\Gamma_i(4,2)\), \(i=0,1,2\).
    \label{fig:G42-RG}}
\end{figure}
Since \(\boldsymbol{G}_0\) is an MDS matrix without systematic columns, its recovery hypergraph \(\Gamma_0(n,k)\) is a \(k\)-uniform hypergraph composed exclusively of non-systematic edges of size $k$. Moreover, by Remark~\ref{remark:MDS}, for \(i = 1,\dots,k\), each systematic edge in \(\Gamma_i(n,k)\) has size 2, while all non-systematic edges have size \(k\). Hence, \(\Gamma_i(n,k)\) forms a \((k,2)\)-quasi-uniform hypergraph. Moreover, as every set of \(k\) columns is linearly independent and capable of recovering any basis vector \(\boldsymbol{e}_j\), the vertices representing these columns form \(k\) parallel non-systematic edges, each uniquely labeled by distinct \(\boldsymbol{e}_j\).
For any hypergraph \(\Gamma\) and subset \(I \subseteq [k]\), the \emph{\(I\)-induced subgraph} of \(\Gamma\) includes exactly those edges labeled by basis vectors \(\boldsymbol{e}_j\) with \(j \, \in \, I\), together with all vertices incident to these edges.
\subsection{Fractional Matching and Service Polytopes for Recovery Graphs}

A \emph{fractional matching} in a hypergraph $(V,E)$ is a vector $\boldsymbol{w} \, \in \, \mathbb{R}_{\ge 0}^{|E|}$ whose components $w_\epsilon$, for $\epsilon \, \in \, E$, are nonnegative and satisfy
\[
\sum_{\epsilon \,\ni\, v} w_\epsilon \;\le\; 1
\quad \text{for each vertex }v \, \in \, V.
\]
This constraint ensures that the total weight assigned to hyperedges incident on any given vertex does not exceed 1. In the context of our recovery hypergraph, this directly corresponds to constraint~\eqref{eq:SRR_2}, which enforces that the total request load routed through any server does not exceed its capacity (normalized to \( \mu = 1 \)). The set of all fractional matchings in $\Gamma_G=(V,E)$ forms a polytope in $\mathbb{R}_{\ge 0}^{|E|}$, called the \emph{fractional matching polytope}, denoted $\mathrm{FMP}(\Gamma_G)$. It can be written as
\[
\mathrm{FMP}(\Gamma_G) 
\;=\; 
\bigl\{\,
\boldsymbol{w} \, \in \, \mathbb{R}^{|E|} 
:\; \boldsymbol{A}\boldsymbol{w} \,\le\, \boldsymbol{1}_{|V|},\;\boldsymbol{w} \,\ge\, \boldsymbol{0}
\bigr\},
\]
where \( \boldsymbol{A} \) is the incidence matrix of \( \Gamma_G \), which is a binary matrix of size \( |V| \times |E| \), defined so that
\[
A_{v,\epsilon} = 
\begin{cases}
1, & \text{if } v \in \epsilon,\\[2pt]
0, & \text{otherwise},
\end{cases}
\]
i.e., each column of \( \boldsymbol{A} \) indicates the vertices contained in the corresponding hyperedge. Moreover, $\boldsymbol{1}$ is the all-one vector of length $|V|$, and $\boldsymbol{0}$ is the all-zero vector of length $|E|$.

\begin{example}
\label{ex:matching}
    Figure~\ref{fig:2m1s} illustrates two distinct fractional matchings on the recovery graph \( \Gamma_2(4, 2) \), associated with the generator matrix \( \boldsymbol{G}_2(4, 2) \), whose incidence matrix is
\[
\boldsymbol{A} = 
\left[
\begin{array}{cccccccc}
1 & 1 & 1 & 0 & 0 & 0 & 0 & 0 \\
1 & 0 & 0 & 0 & 0 & 0 & 0 & 0 \\
0 & 1 & 0 & 1 & 1 & 1 & 0 & 0 \\
0 & 0 & 1 & 1 & 1 & 0 & 1 & 0 \\
0 & 0 & 0 & 0 & 0 & 1 & 1 & 1 \\
0 & 0 & 0 & 0 & 0 & 0 & 0 & 1 \\
\end{array}
\right],
% \small\
\]
where the six vertices and eight hyperedges are indexed from top to bottom and left to right, respectively. The two (fractional) matchings
\begin{align*}
\boldsymbol{w}_1 = [1,\ 0,\ 0,\ 0.5,\ 0,\ 0,\ 0, \ 0.75], \quad \text{and} \qquad
\boldsymbol{w}_2 = [0.5,\ 0,\ 0.5,\ 0.5,\ 0,\ 0.5,\ 0,\ 0.25]
\end{align*}
induce the same \emph{servable} request vector \( \boldsymbol{\lambda} = (1.5,\ 0.75) \). That is, for each \( j = 1,2 \), the sum of weights \( w_\epsilon \) over all hyperedges \( \epsilon \) labeled with basis vector \( \boldsymbol{e}_j \) equals \( \lambda_j \). These two matchings thus represent distinct valid allocations for the same servable request vector \( \boldsymbol{\lambda} \).
\begin{figure}[hbt]
\centering
\includegraphics[scale=0.97]{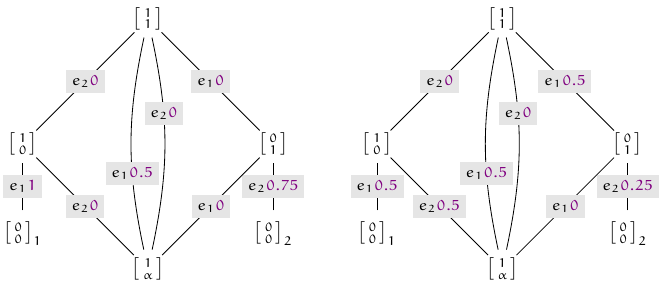}
\caption{Two different matchings in $\Gamma_2(4,2)$ having the same service vector $\boldsymbol{\lambda} = (\lambda_1,\lambda_2)=(1.5,\,0.75)$. The weight assigned to each edge is shown to its right in purple. Note that in the left figure, the weight of the edge connecting vertices \(\bigl[\begin{smallmatrix} 0\\1 \end{smallmatrix}\bigr]\) and \(\bigl[\begin{smallmatrix} 0\\0 \end{smallmatrix}\bigr]_2\) can be set up to 1.}
\label{fig:2m1s}
\end{figure}
By definition, each matching ensures that the total request assigned to each node does not exceed its capacity $\mu = 1$. Therefore, $\boldsymbol{\lambda} \, \in \, \srr$. 
\end{example}
A request vector $\boldsymbol{\lambda}$ lies within the SRR of a storage system employing a code $\boldsymbol{G}$ if and only if there exists a fractional matching $\boldsymbol{w}$ such that the sum of weights on edges labeled $\boldsymbol{e}_j$ equals $\lambda_j$, as established in the following result.
\begin{definition}
Consider a system employing an \([n, k]\) code with generator matrix \( \boldsymbol{G} \) and uniform server availability, that is, \( \boldsymbol{\mu} = \boldsymbol{1}_n \). The \emph{service rate} for \( \boldsymbol{e}_j \) under a fractional matching \( \boldsymbol{w} \), denoted \( \lambda_j(\boldsymbol{w}) \), is the sum of the weights \( w_\epsilon \) over all hyperedges \( \epsilon \) labeled by \( \boldsymbol{e}_j \). The corresponding \emph{servable request vector} is given by
\[
\boldsymbol{\lambda}(\boldsymbol{w}) = (\lambda_1(\boldsymbol{w}), \dots, \lambda_k(\boldsymbol{w})).
\]
Each \( \boldsymbol{w} \) defines a valid allocation for \( \boldsymbol{\lambda} \).
\end{definition}

\begin{proposition}[\hspace{-0.3mm}\cite{Service:journals/tit/AktasJKKS21}, Proposition 1]\label{prop:demand_to_matching}
\( \boldsymbol{\lambda} = (\lambda_1, \dots, \lambda_k) \) is achievable if and only if there exists a fractional matching \( \boldsymbol{w} \) in recovery graph \( \Gamma_G \) such that
\[
\boldsymbol{\lambda} = \boldsymbol{\lambda}(\boldsymbol{w}).
\]
\end{proposition}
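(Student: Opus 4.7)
The plan is to exhibit a bijection between the valid allocations $\{\lambda_{i,j}\}$ satisfying \eqref{eq:SRR_1}--\eqref{eq:SRR_3} and the fractional matchings $\boldsymbol{w}$ on $\Gamma_G$ that realize $\boldsymbol{\lambda}$ as their label-sum vector. The key fact I would use is that, by construction of the recovery hypergraph, the hyperedges carrying label $\boldsymbol{e}_i$ are in natural one-to-one correspondence with the recovery sets $R_{i,1},\dots,R_{i,t_i}$ for object $o_i$: denote by $\epsilon(i,j)$ the edge labeled $\boldsymbol{e}_i$ whose vertex set consists of the columns of $R_{i,j}$ (plus, when $|R_{i,j}|=1$, the auxiliary zero vertex $(\boldsymbol{0}_k)_j$). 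The map $(i,j)\mapsto\epsilon(i,j)$ is then a bijection onto the edge set $E$, and this single correspondence drives both directions.

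For the forward direction, starting from a valid allocation I would set $w_{\epsilon(i,j)}=\lambda_{i,j}$. Nonnegativity \eqref{eq:SRR_3} gives $\boldsymbol{w}\ge\boldsymbol{0}$; the capacity constraint \eqref{eq:SRR_2} at server $l$ becomes exactly the matching inequality $\sum_{\epsilon\ni v_l} w_\epsilon\le 1$ at the vertex corresponding to column $l$, since the assumption $\boldsymbol{\mu}=\boldsymbol{1}_n$ gives $\mu_l=1$; and summing the weights over all edges carrying label $\boldsymbol{e}_i$ reproduces $\lambda_i$ by \eqref{eq:SRR_1}, yielding $\boldsymbol{\lambda}(\boldsymbol{w})=\boldsymbol{\lambda}$. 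For the converse, reading the same bijection backward, i.e., setting $\lambda_{i,j}:=w_{\epsilon(i,j)}$ from a given fractional matching with $\boldsymbol{\lambda}(\boldsymbol{w})=\boldsymbol{\lambda}$, turns the vertex constraints of $\mathrm{FMP}(\Gamma_G)$ into \eqref{eq:SRR_2} and the label-sum identity into \eqref{eq:SRR_1}, thereby certifying achievability.

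The main obstacle I anticipate is the bookkeeping around the auxiliary zero vertices $(\boldsymbol{0}_k)_j$ that are introduced in $\Gamma_i(n,k)$ for systematic edges. I would need to verify that they do not impose any extra restriction beyond those in \eqref{eq:SRR_2}: each such vertex is incident to a single systematic edge $\epsilon$, so its matching inequality reads $w_\epsilon\le 1$, which is already implied by the capacity constraint at the systematic server $j$ sharing that edge (whose inequality contains $w_\epsilon$ among nonnegative terms summing to at most $1$). A second point requiring care is that the recovery hypergraph admits parallel edges, since the same $k$-subset of columns can simultaneously be a recovery set for several distinct basis vectors, per Remark~\ref{remark:MDS}. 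However, because each edge carries a unique label and thus an independent weight in $\boldsymbol{w}$, these parallel copies contribute to different coordinates of $\boldsymbol{\lambda}(\boldsymbol{w})$ and do not collide, which is precisely what makes the proposed map a bijection.
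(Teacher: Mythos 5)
Your proposal is correct: the paper itself does not reprove this proposition (it is imported from the cited reference), and your bijective translation between valid allocations $\{\lambda_{i,j}\}$ and edge weights $w_{\epsilon(i,j)}$, with the observations that the auxiliary zero vertices add no binding constraints and that parallel edges are distinguished by their labels, is exactly the standard argument underlying the cited result. Nothing is missing; the identification of \eqref{eq:SRR_2} with the vertex inequalities under $\boldsymbol{\mu}=\boldsymbol{1}_n$ and of \eqref{eq:SRR_1} with the label-sum condition is precisely what the paper relies on.
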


Proposition~\ref{prop:demand_to_matching} shows that $\srr$ is the image of $\mathrm{FMP}(\Gamma_G)$ under a linear map from $\mathbb{R}^{|E|}$ to $\mathbb{R}^k$. Specifically, let $\boldsymbol{S}$ be the $|E|\times k$ matrix whose entries are
\[
[\boldsymbol{S}]_{\epsilon,j} 
\;=\;
\begin{cases}
1, & \text{if edge $\epsilon$ is labeled by $\boldsymbol{e}_j$,}\\[6pt]
0, & \text{otherwise}.
\end{cases}
\]
In Example~\ref{ex:matching}, the matrix $\boldsymbol{S} \in \{0, \, 1\}^{8\times 2}$ associated with $\Gamma_2$ is given by
\[
\boldsymbol{S} = \left[
\begin{array}{cccccccc}
1 & 0 & 0 & 1 & 0 & 1 & 1 & 0 \\
0 & 1 & 1 & 0 & 1 & 0 & 0 & 1 \\
\end{array}
\right]^{\top},
\]
Since each edge is associated with exactly one label, each row of $\boldsymbol{S}$ contains exactly one entry equal to 1, with all other entries equal to 0. For a matching $\boldsymbol{w}\, \in \, \mathrm{FMP}(\Gamma_G)$, the resulting service vector is $\boldsymbol{\lambda}(\boldsymbol{w}) = \boldsymbol{w}\,\boldsymbol{S}$. If $\boldsymbol{\lambda} \, \in \, \srr$, there may be multiple $\boldsymbol{w}\, \in \, \mathrm{FMP}(\Gamma_G)$ such that $\boldsymbol{\lambda} = \boldsymbol{w}\,\boldsymbol{S}$.

The service rate region $\srr$, therefore, also forms a polytope in $\mathbb{R}_{\ge 0}^k$, and we use the terms \emph{service polytope} and \emph{service rate region} interchangeably. We denote the SRR corresponding to $\boldsymbol{G}_i(n,k)$ by $\mathcal{S}_i(n,k)$ or simply $\mathcal{S}_i$. 
\begin{remark}\label{rm:feasibility_check}
From the above argument, Proposition~\ref{prop:demand_to_matching} can also be stated as follows:
\[
\boldsymbol{\lambda} = (\lambda_1, \dots, \lambda_k) \text{ is achievable if and only if the following linear program is feasible:}
\]
\begin{align}\label{eq:SRR_reformulation}
\boldsymbol{\lambda} & = \boldsymbol{w}\boldsymbol{S},\\
\text{s.t. }\quad \boldsymbol{A}\boldsymbol{w} & \le \boldsymbol{1}_n, \notag \\
\boldsymbol{w} & \ge \boldsymbol{0}_n.\notag
\end{align}
When $\boldsymbol{\lambda}$ is achievable, the problem can generally have infinitely many solutions. Such solutions lie in a space whose dimension is $|E| - \mathrm{rank}(\boldsymbol{S}) \ge |E| - k$.
\end{remark}
The \emph{size} of a matching $\boldsymbol{w}$ is $\sum\limits_{\epsilon \, \in \, E} w_{\epsilon}$. The \emph{(fractional) matching number} $\nu^*(\Gamma_G)$ is the maximum possible matching size:
\[
\nu^*(\Gamma_G) \;=\; \max_{\boldsymbol{w} \, \in \, \mathrm{FMP}(\Gamma_G)}\sum_{\epsilon \,\, \in \,\, E} w_{\epsilon}.
\]
Intuitively, the size of a matching quantifies the total weight distributed across all hyperedges, and the matching number captures the largest such total achievable under vertex capacity constraints. For instance, the size of two matchings given in Example~\ref{ex:matching} are both $1.5 + 0.75 = 2.25$. A \emph{fractional vertex cover} of $(V,E)$ is a vector $\boldsymbol{d} \, \in \, \mathbb{R}^{|V|}$ with nonnegative components $d_v$ such that $\sum_{v \, \in \, \epsilon} d_v \ge 1$ for every edge $\epsilon \, \in \, E$. Its \emph{size} is $\sum\limits_{v \, \in \, V} d_v$. The \emph{(fractional) vertex cover number} $\tau^*(\Gamma_G)$ is the minimum size of any fractional vertex cover:
\[
\tau^*(\Gamma_G)
\;=\;
\min_{\boldsymbol{d} \,\ge\, \boldsymbol{0}}
\Bigl\{\,
\sum_{v \, \in \, V} d_v 
:\; \boldsymbol{A}^{\top}\boldsymbol{d} \,\ge\, \boldsymbol{1}_{|E|}
\Bigr\}.
\]
Intuitively, a fractional vertex cover assigns weights to vertices such that each hyperedge receives a total weight of at least one. Its size reflects the total vertex weight required to meet this condition, and the vertex cover number denotes the minimum such total over all valid covers. Finding $\nu^*(\Gamma_G)$ is a linear program whose dual problem finds the minimum fractional vertex cover $\tau^*(\Gamma_G)$. By the theory of linear programming duality~\cite{Linear_programming}, we establish the following properties (see Appendix~\ref{app:duality} for a detailed derivation):
\begin{itemize}
    \item \textbf{Weak Duality:} The size of \emph{any} fractional vertex cover is an upper bound on the size of \emph{any} fractional matching.
    \item \textbf{Strong Duality:} The optimal values coincide, i.e., $\nu^*(\Gamma_{G}) = \tau^*(\Gamma_{G})$.
\end{itemize}

% By the strong Duality theorem,
% \(
% \nu^*(\Gamma_G) \;=\; \tau^*(\Gamma_G).
% \)

Framing the SRR problem within graph theory allows us to leverage known results from the literature. The former is seen in Proposition~\ref{prop:demand_to_matching}, while the latter follows from the next result, which can easily be proved using Proposition~\ref{prop:demand_to_matching}.

\begin{proposition}\label{prop:sum_bound}
For any vector $\boldsymbol{\lambda} = (\lambda_1, \lambda_2, \dots, \lambda_k)$ in the service region $\srr$,
\begin{equation}\label{eq:Duality_bound}
\tau^*(\Gamma_G) \;\ge\; \sum_{i=1}^k \lambda_i.
\end{equation}
\end{proposition}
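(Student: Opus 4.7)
The plan is to chain together Proposition~\ref{prop:demand_to_matching}, the observation that each edge carries exactly one label, and LP duality between the fractional matching and fractional vertex cover polytopes.

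First, I would fix an achievable $\boldsymbol{\lambda} \in \srr$. By Proposition~\ref{prop:demand_to_matching}, there exists a fractional matching $\boldsymbol{w} \in \mathrm{FMP}(\Gamma_G)$ such that, for each $j \in [k]$,
\[
\lambda_j \;=\; \sum_{\epsilon \text{ labeled } \boldsymbol{e}_j} w_\epsilon.
\]
The crucial structural fact, already noted in the excerpt right after the definition of $\boldsymbol{S}$, is that every hyperedge $\epsilon \in E$ carries exactly one label. Consequently, the collection of edge-label classes partitions $E$, and summing the above identity over $j$ gives
\[
\sum_{j=1}^{k} \lambda_j \;=\; \sum_{j=1}^{k} \sum_{\epsilon \text{ labeled } \boldsymbol{e}_j} w_\epsilon \;=\; \sum_{\epsilon \in E} w_\epsilon,
\]
which is precisely the \emph{size} of the matching $\boldsymbol{w}$.

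Next, since $\boldsymbol{w}$ is a feasible fractional matching, its size is bounded above by the matching number $\nu^*(\Gamma_G)$. Applying the strong LP duality theorem, stated just before the proposition, $\nu^*(\Gamma_G) = \tau^*(\Gamma_G)$, yields
\[
\sum_{j=1}^{k} \lambda_j \;=\; \sum_{\epsilon \in E} w_\epsilon \;\le\; \nu^*(\Gamma_G) \;=\; \tau^*(\Gamma_G),
\]
which is the required inequality.

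There is no real obstacle here: the proof is a three-line chain, and the only subtlety worth flagging is that the equality $\sum_j \lambda_j = \sum_\epsilon w_\epsilon$ relies on each edge having a unique label, so that distinct label classes are disjoint and jointly exhaust $E$. If a reader were not already convinced of this, one could alternatively argue using the matrix $\boldsymbol{S}$: because each row of $\boldsymbol{S}$ has exactly one nonzero entry (equal to $1$), the column sums of $\boldsymbol{S}$ partition the entries, and $\boldsymbol{1}_k^\top \boldsymbol{S}^\top = \boldsymbol{1}_{|E|}^\top$, so $\sum_j \lambda_j = \boldsymbol{1}_k^\top (\boldsymbol{w}\boldsymbol{S})^\top = \boldsymbol{1}_{|E|}^\top \boldsymbol{w}^\top$.
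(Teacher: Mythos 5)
Your proposal is correct and follows essentially the same route as the paper: invoke Proposition~\ref{prop:demand_to_matching} to obtain a matching $\boldsymbol{w}$ with $\boldsymbol{\lambda}=\boldsymbol{\lambda}(\boldsymbol{w})$, use the fact that each edge carries exactly one label (equivalently, each row of $\boldsymbol{S}$ has a single unit entry) to equate $\sum_j \lambda_j$ with the matching size, and then bound by $\nu^*(\Gamma_G)=\tau^*(\Gamma_G)$ via LP duality. No gaps; this matches the paper's argument step for step.
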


\begin{proof}
Since $\boldsymbol{\lambda} \, \in \, \srr$, by Proposition~\ref{prop:demand_to_matching} there exists a matching $\boldsymbol{w}$ such that $\boldsymbol{\lambda} = \boldsymbol{\lambda}(\boldsymbol{w}) = \boldsymbol{w}\boldsymbol{S}$. Denote by $\boldsymbol{1}_k$ the length-$k$ column vectors of all ones. Then,
\begin{align}
\sum_{j=1}^k \lambda_j(\boldsymbol{w})
\; =\; \boldsymbol{\lambda}(\boldsymbol{w}) \cdot \boldsymbol{1}_k
\;=\; \boldsymbol{w}(\boldsymbol{S} \cdot \boldsymbol{1}_k)
\;=\; \boldsymbol{w} \cdot \boldsymbol{1}_k
\;=\; \sum_{\epsilon \, \in \, E} w_{\epsilon}
\; \le\;
\max_{\boldsymbol{w} \, \in \, \mathrm{FMP}(\Gamma_G)} \sum_{\epsilon \, \in \, E} w_{\epsilon}
\;=\;
\nu^*(\Gamma_G)
\;=\;
\tau^*(\Gamma_G).
\end{align}
(The third equality follows from the fact that each row of $\boldsymbol{S}$ has exactly one entry equal to 1, with all others equal to 0. The second-to-last equality follows from the definition of $\nu^*(\Gamma_G)$, and the last equality comes from the LP duality.)
\end{proof}
\begin{example}
\label{ex:vc_bound}
In Example~\ref{ex:matching}, assigning weight \( c = 1 \) to each of the three vertices 
\( \bigl[\begin{smallmatrix} 1\\0 \end{smallmatrix}\bigr] \), 
\( \bigl[\begin{smallmatrix} 0\\1 \end{smallmatrix}\bigr] \), and 
\( \bigl[\begin{smallmatrix} 1\\1 \end{smallmatrix}\bigr] \), 
and setting \( c = 0 \) for all remaining vertices yields a vertex cover of size 3. This implies that for any service vector \( \boldsymbol{\lambda} = (\lambda_1, \lambda_2) \in \srr \), the following inequality must hold:
\[
3 \ge \tau^*(\Gamma_2) \ge \lambda_1 + \lambda_2.
\]
It is also easy to verify that this is actually the smallest vertex cover, i.e., $3 = \tau^*(\Gamma_2) = \nu^*(\Gamma_2)$.
\end{example}

\begin{remark}
This result generalizes and tightens Proposition 2 in~\cite{Service:journals/tit/AktasJKKS21}, where the authors proved that the size of any integral vertex cover is an upper bound on the achievable heterogeneous sum rate. It establishes a fundamental bound on the sum of request rates within the service polytope of any linearly coded storage system. Consequently, the size of any (fractional) vertex cover serves as an upper bound on the sum rate for any achievable vector \( \boldsymbol{\lambda} \). In particular, for large hypergraphs with specific structures, such as when the number of vertices \( n \) is significantly smaller than the number of edges \( \binom{n}{k} \) or when most edges have a large cardinality, this lemma provides a simple yet tight estimate of the achievable sum rate.
\end{remark}

\begin{remark}\label{rm:sum_bound_subgraph}
A straightforward manipulation of Proposition~\ref{prop:sum_bound} shows that it also applies to any subgraph of $\Gamma_G$. Specifically, if $I$ is any subset of $[k]$ and $\Gamma'$ is the $I$-induced subgraph of $\Gamma$, then
\[
\tau^*(\Gamma') \;\ge\; \sum_{i \, \in \, I} \lambda_i.
\]
\end{remark}
\subsection*{Main Notations Summary}
\begin{itemize}
\setlength{\itemindent}{.25in}
\setlength{\itemsep}{1pt}
  \setlength{\parskip}{0pt}
  \setlength{\parsep}{0pt}
  \item[$k$ --] number of data objects.
  \item[$n$ --] number of servers (storage nodes).
  \item[$i$ --] number of systematic columns (or systematic nodes) in the MDS generator matrix.
  \item[$\lambda_j$ --] the rate of requests for the $j$-th object.
  \item[$\boldsymbol{\lambda}$ --] vector of request rates $\boldsymbol{\lambda} = (\lambda_1,\lambda_2,\dots ,\lambda_n)$
  \item[$\boldsymbol{G}$ --] Generator matrix.
  \item [$\mathcal{S}(\boldsymbol{G})$ --] Service rate region (SRR) of the system using code $\boldsymbol{G}$.
  \item [$R$ --] A recovery set for a data object (basis vector) in the code.
  \item [$\Gamma_G$ --] Recovery hypergraph associated with the generator matrix $\boldsymbol{G}$.
  \item [$\boldsymbol{w}$, $\boldsymbol{d}$ --] Fractional matching weight vector and fractional vertex cover weight vector, respectively.
  % \item [$\boldsymbol{d}$ --] Fractional vertex cover weight vector.
  \item [$\nu^*(\Gamma_G)$ --] Fractional matching number the hypergraph $\Gamma_G$.
  \item [$\tau^*(\Gamma_G)$ --] Fractional vertex cover number of the hypergraph $\Gamma_G$. Duality theorem gives $\nu^*(\Gamma_G) = \tau^*(\Gamma_G)$.
  \item [$\mathbb{R}^k_+$ --] positive orthant of $\mathbb{R}^k$.
\end{itemize}
\section{Two Bounding Simplices}\label{Sec:Bounding_simplices}
The simplices of interest in \(\mathbb{R}^k\) are convex hulls of the origin and the \(k\) axis-aligned points \(c_j\boldsymbol{e}_j\), for \(j = 1, \dots, k\), where each \(c_j > 0\). When all \(c_j\) are equal to a common constant \(c \ge 0\), the resulting simplex corresponds to the set of points \((x_1, x_2, \dots, x_k) \, \in \, \mathbb{R}^k_+\) satisfying $\sum_{j=1}^{k} x_j \le c.$

This section introduces two simplices related to any SRR: the \emph{Maximal matching simplex} and the \emph{Maximal achievable simplex}. The former is the smallest simplex that contains the SRR, while the latter is the largest simplex fully contained by it. These simplices generally help localize the SRR.
In Section~\ref{Sec:Inclusion}, we will use them to derive an inclusion theorem for \(\mathcal{S}_i(n,k)\) that shows the role of systematic nodes.

\subsection{Maximal Matching Simplex}
For each \(\mathcal{S}_i(n,k)\), we seek an outer simplex containing the service rate region, defined as 
\[
\sum_{j=1}^{k}\lambda_j \;\le\; c, ~~ \lambda_j\ge 0.
\]
where \(c\) is a constant specific to \(\mathcal{S}_i(n,k)\). Proposition~\ref{prop:sum_bound} provides exactly this bound. Recall from Eq.~\eqref{eq:Duality_bound} that for any generator matrix \(\boldsymbol{G}_i(n,k)\) and any service vector \(\boldsymbol{\lambda} \, \in \, \mathcal{S}_i(n,k)\), the SRR lies within the positive orthant simplex defined by
\[
\sum_{j=1}^{k}\lambda_j \;\le\; \tau^*(\Gamma_i(n,k) = \nu^*(\Gamma_i(n,k)).
\]
We refer to this region as the \emph{maximal matching simplex}. We next determine the value of \(\tau^*(\Gamma_i(n,k))\) by constructing explicit matchings and vertex covers in two distinct cases.

\paragraph{Case 1: \(n - i \ge k\)}

Consider the following vertex cover of \(\Gamma_i(n,k)\):
\[
d_v \;=\;
\begin{cases}
	1, & \text{if \(v\) is a systematic vertex}, \\[4pt]
	{1}/{k}, & \text{otherwise}.
\end{cases}
\]
We show that this is a valid vertex cover:
\begin{itemize}
\item For a \emph{systematic} edge \(\epsilon\), there is exactly one vertex \(v\) with \(d_v = 1\).
\item For a \emph{non-systematic} edge \(\epsilon\), Lemma~\ref{remark:MDS} guarantees that \(\epsilon\) has size \(k\). Hence, \(\sum\limits_{v \, \in \, \epsilon} d_v = k \cdot \tfrac{1}{k} = 1\).
\end{itemize}
Thus, the total weight assigned to the vertices incident to each edge is at least 1, so this indeed forms a vertex cover of size \(i + \dfrac{n-i}{k}\). On the other hand, we construct a matching by assigning:
\[
w_{\epsilon} \;=\;
\begin{cases}
	1, & \text{if edge \(\epsilon\) is systematic},\\[3pt]
	0, & \text{if \(\epsilon\) is non-systematic and contains a systematic vertex},\\[3pt]
	\bigl(k\,\binom{n-i-1}{k-1}\bigr)^{-1}, & \text{otherwise}.
\end{cases}
\]
We check that this matching is valid as follows:
\begin{itemize}
\item A systematic vertex \(v\) is incident to exactly one systematic edge with weight 1.
\item A non-systematic vertex \(v\) belongs to \(\binom{n-i-1}{k-1}\) different size-\(k\) sets of columns that do not include any systematic column. Each such set corresponds to \(k\) parallel edges, each assigned weight \(\bigl(k\,\binom{n-i-1}{k-1}\bigr)^{-1}\). Summing over all these edges gives
\[
\sum_{\epsilon \ni v} w_{\epsilon}
\;=\;
k \,\binom{n-i-1}{k-1}
\,\times\,
\left(k\,\binom{n-i-1}{k-1}\right)^{-1}
\;=\; 1.
\]
\end{itemize}
Hence, the total weight assigned to the edges incident to each vertex $v$ does not exceed 1. Counting the edges:
\begin{itemize}
\item There are \(i\) systematic vertices, each of which is incident to one systematic edge.
\item Excluding these, the remaining $n-i$ non-systematic vertices form a \(k\)-uniform hypergraph. Because \(n-i \ge k\), there are \(k \,\binom{n-i}{k}\) edges of size $k$ in this graph.
\end{itemize}
The total matching size is therefore
\[
i 
\;+\; 
\left(k\,\binom{n-i-1}{k-1}\right)^{-1}
\,\cdot\, 
\left(k\,\binom{n-i}{k}\right)
\;=\;
i + \frac{n-i}{k}.
\]
By LP duality, since there is a vertex cover and a matching with the same size, 
\begin{equation}\label{eq:matching_number}
\nu^*(\Gamma_i(n, k)) 
\;=\; 
\tau^*(\Gamma_i(n, k))  
\;=\; 
i + \frac{n-i}{k}.
\end{equation}
Thus, in this case, the Maximal matching simplex is
\begin{align}
\begin{cases}\label{eq:maximal_matching_1}
	\boldsymbol{0} \le \boldsymbol{\lambda}, \\[5pt]
	\sum_{j=1}^{k}\lambda_j \;\le\; i + \dfrac{n-i}{k}.
\end{cases}
\end{align}
Note that when \(i=0\), \(\Gamma_0(n,k)\) reduces to the \(k\)-uniform hypergraph of $n$ vertices, where 
\begin{align}\label{eq:matching_non_systematic}
\nu^*(\Gamma_0(n, k)) = \tau^*(\Gamma_0(n, k)) = {n}/{k}.
\end{align}
\paragraph{Case 2: \(n - i < k\)}

In this regime, consider the following vertex cover in \(\Gamma_i(n,k)\):
\[
d_v \;=\;
\begin{cases}
	1, & \text{if \(v\) is systematic},\\
	0, & \text{otherwise}.
\end{cases}
\]
Because \(n - i < k\), any non-systematic edge has size \(k\) and must include at least one systematic vertex, so all edges are covered. The total weight is \(i\). Next, define a matching:
\[
w_{\epsilon} \;=\;
\begin{cases}
	1, & \text{if \(\epsilon\) is systematic},\\
	0, & \text{otherwise}.
\end{cases}
\]
Since the systematic edges do not overlap on any vertex, they form a valid matching of size \(i\). Therefore, by duality,
\begin{equation}\label{eq:matching_number_2}
\nu^*(\Gamma_i(n, k)) 
\;=\; 
\tau^*(\Gamma_i(n, k))  
\;=\; 
i.
\end{equation}
Hence, in this case, the Maximal matching simplex is
% \begin{equation}
\begin{align}
\begin{cases}\label{eq:maximal_matching_2}
	\boldsymbol{0} \le \boldsymbol{\lambda}, \\[3pt]
	\sum_{j=1}^{k}\lambda_j \;\le\; i.
\end{cases}
\end{align}
% \end{equation}

\subsection{Axes-Intercept Points and the Maximal Achievable Simplex}

% \begin{proposition}[\cite{ConvexAnalysis:books/Rockafellar70}]
%     For any set $\mathcal{A} = \{\boldsymbol{v}_1, \dots, \boldsymbol{v}_p\} \subset \mathbb{R}^k$, the convex hull of the set $\mathcal{A}$, denoted by $\textsf{conv}(\mathcal{A})$, consists of all convex combinations of the elements of $\mathcal{A}$, i.e., all vectors
% of the form $\sum\limits_{i=1}^p\gamma_i\boldsymbol{v}_i$, with $\gamma_i \ge 0, \sum\limits_{i=1}^p\gamma_i = 1$. 
% \end{proposition}
We now characterize a simplex contained in \(\srr\). The \emph{axes-intercept} vertices of this simplex coincide with those of \(\srr\). For each \(j \, \in \, [k]\), define
\[ 
\lambda_j^{\text{int}} \triangleq \max_{ \gamma\cdot\boldsymbol{e}_j \, \in \, \srr} \gamma. \] 
That is, \(\lambda_j^{\text{int}}\) is the maximum achievable demand for the
object \(j\) when all other demands are zero. 
For example, when $j=1$, we have \(\lambda_1^{\text{int}} = \max\{\, \gamma \mid (\gamma, 0, \dots, 0) \, \in \, \srr \}\). We claim that \(\lambda_j^{\text{int}} = \max\limits_{\boldsymbol{\lambda} \, \in \, \srr} \lambda_j \triangleq \lambda_j^{\max}\). To see that, consider $j=1$, wlog. Suppose, for contradiction, that there exists
\(\boldsymbol{\eta} = (\eta_1, \eta_2, \dots, \eta_k) \, \in \, \srr\) with \(\eta_1 > \lambda_1^{\text{int}}\). Consider the vector \(\boldsymbol{\eta}' = (\eta_1, 0, \dots, 0)\). Since \(\boldsymbol{\eta}\) satisfies constraints Eq.~\eqref{eq:SRR_1}–Eq.~\eqref{eq:SRR_3}, and these constraints remain valid when all coordinates but one are set to zero, it follows that \(\boldsymbol{\eta}' \, \in \, \srr\), contradicting the definition of \(\lambda_1^{\text{int}}\). Therefore, \(\lambda_1^{\text{int}} = \lambda_1^{\max}\), and similarly \(\lambda_j^{\text{int}} = \lambda_j^{\max}\) for all \(j \, \in \, [k]\).

% Let
% \(
% \lambda_j^{\text{max}} 
% \;=\; 
% \max\limits_{\boldsymbol{\lambda} \,\, \in \,\, \mathcal{S}_i} 
% \lambda_j.
% \)

% Practically, \( \lambda_{\ell}^{\max} \) represents the highest individual request rate for object \( o_\ell \) that the system can support in isolation. Characterizing these values is particularly relevant in distributed storage systems, where object demands or popularities are commonly skewed~\cite{SRR:journals/tit/AktasJKKS21}.

Geometrically, \( \lambda_j^{\text{max}} \) represents the intercept of the service polytope with the axis defined by \( \boldsymbol{e}_j \). Practically, $\lambda_j^{\text{max}}$ quantifies the highest individual request rate for object \( o_j \) that the system can support in isolation, i.e., when requests for all other objects are set to zero. Characterizing these values is particularly relevant in distributed storage systems, where object demands or popularities are commonly skewed~\cite{Service:journals/tit/AktasJKKS21}. We therefore call it the maximum achievable demand for $\lambda_j$. Define the simplex \( \mathcal{A} \) as:
\[
\mathcal{A} = \textsf{conv}\Bigl(\bigl\{\boldsymbol{0}_k, \lambda_1^{\text{max}}\boldsymbol{e}_1, \lambda_2^{\text{max}}\boldsymbol{e}_2, \dots, \lambda_k^{\text{max}}\boldsymbol{e}_k\bigr\}\Bigr),
\]
where \( \textsf{conv}(\mathcal{T}) \) denotes the convex hull of a set \( \mathcal{T} \), defined as \( \mathcal{T} = \{\boldsymbol{v}_1, \dots, \boldsymbol{v}_p\} \subset \mathbb{R}^k \). Specifically, \( \textsf{conv}(\mathcal{T}) \) consists of all convex combinations of the elements in \( \mathcal{T} \), that is, all vectors of the form
\[
\sum_{i=1}^p \gamma_i \boldsymbol{v}_i, \quad \text{where } \gamma_i \ge 0 \text{ and } \sum_{i=1}^p \gamma_i = 1,
\]
as described in~\cite{ConvexAnalysis:books/Rockafellar70}. In \cite[Lemma 1]{Service:conf/isit/KazemiKS20} it was shown that
$\srr$ is a non-empty, convex, closed, and bounded subset of $\mathbb{R}_{\ge 0}^k$. Thus \( \mathcal{A} \subseteq \mathcal{S}(\boldsymbol{G}) \). This implies that all points within the simplex \( \mathcal{A} \) are achievable, and we refer to it as the \textit{Maximal achievable simplex}. Therefore, characterizing these extreme points is of significant interest. The following theorem helps us to achieve this.

\begin{proposition}\label{prop:MaxAchievableSimplex}
For \(\mathcal{S}_i(n,k)\),
\[
\lambda_j^{\text{max}} 
\;=\;
\begin{cases}
	1 + \dfrac{n-1}{k}, & \text{if } j \le i \text{ and } n-1 \ge k,\\[6pt]
	1, & \text{if } j \le i \text{ and } n = k,\\[6pt]
	\dfrac{n}{k}, & \text{if } j > i.
\end{cases}
\]
\end{proposition}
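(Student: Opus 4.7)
The plan is to reduce each intercept $\lambda_j^{\max}$ to a fractional matching computation on a single-label sub-hypergraph, then apply LP duality to that sub-problem in each of the three cases. First, I would argue that $\lambda_j^{\max}=\nu^*(\Gamma')$, where $\Gamma'$ denotes the $\{j\}$-induced subgraph of $\Gamma_i(n,k)$: Remark~\ref{rm:sum_bound_subgraph} with $I=\{j\}$ gives $\lambda_j^{\max}\le\tau^*(\Gamma')=\nu^*(\Gamma')$, and conversely any fractional matching on $\Gamma'$ extends by zero weights on all remaining edges to a valid matching on $\Gamma_i(n,k)$ whose service vector is a scalar multiple of $\boldsymbol{e}_j$, so Proposition~\ref{prop:demand_to_matching} makes the point $\nu^*(\Gamma')\boldsymbol{e}_j$ achievable. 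It therefore suffices to exhibit, in each case, a matching and a vertex cover of $\Gamma'$ of equal size.

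For $j\le i$ with $n-1\ge k$, Remark~\ref{remark:MDS} shows that $\Gamma'$ consists of the single systematic edge $\{\boldsymbol{e}_j,(\boldsymbol{0}_k)_j\}$ together with the $\binom{n-1}{k}$ non-systematic edges formed by the size-$k$ subsets of the $n-1$ non-$j$ columns (a $k$-subset containing the column $\boldsymbol{e}_j$ fails minimality, since the singleton $\{\boldsymbol{e}_j\}$ already recovers). I would use the vertex cover that places weight $1$ on $\boldsymbol{e}_j$ and $1/k$ on each of the $n-1$ remaining column vertices, for total size $1+(n-1)/k$, and dually the matching that weights the systematic edge by $1$ and every non-systematic edge by $1/\binom{n-2}{k-1}$. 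Each non-$j$ column vertex lies in exactly $\binom{n-2}{k-1}$ non-systematic edges of $\Gamma'$, so its capacity is saturated, and the identity $\binom{n-1}{k}/\binom{n-2}{k-1}=(n-1)/k$ shows the matching also has size $1+(n-1)/k$, whence LP duality delivers $\nu^*(\Gamma')=1+(n-1)/k$. When $j\le i$ and $n=k$, there are no $k$-subsets of the $k-1$ non-$j$ columns, so $\Gamma'$ reduces to the single systematic edge and $\nu^*=1$ trivially.

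For $j>i$, no column of $\boldsymbol{G}_i$ equals $\boldsymbol{e}_j$, and the MDS property of $\boldsymbol{M}$ ensures that every $k$-subset of the $n$ columns is a minimal recovery set for $\boldsymbol{e}_j$, so $\Gamma'$ is the $k$-uniform hypergraph on $n$ vertices with $\binom{n}{k}$ edges. The uniform cover assigning weight $1/k$ to each vertex has size $n/k$, and the uniform matching assigning weight $1/\binom{n-1}{k-1}$ to each edge has size $\binom{n}{k}/\binom{n-1}{k-1}=n/k$, so $\nu^*=n/k$. The only substantive obstacle is the combinatorial characterization of recovery sets used above — ruling out non-systematic recovery sets that contain the column $\boldsymbol{e}_j$ when $j\le i$, and verifying minimality of every $k$-subset when $j>i$ — and both reduce immediately to the MDS property of $\boldsymbol{M}$ as in the discussion around Remark~\ref{remark:MDS}; the rest is routine LP duality together with the binomial identity $\binom{m}{k}/\binom{m-1}{k-1}=m/k$.
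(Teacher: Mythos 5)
Your proposal is correct and follows essentially the same route as the paper: reduce $\lambda_j^{\max}$ to the fractional matching number of the single-label subgraph $\Gamma_i^j(n,k)$ and evaluate it by LP duality, splitting into the systematic-edge piece plus a $k$-uniform hypergraph. The only difference is cosmetic—you re-derive the matching numbers with explicit matchings and covers (and spell out the achievability direction via zero-extension of the matching), whereas the paper cites its earlier formulas \eqref{eq:matching_number}, \eqref{eq:matching_number_2}, and \eqref{eq:matching_non_systematic}, which were obtained by the same constructions.
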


\begin{proof}
Consider the axis-intercept vector \(\lambda_j^{\text{max}}\boldsymbol{e}_j\), which has all components equal to zero except \(\lambda_j\). Since all other rates are zero, in \(\Gamma_i(n,k)\), we remove all edges with labels different from $\boldsymbol{e}_j$. Let \(\Gamma_i^j(n,k)\) be the subgraph of \(\Gamma_i(n,k)\) that contains only edges labeled \(\boldsymbol{e}_j\). Then by Remark~\ref{rm:sum_bound_subgraph},
\[
\lambda_j^{\text{max}}
\;=\;
\nu^*\bigl(\Gamma_i^j(n,k)\bigr).
\]

\emph{Case 1: \(j > i\).}
All edges labeled \(\boldsymbol{e}_j\) are non-systematic, so \(\Gamma_i^j(n,k)\) is a \(k\)-uniform hypergraph on $n$ vertices. By Eq.~\eqref{eq:matching_non_systematic},
\(
\lambda_j^{\text{max}}
\;=\;
{n}/{k}.
\)

\emph{Case 2: \(j \le i\).}
For \(\boldsymbol{e}_j\), the subgraph \(\Gamma_i^j(n,k)\) contains:
\begin{itemize}
  \item One systematic edge of size 1 (the column \(\boldsymbol{e}_j\) itself).
  \item Non-systematic edges of size \(k\) formed by choosing any \(k\)-subset of columns excluding \(\boldsymbol{e}_j\).
\end{itemize}
Hence, \(\Gamma_i^j(n,k)\) splits into two disjoint hypergraphs: a 2-uniform hypergraph with 2 nodes (the systematic column and its associated zero vector) and a \(k\)-uniform hypergraph with \((n-1)\) nodes (all remaining columns). Therefore by Eq.~\eqref{eq:matching_number} and Eq.~\eqref{eq:matching_number_2},
\[
\nu^*(\Gamma_i^j) 
\;=\; 
\begin{cases}
1, & \text{if } n = k,\\
1 + (n-1)/k, & \text{if } n-1 \ge k,
\end{cases}
\]
matching the desired piecewise form.
\end{proof}
\begin{example}
\label{ex:srr}
    Figure~\ref{fig:G42-SRR} (a) illustrates the service region \(\mathcal{S}_2(4,2)\) along with its two bounding simplices. Some points on the Maximal Matching Simplex lie within \(\mathcal{S}_2(4,2)\), while others lie outside, reflecting that the true service region is strictly between these two simplices. See also Example~\ref{ex:vc_bound}.
\end{example}
From the derived expressions of these two simplices, we see that the Maximal Matching Simplex is always within a factor of $k$ of the Maximal Achievable Simplex. The worst case (when the two are furthest apart) occurs with $\mathcal{S}_k(k, k)$ (systematic encoding of $k$ data objects into $k$ servers), where the Maximal Matching Simplex is given by $\sum_{j=1}^k \lambda_j \le k$ and the Maximal Achievable Simplex is given by $\sum_{j=1}^k \lambda_j \le 1$, which is $k$-times smaller. Figure~\ref{fig:G42-SRR} (b) depicts this scenario with $\mathcal{S}_2(2,2)$.

\begin{figure}[ht]
    \centering    
    \subfloat[]{%
    \resizebox{0.42\textwidth}{!}{
        \begin{tikzpicture}[scale=1]
            % Second TikZ figure code here
            \draw[thick,->] (0,0) -- (4.8,0) node[below] {$\lambda_b$};
            \draw[thick,->] (0,0) -- (0,4.8) node[left] {$\lambda_a$};
            \draw[dotted,thick] (0,3.33) -- (3.33,0);
            \draw[dotted,thick] (0,4) -- (4,0);
            \draw[thick,red] (0,3.33) -- (1.33,2.66) -- (2.66,1.33) -- (3.33,0);
            \node at (0,3.33) [left] {\textcolor{red}{2.5}};
            \node at (3.33,0) [below] {\textcolor{red}{2.5}};
            \node at (0,4) [left] {\textcolor{red}{3}};
            \node at (4,0) [below] {\textcolor{red}{3}};
            \draw[-] (2.93,0.4) -- (4, 1) node[right, fill=gray!20, inner sep=2pt] {maximal achievable simplex};
            \draw[-] (0.7,3.3) -- (2.5,4) node[right, fill=gray!20, inner sep=2pt] {maximal matching simplex};
            \draw[-] (2.9,0.85) -- (3.2,2.3) node[right, fill=gray!20, inner sep=2pt] {service polytope};
        \end{tikzpicture}%
    }
    }
    \hspace{0.5cm}
    \subfloat[]{%
       \resizebox{0.42\textwidth}{!}{
       \begin{tikzpicture}[scale=1]
            % First TikZ figure code here
            \draw[thick,->] (0,0) -- (4.7,0) node[below] {$\lambda_b$};
            \draw[thick,->] (0,0) -- (0,4.7) node[left] {$\lambda_a$};
            \draw[dotted,thick] (0,2) -- (2,0);
            \draw[dotted,thick] (0,4) -- (4,0);
            \draw[thick,red] (0,2) -- (2,2) -- (2,0);
            \node at (0,2) [left] {\textcolor{red}{1}};
            \node at (2,0) [below] {\textcolor{red}{1}};
            \node at (0,4) [left] {\textcolor{red}{2}};
            \node at (4,0) [below] {\textcolor{red}{2}};
            \draw[-] (1.3,0.7) -- (4, 1) node[right, fill=gray!20, inner sep=2pt] {maximal achievable simplex};
            \draw[-] (1.2,2.8) -- (2.5,4) node[right, fill=gray!20, inner sep=2pt] {maximal matching simplex};
            \draw[-] (2,1.5) -- (3.2,2.3) node[right, fill=gray!20, inner sep=2pt] {service polytope};
        \end{tikzpicture}%
    }
    }
    \caption{(Left) The service region \(\mathcal{S}_2(4,2)\) (in red) together with its Maximal matching and Maximal achievable simplices, where $\lambda^{\text{int}}_1 = \lambda^{\text{int}}_2 = 2.5$. Some points on the Maximal matching simplex are indeed achievable. (Right) $\mathcal{S}_2(2, 2)$ and its two bounding simplices, where $\lambda^{\text{int}}_1 = \lambda^{\text{int}}_2 = 1$. In this case, these two bounding simplices are ``further apart" from each other than in case (a).}
    \label{fig:G42-SRR}
\end{figure}
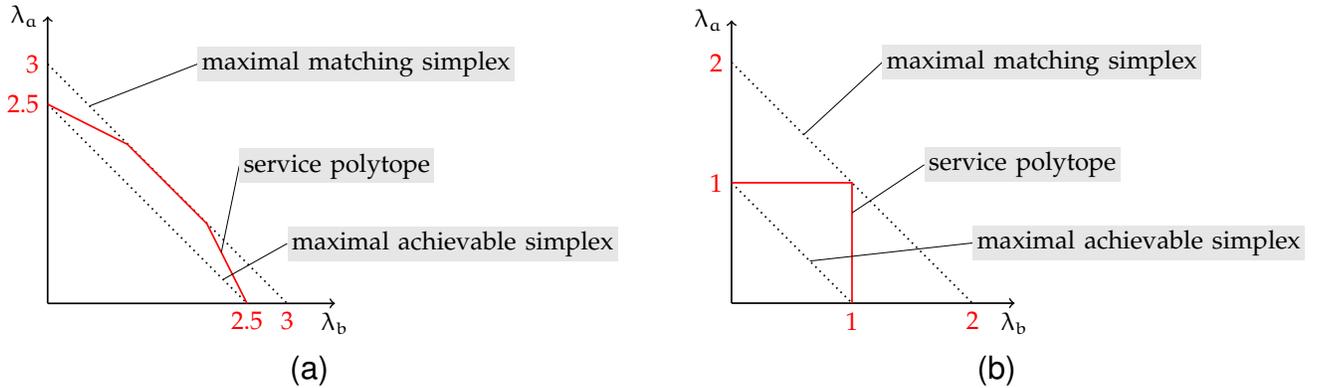

We observe that for MDS codes, the two bounding simplices of the SRR can differ by a factor proportional to the number of objects $k$. In contrast, for Reed–Muller codes, the bounds differ by a constant factor of at most 2, independent of the code parameters~\cite{SRR_RM:conf/isit,Service:preprint/arxiv/LySL25}.

When \(i = 0\), \(\Gamma_0(n,k)\) is a \(k\)-uniform hypergraph, so these two simplices coincide exactly with the SRR polytope. Concretely, (cf.~\cite{Service:journals/tit/AktasJKKS21})
\[
 \mathcal{S}_0(n,k) = \begin{cases}
	\boldsymbol{0} \le \boldsymbol{\lambda},\\
	\sum_{j=1}^k \lambda_j \;\le\; {n}/{k}.
\end{cases}
\]
Moreover, from Eq.~\eqref{eq:maximal_matching_1}, Eq.~\eqref{eq:maximal_matching_2} and Proposition~\ref{prop:MaxAchievableSimplex}, we observe that for fixed $n$ and $k$, decreasing $i$ reduces the gap between the two bounding simplices, enabling a more precise characterization of the service polytope. Similarly, for fixed $i$ and $k$, increasing $n$ produces the same effect, improving the accuracy with which the SRR can be located.

\section{An Inclusion Theorem for \(\mathcal{S}_i(n, k)\), \(0 \le i \le k\)}\label{Sec:Inclusion}
In this section, we show that increasing the number of systematic nodes (or servers) strictly enlarges the service rate region, offering meaningful guidance for practical system design. While the formal proof relies on graph-theoretic arguments, the underlying intuition is straightforward: a coded server can only participate in data recovery by forming recovery sets with \(k-1\) other servers, whereas a systematic (uncoded) server can serve its associated data object independently by itself. As a result, systematic nodes reduce the overall capacity required for recovery. The result is formalized in the following theorem.
\begin{theorem}\label{theo:polytopeInclusion}
For each \(k \ge 2\) and \(n \ge k\)
\begin{equation}
\mathcal{S}_0 \;\subsetneq\; \mathcal{S}_1 \;\subsetneq\; \mathcal{S}_2 
\;\subsetneq\; \dots \;\subsetneq\; \mathcal{S}_{k-1} \;\subsetneq\; \mathcal{S}_k.
\label{eq:inclusion}
\end{equation}
In other words, having more systematic columns strictly enlarges the SRR polytopes.
\end{theorem}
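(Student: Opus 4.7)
The plan is to establish the chain~\eqref{eq:inclusion} by proving $\mathcal{S}_i \subseteq \mathcal{S}_{i+1}$ locally for each $0 \le i \le k-1$, and then separately exhibiting a point in $\mathcal{S}_{i+1}\setminus \mathcal{S}_i$. The structural fact to exploit is that $\boldsymbol{G}_i$ and $\boldsymbol{G}_{i+1}$ agree on every column except the $(i+1)$-th, which is the parity column $p_{i+1}$ in $\boldsymbol{G}_i$ and the systematic vector $\boldsymbol{e}_{i+1}$ in $\boldsymbol{G}_{i+1}$. Via Proposition~\ref{prop:demand_to_matching}, proving the inclusion reduces to transforming any fractional matching $\boldsymbol{w}$ in $\Gamma_i(n,k)$ into a fractional matching $\boldsymbol{w}'$ in $\Gamma_{i+1}(n,k)$ with the same service vector.

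I would define $\boldsymbol{w}'$ through a three-part rewriting of the edges of $\Gamma_i$. First, every edge of $\Gamma_i$ that is not incident to the vertex $p_{i+1}$ survives verbatim in $\Gamma_{i+1}$ and is assigned the same weight. Second, every non-systematic edge $\epsilon$ that contains $p_{i+1}$ and carries a label $\boldsymbol{e}_j$ with $j \neq i+1$ is replaced by $(\epsilon\setminus\{p_{i+1}\})\cup\{\boldsymbol{e}_{i+1}\}$; the MDS property of $\boldsymbol{G}_{i+1}$ forces this $k$-subset to be linearly independent, and since $\boldsymbol{e}_j$ is absent from $\epsilon$ and $j\neq i+1$ it is also absent from the new subset, so this is a legitimate non-systematic edge of $\Gamma_{i+1}$ with the same label. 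Third, every edge incident to $p_{i+1}$ whose label is $\boldsymbol{e}_{i+1}$ has its weight transferred onto the new systematic edge $\bigl\{\boldsymbol{e}_{i+1},(\boldsymbol{0}_k)_{i+1}\bigr\}$ of $\Gamma_{i+1}$. Summing over all three rules preserves the per-label total, which gives $\boldsymbol{\lambda}(\boldsymbol{w}') = \boldsymbol{\lambda}(\boldsymbol{w})$.

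The main check is that $\boldsymbol{w}'$ respects every vertex capacity in $\Gamma_{i+1}$. For any column vertex $v \neq \boldsymbol{e}_{i+1}$, the new incident total is bounded by the old incident total at $v$ in $\Gamma_i$, because rules~1 and~2 leave $v$ on the corresponding edges while rule~3 strips $v$ from the edges it modifies. At $\boldsymbol{e}_{i+1}$ the only edges carrying nonzero weight are the outputs of rules~2 and~3, and together they absorb exactly the weight that was previously incident to $p_{i+1}$ in $\Gamma_i$, which is $\le 1$. The new zero vertex $(\boldsymbol{0}_k)_{i+1}$ meets only the new systematic edge, whose weight is a subsum of the same quantity and hence also $\le 1$. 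The older zero vertices $(\boldsymbol{0}_k)_j$ with $j\le i$ lie on systematic edges untouched by the rewriting, so their constraints transfer trivially.

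For strict inclusion I would invoke Proposition~\ref{prop:MaxAchievableSimplex}: the axis intercept along $\boldsymbol{e}_{i+1}$ equals $n/k$ in $\mathcal{S}_i$ (since $i+1>i$) but jumps to $1+(n-1)/k$ in $\mathcal{S}_{i+1}$ whenever $n-1\ge k$, an increase of $1-1/k>0$ for $k\ge 2$. Thus $\bigl(1+(n-1)/k\bigr)\boldsymbol{e}_{i+1}$ belongs to $\mathcal{S}_{i+1}\setminus\mathcal{S}_i$, completing the strict step. I expect the most delicate part of the argument to be the capacity accounting at $\boldsymbol{e}_{i+1}$: what makes the vertex bound inherited \emph{for free} is that rules~2 and~3 together \emph{exactly} re-route the load from $p_{i+1}$ onto $\boldsymbol{e}_{i+1}$ and $(\boldsymbol{0}_k)_{i+1}$, with no leakage onto any other vertex, and this accounting is the step that deserves the most careful write-up.
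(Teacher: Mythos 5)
Your proposal is correct and follows essentially the same route as the paper: the inclusion $\mathcal{S}_i \subseteq \mathcal{S}_{i+1}$ is obtained by substituting the parity column $p_{i+1}$ with the systematic column $\boldsymbol{e}_{i+1}$ in every recovery set (with the $\boldsymbol{e}_{i+1}$-labeled load rerouted to the new systematic edge), and strictness comes from the axis-intercept jump $n/k \to 1+(n-1)/k$ of Proposition~\ref{prop:MaxAchievableSimplex}. Your write-up is simply a more explicit, fractional-matching-level version of the paper's substitution argument, including the vertex-capacity accounting at $\boldsymbol{e}_{i+1}$ and $(\boldsymbol{0}_k)_{i+1}$ that the paper leaves implicit.
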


\begin{proof}
From the definition of MDS matrices, any set of \(k\) columns excluding $\boldsymbol{e}_j$ in \(\boldsymbol{G}_i(n,k)\) can serve as a recovery set for each basis vector \(\boldsymbol{e}_j\). In the corresponding recovery hypergraph \(\Gamma_i(n,k)\), the number of hyperedges labeled with \(\boldsymbol{e}_j\) is
\[
\begin{cases}
1 + \binom{n-1}{k}, & \text{if \(\boldsymbol{G}_i(n,k)\) contains the systematic column \(\boldsymbol{e}_j\), i.e., if \(i \ge j\)},\\
\binom{n}{k}, & \text{if \(\boldsymbol{G}_i(n,k)\) does not contain \(\boldsymbol{e}_j\), i.e.\ \(i < j\)}.
\end{cases}
\]
Additionally, in each recovery hypergraph \(\Gamma_G\), any systematic recovery set is dedicated to exactly one basis vector (i.e., one object), whereas $k$-column recovery sets can recover any basis vector.

\noindent
\textbf{Comparing \(\boldsymbol{G}_i\) and \(\boldsymbol{G}_{i+1}\).}
By construction in Section~\ref{Sec:Problem_Formulate}, \(\boldsymbol{G}_i\) and \(\boldsymbol{G}_{i+1}\) differ in exactly one column: \(\boldsymbol{G}_{i+1}\) is obtained by replacing the \((i+1)\)-th column in \(\boldsymbol{G}_i\) (i.e., parity column $p_{i+1}$) with the systematic column \(\boldsymbol{e}_{i+1}\). All other columns remain the same, and any systematic column in \(\boldsymbol{G}_i\) still appears in \(\boldsymbol{G}_{i+1}\).

\noindent
\textbf{Case Analysis on \(\boldsymbol{e}_j\).}
\begin{itemize}
    \item \textbf{Case 1:} \(j > i\). Then there is no systematic recovery set for \(\boldsymbol{e}_j\) in \(\boldsymbol{G}_i\). There might be a systematic recovery set for \(\boldsymbol{e}_j\) in \(\boldsymbol{G}_{i+1}\), specifically if \(j = i+1\). In \(\boldsymbol{G}_i\), recovery for \(\boldsymbol{e}_j\) is possible only via non-systematic edges of size \(k\). Whenever such a recovery set contains the replaced parity column $p_{i+1}$, we can form a corresponding recovery set in \(\boldsymbol{G}_{i+1}\) by substituting \(\boldsymbol{e}_{i+1}\) for $p_{i+1}$. Because \(\boldsymbol{G}_{i+1}\) is an MDS matrix, these new \(k\)-column sets are linearly independent and also recover \(\boldsymbol{e}_j\). Hence, any service rate achievable in \(\mathcal{S}_i\) is still feasible in \(\mathcal{S}_{i+1}\).
    \item \textbf{Case 2:} \(j \le i\). In this scenario, \(\boldsymbol{e}_j\) is already a systematic column in \(\boldsymbol{G}_i\), so it remains in \(\boldsymbol{G}_{i+1}\). Proceeding similarly as in Case 1, all service configurations in \(\mathcal{S}_i\) remain feasible in \(\mathcal{S}_{i+1}\).
\end{itemize}
Thus, \(\mathcal{S}_i \subseteq \mathcal{S}_{i+1}\).

\medskip
\noindent
\textbf{Strict Inclusion.}
To see that \(\mathcal{S}_i\) is \emph{strictly} contained in \(\mathcal{S}_{i+1}\), consider the new systematic column \(\boldsymbol{e}_{i+1}\), which is present in \(\boldsymbol{G}_{i+1}\) but not in \(\boldsymbol{G}_i\). From Proposition~\ref{prop:MaxAchievableSimplex}, the maximum single-basis service rate for \(\boldsymbol{e}_{i+1}\) when it is systematic is
\[
\lambda_{i+1}^{\max}(\boldsymbol{G}_{i+1})
\;=\;
1 + \frac{n-1}{k}
\;>\;
\frac{n}{k}
\;=\;
\lambda_{i+1}^{\max}(\boldsymbol{G}_i),\,\text{for all } k \ge 2.
\]
Hence, the single-axis service vector 
\(
(0,\dots,0,\,1+(n-1)/{k},\,0,\dots,0)
\)
is in \(\mathcal{S}_{i+1}\) but not in \(\mathcal{S}_i\). Therefore, 
\(\mathcal{S}_i \subsetneq \mathcal{S}_{i+1}\).

\end{proof}

\begin{figure}[hbt]
    \centering
    \includegraphics[scale=0.33]{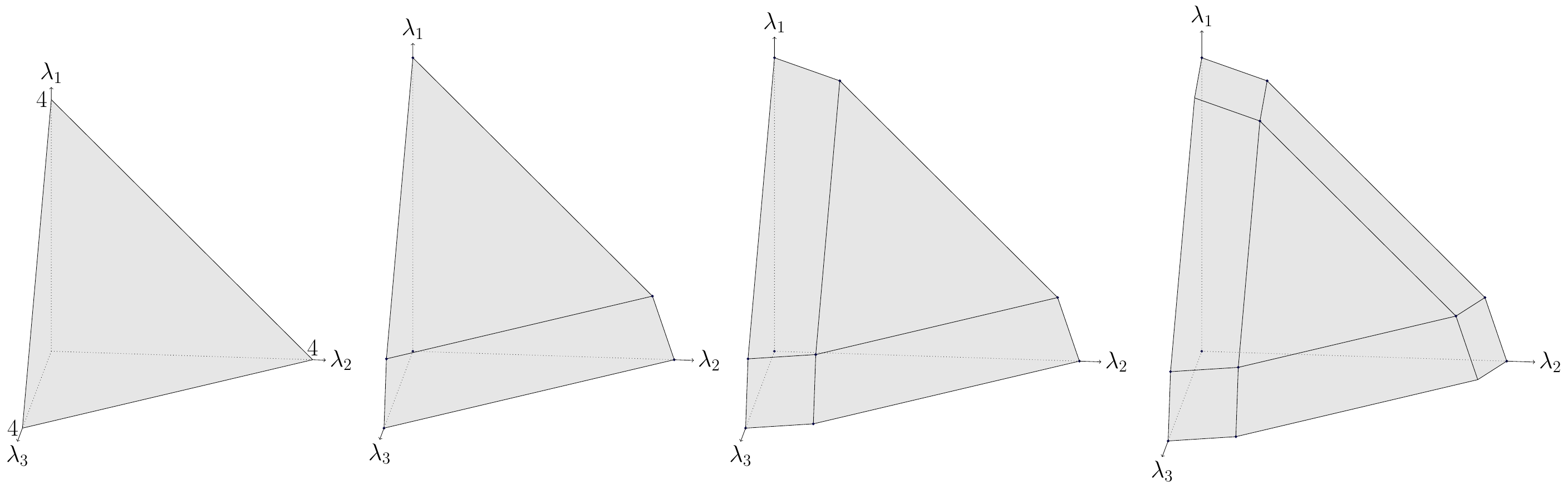}
    \caption{Rate polytopes \(\mathcal{S}_i(12,3)\) for \(i=0,1,2,3\). As \(i\) increases, the SRR region strictly expands.}
    \label{fig:G123_RateRegion}
\end{figure}

Figure~\ref{fig:G123_RateRegion} demonstrates this increasing inclusion for cases \(n=12\) and \(k=3\). We observe that increasing the number of systematic columns expands the service region in storage systems employing MDS codes with the same number of servers. The list of linear constraints that characterize these regions are derived in Section~\ref{Sec:SRR_polytopes}.

% \hnote{The following paragraph is newly added}

% \hnew{We note that while Theorem 1 establishes the superiority of fully systematic MDS codes ($\mathcal{S}_k$) over non-systematic counterparts regarding total service capacity, this does not render the study of non-systematic codes unnecessary. Such codes remain essential in scenarios where systematic storage is not an option, particularly when security or privacy requirements prohibit the storage of raw data.}
\section{Greedy Matching Allocation}\label{Sec:Greedy_Matching}
This section introduces an efficient allocation scheme called \emph{Greedy matching}, inspired by the classical Greedy algorithm. The key idea is to serve each data request using its associated systematic server as much as possible, since systematic servers store uncoded copies and can independently serve their corresponding data objects without collaboration. We prove that for any system employing an MDS code with generator matrix \( \boldsymbol{G} \), and any service rate \( \boldsymbol{\lambda} \in \mathcal{S}(\boldsymbol{G}) \), there exists a matching \( \boldsymbol{w} \) such that \( \boldsymbol{\lambda} = \boldsymbol{w}\boldsymbol{S} \), and \( \boldsymbol{w} \) can be realized through Greedy matching, as formally defined below. This allocation strategy plays a central role in characterizing the service regions of MDS codes and offers practical insights for efficient request allocation and load balancing in distributed systems.
\begin{definition}[Greedy Matching for MDS-coded Systems]
\label{def:GreedyMatching}
Consider any vector \(\boldsymbol{\lambda} = (\lambda_1, \ldots, \lambda_k) \, \in \, \mathcal{S}_i\). The Greedy matching scheme allocates as much of each request \(\lambda_j\) as possible to its \emph{systematic} recovery set (i.e., systematic server), while any remaining portion is assigned to \emph{non-systematic} recovery sets. %Concretely, for each request \(\lambda_j\), \(j \, \in \, [i]\), define
% \[
% \lambda_j^{\textsf{s}} 
% \;=\;
% \min\{\lambda_j,\,1\}
% \quad\text{and}\quad
% \lambda_j^{\textsf{c}} 
% \;=\;
% \lambda_j - \lambda_j^{\textsf{s}},
% \]
% where \(\lambda_j^{\textsf{s}}\) and \(\lambda_j^{\textsf{c}}\) denote, respectively, the fraction of \(\lambda_j\) served by the systematic (``\(\textsf{s}\)'') and non-systematic (``\(\textsf{c}\)'') servers. 
Concretely, for each request \(\lambda_j\), \(j \, \in \, [i]\), define
as \(\lambda_j^{\textsf{s}}\) and \(\lambda_j^{\textsf{c}}\) the fraction of \(\lambda_j\) served by the systematic (``\(\textsf{s}\)'') and non-systematic (``\(\textsf{c}\)'') servers, respectively. Then under Greedy matching,
\[
\lambda_j^{\textsf{s}} 
\;=\;
\min\{\lambda_j,\,1\}
\quad\text{and}\quad
\lambda_j^{\textsf{c}} 
\;=\;
\lambda_j - \lambda_j^{\textsf{s}}.
\]

Consequently, all remaining requests 
\[
\sum_{j=1}^{i} \lambda_j^{\textsf{c}} 
\;+\;
\sum_{j=i+1}^k \lambda_j
\;=\;
\sum_{j=1}^{i} \bigl(\lambda_j - 1\bigr)^+
\;+\;
\sum_{j=i+1}^k \lambda_j
\]
will be served by non-systematic nodes and any unused portion of the systematic nodes.
\end{definition}

\begin{theorem}
\label{thm:Greedy}
Any \(\boldsymbol{\lambda} \, \in \, \srr \) can be served via Greedy matching.
\end{theorem}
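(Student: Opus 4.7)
The plan is a linear-programming / weight-shifting argument. Since $\boldsymbol\lambda\in\mathcal S(\boldsymbol G)$, Proposition~\ref{prop:demand_to_matching} guarantees that
\[
\mathcal M(\boldsymbol\lambda)=\{\boldsymbol w\ge\boldsymbol 0 :\ \boldsymbol w\boldsymbol S=\boldsymbol\lambda,\ \boldsymbol A\boldsymbol w\le\boldsymbol 1\}
\]
is a non-empty compact polytope. Over $\mathcal M(\boldsymbol\lambda)$ I would maximize the linear functional $\Phi(\boldsymbol w)=\sum_{j=1}^{i} w_{\mathrm{sys}_j}$, where $\mathrm{sys}_j$ denotes the systematic edge labeled $\boldsymbol e_j$. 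Every feasible $\boldsymbol w$ obeys $w_{\mathrm{sys}_j}\le\min\{\lambda_j,1\}$, bounded both by the label-sum $\lambda_j$ and by the zero vertex $(\boldsymbol 0_k)_j$. Hence a maximizer $\boldsymbol w^*$ exists and satisfies $\Phi(\boldsymbol w^*)\le\sum_{j=1}^{i}\min\{\lambda_j,1\}$. The theorem is equivalent to showing equality, in which case $\boldsymbol w^*$ itself realizes the greedy allocation.

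Suppose for contradiction that $w^*_{\mathrm{sys}_{j_0}}<\min\{\lambda_{j_0},1\}$ for some $j_0\le i$. Then there exists a non-systematic edge $\epsilon_0$ labeled $\boldsymbol e_{j_0}$ with $w^*_{\epsilon_0}>0$; by Remark~\ref{remark:MDS} the $k$ columns of $\epsilon_0$ do not include the systematic column $\boldsymbol e_{j_0}$. The natural elementary move is to shift some $\delta>0$ from $\epsilon_0$ to $\mathrm{sys}_{j_0}$. All vertices of $\epsilon_0$ only see their load decrease; the zero vertex $(\boldsymbol 0_k)_{j_0}$ has slack because $w^*_{\mathrm{sys}_{j_0}}<1$; the only potentially binding constraint is at the systematic column $\boldsymbol e_{j_0}$ itself, whose current load is $L=\sum_{\epsilon\ni\boldsymbol e_{j_0}}w^*_\epsilon\le 1$. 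If $L<1$, the shift is feasible, preserves $\boldsymbol w^*\boldsymbol S=\boldsymbol\lambda$, and strictly raises $\Phi$, contradicting optimality.

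The main obstacle is the remaining case $L=1$, in which the systematic column $\boldsymbol e_{j_0}$ is saturated by non-systematic edges through it (each labeled by some $\boldsymbol e_{l_1}$ with $l_1\ne j_0$). My plan is a coupled swap: pick any such saturating edge $\epsilon_1$ with $w^*_{\epsilon_1}>0$, and a \emph{substitute} edge $\epsilon_1'$ with the same label $\boldsymbol e_{l_1}$ that avoids $\boldsymbol e_{j_0}$ and has slack at all its $k$ vertices; then simultaneously transfer $\delta$ from $\epsilon_1$ to $\epsilon_1'$ and $\delta$ from $\epsilon_0$ to $\mathrm{sys}_{j_0}$. This preserves $\boldsymbol w\boldsymbol S=\boldsymbol\lambda$ and increases $\Phi$. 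The existence of a usable $\epsilon_1'$ exploits the abundance of the MDS recovery hypergraph: for $n\ge k+1$ there are many $k$-subsets of columns that avoid $\boldsymbol e_{j_0}$ and still recover $\boldsymbol e_{l_1}$. When every such candidate is itself blocked by some saturated vertex, I would iterate the swap into an augmenting chain, with termination following from the facts that $\boldsymbol w^*$ can be taken as a basic feasible solution with finitely many tight vertex constraints and that each successful chain strictly shrinks the set of saturated vertices reachable from $\boldsymbol e_{j_0}$. The small boundary cases $n=k$ and $n=k+1$, where the non-systematic structure degenerates, are verified directly: for $n=k$ the greedy allocation is forced by the hypergraph, and for $n=k+1$ the rerouting of $\epsilon_1$ is immediate.
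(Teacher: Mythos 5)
Your overall framing is sound and close in spirit to the paper's argument: fix a matching realizing $\boldsymbol{\lambda}$, and improve the total systematic weight by an exchange step; your LP formulation (maximizing $\Phi$ over the fiber $\{\boldsymbol{w}\ge \boldsymbol{0}: \boldsymbol{w}\boldsymbol{S}=\boldsymbol{\lambda},\ \boldsymbol{A}\boldsymbol{w}\le\boldsymbol{1}\}$ and deriving a contradiction at a maximizer) is in fact a clean way to avoid the paper's informal ``repeat until saturated'' iteration, and your unsaturated case coincides with the paper's Case~1. However, there is a genuine gap exactly where the proof is hardest: the saturated case $L=1$. You require a substitute edge $\epsilon_1'$ with the same label $\boldsymbol{e}_{l_1}$, avoiding the column $\boldsymbol{e}_{j_0}$, and \emph{with slack at all $k$ of its vertices}. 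Such an edge need not exist: the interesting points $\boldsymbol{\lambda}$ lie on the boundary of $\srr$, where the optimal matching can saturate essentially every vertex (e.g.\ when $\sum_j\lambda_j$ equals the matching number), so ``abundance'' of $k$-subsets avoiding $\boldsymbol{e}_{j_0}$ gives you candidate edges but not free capacity on them. Your fallback -- iterating the swap into an augmenting chain -- is only sketched: the progress claim that ``each successful chain strictly shrinks the set of saturated vertices reachable from $\boldsymbol{e}_{j_0}$'' is neither proved nor obviously true (the swaps move load around and typically preserve which vertices are tight), and taking $\boldsymbol{w}^*$ to be a basic feasible solution does not by itself yield termination of such a chain argument. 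As written, the contradiction at the maximizer is therefore not established in the saturated case.

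The missing idea, which is precisely how the paper closes this case, is that you do not need any pre-existing slack: reduce the weights of $\epsilon_0$ (your edge serving $\lambda_{j_0}$, which avoids the column $\boldsymbol{e}_{j_0}$) and $\epsilon_1$ (the saturating edge through $\boldsymbol{e}_{j_0}$, labeled $\boldsymbol{e}_{l_1}$) \emph{simultaneously} by $\delta=\min\{w^*_{\epsilon_0},w^*_{\epsilon_1}\}>0$, and add $\delta$ to $\mathrm{sys}_{j_0}$. The load at the column $\boldsymbol{e}_{j_0}$ is unchanged ($-\delta$ from $\epsilon_1$, $+\delta$ from the systematic edge), the zero vertex has slack since $w^*_{\mathrm{sys}_{j_0}}<1$, and every vertex of $(\epsilon_0\cup\epsilon_1)\setminus\{\boldsymbol{e}_{j_0}\}$ has just gained at least $\delta$ of free capacity. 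Since $\epsilon_0\not\ni\boldsymbol{e}_{j_0}\in\epsilon_1$, this set has at least $k$ columns, and by the MDS property any $k$ of them can serve the displaced amount $\delta$ of label $\boldsymbol{e}_{l_1}$ (handling, if needed, the minor case where the chosen columns include the systematic column $\boldsymbol{e}_{l_1}$ by routing through its systematic edge, whose zero vertex then has slack). So the substitute edge can always be taken inside the freed set, no chain or reachability argument is required, and with this replacement your $\Phi$-maximization argument goes through and yields the theorem.
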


\begin{proof}
Let \(\boldsymbol{G} = \boldsymbol{G}_i(n,k)\) be fixed, and denote \(\Gamma_{G}\) as its associated recovery hypergraph. Because \(\boldsymbol{\lambda}\) is servable by \(\boldsymbol{G}\), by Proposition~\ref{prop:demand_to_matching} there exists a matching \(M_1\) in \(\Gamma_G\) such that the total weight of the hyperedges labeled by \(\boldsymbol{e}_l\) is \(\lambda_l\) for each \(l \, \in \, [k]\), i.e.,\ \(\boldsymbol{\lambda}(M_1) = M_1 \boldsymbol{S}\). Suppose $M_1$ is not greedy, i.e., there is some \(l \, \in \, [i]\) for which the portion of \(\lambda_l\) served by its systematic column is
\[
\lambda_l^{\mathsf{s}}(M_1) 
\;<\; 
\min\{1,\;\lambda_l\}.
\]
We will construct another matching \(M_2\) with
\begin{equation}
\label{eq:Greedycondition}
\begin{cases}
\lambda_l^{\mathsf{s}}(M_2) \;>\; \lambda_l^{\mathsf{s}}(M_1),\\
\lambda_j^{\mathsf{s}}(M_2) \;=\; \lambda_j^{\mathsf{s}}(M_1), & \forall \,\, j \neq l,\ j \, \in \, [i].
\end{cases}
\end{equation}
We consider two cases:

\paragraph{Case 1: The systematic server \(G^l = \boldsymbol{e}_l\) is not saturated.}
Because \(\lambda_l^{\mathsf{s}}(M_1) < \min\{1,\lambda_l\}\), some of \(\lambda_l\) is allocated to non-systematic edges instead. Since \(G^l\) is not fully utilized, we can shift a small portion \(\delta>0\) of \(\lambda_l\) onto \(G^l\). On the other hand, choose any hyperedge serving \(\lambda_l^{\mathsf{c}}\) and reduce its weight by \(\delta\). %Reassign this \(\delta\) amount to \(G^l\). 
Because \(G^l\) was not saturated, no node in the system becomes overused after the shift, preserving feasibility. Thus, we obtain a valid matching \(M_2\) with
\[
\lambda_l^{\mathsf{s}}(M_2) 
\;=\;
\lambda_l^{\mathsf{s}}(M_1) + \delta
\;>\;
\lambda_l^{\mathsf{s}}(M_1),
\]
and \(\lambda_j^{\mathsf{s}}(M_2) = \lambda_j^{\mathsf{s}}(M_1)\) for all \(j \ne l\).

\paragraph{Case 2: The systematic column \(G^l\) is saturated.}
Although \( G^l \) is fully occupied, the condition \( \lambda_l^{\mathsf{s}}(M_1) < \min\{1, \lambda_l\} \) implies that some portion of \( G^l \) is being used to recover a different vector \( \boldsymbol{e}_m \) with \( m \neq l \). Let \( \epsilon \) be a non-systematic edge containing \( G^l \) and labeled by \( \boldsymbol{e}_m \), and let \( w = \lambda_{m,\epsilon} \) be its weight. 

Similarly, the condition \( \lambda_l^{\mathsf{s}}(M_1) < \min\{1, \lambda_l\} \) also implies the existence of a non-systematic edge \( \epsilon_1 \) (i.e., \( \epsilon_1 \not\ni G^l \)) that contributes to the recovery of \( \lambda_l \). Let its weight be \( \lambda_{l,\epsilon_1} = w_1 \). Since \( \epsilon \) and \( \epsilon_1 \) differ in at least one node—one contains \( G^l \), while the other does not—it follows that \( |\epsilon \,\cup\, \epsilon_1| \geq k+1 \).

\begin{enumerate}
\item If \(w_1 \le w\), then we reduce by \(w_1\) the weight on both \(\epsilon\) and \(\epsilon_1\). We reassign that newly released \(w_1\) portion of \(G^l\) to \(\lambda_l\). Simultaneously, we use a newly freed \(k\)-subset of nodes (excluding \(G^l\), we can do that since $|\epsilon \,\cup\, \epsilon_1| \ge k+1$) to serve \(\boldsymbol{e}_m\). This yields a valid matching \(M_2\) satisfying Eq.~\eqref{eq:Greedycondition}.
\item If \(w_1 > w\), we release \(w\) amount from \(\epsilon\) and \(\epsilon_1\) instead, then apply the same argument.
\end{enumerate}

In either case, we can incrementally increase \(\lambda_l^{\mathsf{s}}(M)\) while maintaining a valid matching. Repeating this argument shows that we can continue shifting capacity until
\[
\lambda_l^{\mathsf{s}}(M_s) 
\;=\;
\min\{1,\;\lambda_l\}
\]
for some matching \(M_s\). Thus, all of \(\lambda_l\) is served systematically (if \(\lambda_l \le 1\)) or the systematic node is fully saturated by \(\lambda_l\) (if \(\lambda_l \ge 1\)).

Hence, any vector \(\boldsymbol{\lambda}\) achievable under any allocation can be served by a Greedy matching that prioritizes each systematic column up to \(\min\{\lambda_l,1\}\). This proves the claimed result: sending each request to its systematic node until it is served entirely or that node is saturated does not compromise feasibility. 
\end{proof}
Mathematically, Greedy matching allows us to pre-assign values to certain edge weights in the vector $\boldsymbol{w}$ before checking the feasibility of the linear program in Remark~\ref{rm:feasibility_check}. This pre-assignment \emph{reduces the degrees of freedom}, and therefore substantially reduces the feasibility check's complexity. Concretely, for a recovery hypergraph $\Gamma_i$ of $\boldsymbol{G}_i$ and a demand $\lambda_j$ such that $j \le i$ (i.e., $\boldsymbol{G}_i$ contains a systematic node for object $j$):
\begin{itemize}
    \item If $\lambda_j \ge 1$, then a weight $w = 1$ is assigned to the systematic edge labeled $\boldsymbol{e}_j$. Since the corresponding server is fully utilized, all other recovery sets that contain this node $\boldsymbol{e}_j$ are assigned weight 0 and thus can be removed.
    \item If $\lambda_j < 1$, then a weight $w = \lambda_j$ is assigned to the systematic edge labeled $\boldsymbol{e}_j$. Since the demand for object $j$ is fully served, all recovery sets labeled by $\boldsymbol{e}_j$ are assigned weight 0 and can be removed.
\end{itemize}
This process is demonstrated in Figure~\ref{fig:RG-42-SYS}. Furthermore, in Theorem~\ref{thm:greedy_optimal} in the final section, we show cases where Greedy matching is \emph{the only possible} rate-splitting scheme to achieve some particular request vectors in the SRR.

\begin{remark}
A special case of this theorem where $\boldsymbol{G}$ is systematic (i.e., $i = k$), was proven in~\cite{Service:journals/tit/AktasJKKS21}, Lemmas 1 and 2. The authors established that sending requests first to their corresponding systematic node is optimal. Based on this result, they devised a Water-filling algorithm for request rate splitting. Once the systematic servers are saturated, the requests are sent to the $k$ currently least-loaded servers, which collaboratively form a non-systematic recovery set.
\end{remark}

\begin{figure}[hbt]
    \centering
     \includegraphics[scale=0.85]{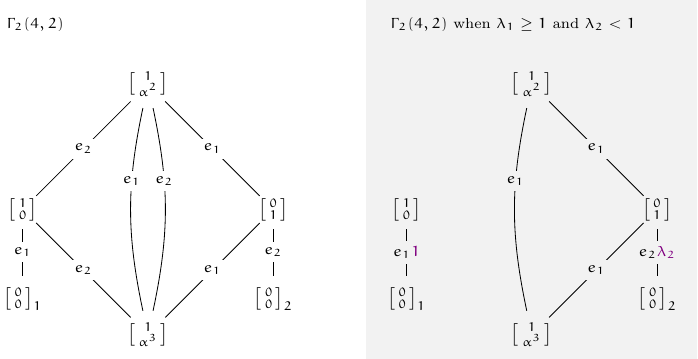}
    \caption{Recovery graph $\Gamma_2(4,2)$ (left) and Greedy matching on $\Gamma_2(4,2)$ when $\lambda_1 \ge 1$ and $\lambda_2 < 1$ (right). After the Greedy matching, the systematic edge labeled $\boldsymbol{e}_2$ is assigned a weight of $\lambda_2 < 1$, while all other edges labeled $\boldsymbol{e}_2$ are assigned a weight of 0 and thus removed. The systematic edge labeled $\boldsymbol{e}_1$ is assigned a weight of 1. The degrees of freedom, which is the number of edges without a weight, are reduced from 8 to 3.}
    \label{fig:RG-42-SYS}
\end{figure}
% \section{Charaterization of SRR Polytopes.}\label{Sec:SRR_polytopes}
\section{Characterizing the SRR Polytopes}\label{Sec:SRR_polytopes}

We now develop a linear characterization of the SRRs for MDS-coded systems, providing exact conditions under which data requests can be served. This characterization builds on the optimality of Greedy Matching allocation presented earlier, and is the first explicit formulation of the SRR for MDS codes. We begin by recalling that \(\mathcal{S}_0\), the SRR polytope of \(\boldsymbol{G}_0(n,k)\), is described by the following \(k + 1\) linear constraints:
\[
\begin{cases}
\sum_{j=1}^k \lambda_j \;\le\; {n}/{k}, & \quad (\text{1 constraint}), \\[4pt]
\lambda_j \;\ge\; 0, \quad \forall \,\, j \, \in \, [k], & \quad (k \text{ constraints}).
\end{cases}
\]
The service polytope of $\boldsymbol{G}_0(6,3)$, which is a simplex in the positive orthant of $\mathbb{R}^3$ given by $\sum_{i=1}^3\lambda_i \le 2$, is shown in Fig.~\ref{fig:SRR_simplex_63}.
\begin{figure}
    \centering
    \includegraphics[width=0.31\linewidth]{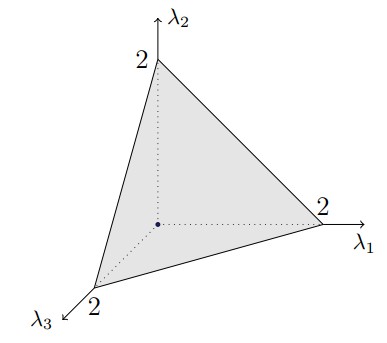}
    \caption{The service polytope of $\boldsymbol{G}_0(6,3)$.}
    \label{fig:SRR_simplex_63}
\end{figure}
\subsection{Bounding the SRR via Fractional Matchings}

We now generalize this description to the SRRs corresponding to other generator matrices \(\boldsymbol{G}_i(n,k)\). We begin by recalling a useful lemma from~\cite{DualCode_bound:conf/isit/AlfaranoRS22}:

% \begin{lemma}[\cite{DualCode_bound:conf/isit/AlfaranoRS22}, Lemma IV.1]\label{lemma:MatchingBound}
% For any fractional matching \(\boldsymbol{w} \, \in \, \mathbb{R}^{|E|}\) on a hypergraph \((V,E)\), one has
% \[
% |V| \;\ge\; \sum_{\epsilon \, \in \, E} w_{\epsilon}\,|\epsilon|.
% \]
% This bound is called \emph{capacity bound} as the left-hand side is the total capacity of all nodes. A matching that meets this bound with equality is called \emph{perfect}. 
% \end{lemma}

\begin{lemma}[\hspace{-0.3mm}\cite{DualCode_bound:conf/isit/AlfaranoRS22}, Lemma IV.1]\label{lemma:MatchingBound}
For any fractional matching \(\boldsymbol{w} \, \in \, \mathbb{R}^{|E|}\) on a hypergraph \((V, E)\), the following inequality holds:
\[
|V| \;\ge\; \sum_{\epsilon \, \in \, E} w_{\epsilon}\,|\epsilon|,
\]
where $|\epsilon|$ denotes the size (or cardinality) of the hyperedge $\epsilon$. This inequality is referred to as the \emph{capacity bound}, where the left-hand side represents the total capacity of all nodes. A matching that satisfies this bound with equality is called \emph{perfect}.
\end{lemma}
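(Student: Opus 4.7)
The plan is to prove the capacity bound by a straightforward double-counting (Fubini-type) argument on the incidence structure of the hypergraph. The fractional matching condition gives a local constraint at every vertex, namely $\sum_{\epsilon \ni v} w_\epsilon \le 1$ for each $v \in V$, and the right-hand side of the claimed bound aggregates these contributions globally across edges rather than vertices. So the natural move is to sum the local inequalities and rewrite the result as an edge sum.

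Concretely, I would first write down, for each vertex $v \in V$, the matching inequality
\[
\sum_{\epsilon \in E,\, \epsilon \ni v} w_\epsilon \;\le\; 1,
\]
and then sum these over all $v \in V$ to obtain
\[
\sum_{v \in V} \sum_{\epsilon \ni v} w_\epsilon \;\le\; |V|.
\]
Next I would swap the order of summation on the left. Since the pair $(v,\epsilon)$ with $v \in \epsilon$ is counted once from each side, the double sum equals $\sum_{\epsilon \in E} w_\epsilon \cdot |\{v \in V : v \in \epsilon\}| = \sum_{\epsilon \in E} w_\epsilon |\epsilon|$. Combining with the previous inequality yields $|V| \ge \sum_{\epsilon \in E} w_\epsilon |\epsilon|$, which is the stated capacity bound.

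There is essentially no hard step here: the proof is one line of double counting once one recognizes that $|\epsilon|$ naturally appears by exchanging the order of summation. The only minor subtlety worth remarking on is the equality case, which yields the definition of a \emph{perfect} fractional matching. Equality holds in the summed inequality precisely when $\sum_{\epsilon \ni v} w_\epsilon = 1$ for every $v \in V$, i.e., when every node is fully saturated by the matching. I would conclude the proof by briefly noting this observation, since it justifies the terminology introduced in the lemma and will be useful when applying the bound in the sequel.
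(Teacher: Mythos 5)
Your double-counting argument is correct and matches the standard proof (the one the cited reference and the paper's own sketch use): sum the per-vertex matching constraints over all vertices and exchange the order of summation via the handshaking identity to obtain $\sum_{\epsilon\in E} w_\epsilon|\epsilon| \le |V|$. The remark on the equality case correctly justifies the ``perfect'' terminology, so nothing is missing.
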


This result helps characterize the SRR polytopes of MDS codes described in Section~\ref{Sec:Inclusion}. Specifically, by leveraging the Greedy matching approach (Section~\ref{Sec:Greedy_Matching}), we show that the set of achievable demand vectors \(\boldsymbol{\lambda} \, \in \, \mathcal{S}_i\) must lie within certain linear bounds derived from Lemma~\ref{lemma:MatchingBound}. We then prove that these bounds can be attained in many cases, thereby fully characterizing the corresponding SRR polytopes.

\medskip
\noindent
\textbf{Greedy Allocation and Fractional-Matching Bound.}
Consider a coding scheme defined by \(\boldsymbol{G}_i(n,k)\), and let \(\boldsymbol{\lambda} = (\lambda_1, \dots, \lambda_k) \, \in \, \mathcal{S}_i\) be any achievable demand vector. Without loss of generality, assume \(\lambda_1 \ge \lambda_2 \ge \dots \ge \lambda_k\). Define $i_A$ as the unique number in $[i]$ so that
\[
\lambda_j \;\ge\; 1 \quad \text{for all } j \le i_A,
\quad\text{and}\quad 
\lambda_j \;<\; 1 \quad \text{for } i_A < j \le i.
\]
Under the Greedy matching scheme (see Section~\ref{Sec:Greedy_Matching}), we allocate:
\begin{enumerate}
    \item \(\lambda_j^{\mathsf{s}} = 1\) for each \(j \le i_A\),  
    \item \(\lambda_j^{\mathsf{s}} = \lambda_j\) for each \(i_A < j \le i\),  
    \item \(\lambda_j^{\mathsf{s}} = 0\) for each \(j > i\).  
\end{enumerate}
Hence, in the Greedy matching step, each systematic server associated with a basis vector $\boldsymbol{e}_j$, \( j \leq i_A \) is fully saturated, while servers associated with basis vectors $\boldsymbol{e}_j$, \( i_A < j \leq i \) are utilized up to \( \lambda_j \), where \( \lambda_j < 1 \). All remaining demands
\[
\sum_{j=1}^{i_A} (\lambda_j - 1)
\;+\;
\sum_{j=i+1}^k \lambda_j
\]
must be served by non-systematic edges of size \(k\). By Lemma~\ref{lemma:MatchingBound}, these edges impose the following constraint (note that the cardinality of all non-systematic recovery sets are $k$):
\begin{align*}
\left(i_A + \sum_{j=i_A+1}^{i}\lambda_j\right) + k\cdot \left(\sum_{j=1}^{i_A}(\lambda_j - 1)
\;+\;
\sum_{j=i+1}^{k}\lambda_j\right)
\;\;\le\;\;
n ,
\end{align*}
which can be written as:
\begin{align}\label{eq:lambdaRemain:define}
k\sum_{j=1}^{i_A}\lambda_j
\;+\;
\sum_{j=i_A+1}^{i}\lambda_j
\;+\;
k\sum_{j=i+1}^k \lambda_j
\;\le\;
n + k - i_A.
\end{align}
We now see that this bound can be \emph{tight}, as the next result shows.

\begin{theorem}[Subgraph Slicing]\label{theo:FWMatching}
If \(n - i \ge k\) then the bound in \eqref{eq:lambdaRemain:define} is achievable.
\end{theorem}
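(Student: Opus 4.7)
The strategy is to exhibit, for each admissible value of $i_A\in\{0,1,\dots,i\}$, a demand vector $\boldsymbol{\lambda}\in\mathcal{S}_i$ and an associated fractional matching $\boldsymbol{w}$ on $\Gamma_i(n,k)$ that attains equality in~\eqref{eq:lambdaRemain:define}. The construction is guided by the name of the theorem: I ``slice'' the recovery hypergraph into a systematic portion (handled by Greedy) and a non-systematic portion (handled by a perfect $k$-uniform fractional matching), then glue them back together.

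First, I would fix $i_A$ and choose $\boldsymbol{\lambda}$ of the symmetric form $\lambda_1=\dots=\lambda_{i_A}=1$ and $\lambda_{i_A+1}=\dots=\lambda_k=c$ for a single parameter $c\in[0,1]$. This shape automatically respects the ordering $\lambda_1\ge\dots\ge\lambda_k$ and has exactly $i_A$ systematic coordinates at value $1$, matching the $i_A$ used in the bound. Substituting into \eqref{eq:lambdaRemain:define} and enforcing equality produces a single linear equation in $c$; the hypothesis $n-i\ge k$ guarantees that its solution lies in $[0,1]$, so the candidate $\boldsymbol{\lambda}$ is well-defined.

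Next, I would construct $\boldsymbol{w}$ in two slices. On the systematic slice, the Greedy scheme of Section~\ref{Sec:Greedy_Matching} places weight $1$ on each systematic edge labeled $\boldsymbol{e}_j$ for $j\le i_A$ and weight $c$ on each systematic edge labeled $\boldsymbol{e}_j$ for $i_A<j\le i$. This leaves residual server capacity $1-c$ on each of the $i-i_A$ partly-used systematic servers, together with full capacity $1$ on each of the $n-i$ non-systematic servers. These $n-i_A$ residually-capacitated servers support a $k$-uniform sub-hypergraph of $\Gamma_i(n,k)$: by Remark~\ref{remark:MDS}, any $k$-subset of columns of $\boldsymbol{G}_i$ is linearly independent and forms a valid non-systematic recovery set for every $\boldsymbol{e}_{i+1},\dots,\boldsymbol{e}_k$. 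Since $n-i_A\ge n-i\ge k$, Eq.~\eqref{eq:matching_non_systematic} provides a perfect fractional matching on this sub-hypergraph of total weight $(n-i_A)/k$, which I then distribute among the labels $\boldsymbol{e}_{i+1},\dots,\boldsymbol{e}_k$ so that each $\lambda_j$ with $j>i$ is realized.

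Finally, I would verify saturation: by construction every server is fully covered (the $i_A$ saturated systematic servers by their Greedy edges, the $i-i_A$ partly-used systematic servers by a mix of Greedy and non-systematic edges, and the $n-i$ non-systematic servers by the perfect $k$-uniform matching), so the capacity bound of Lemma~\ref{lemma:MatchingBound} holds with equality, and plugging $\boldsymbol{\lambda}$ back into \eqref{eq:lambdaRemain:define} yields equality by the choice of $c$. I expect the main obstacle to be the ``slicing'' step itself: showing cleanly that the residual-systematic servers and the non-systematic servers can be treated as a single $k$-uniform substrate without double-counting capacity or creating infeasible recovery sets. The hypothesis $n-i\ge k$ is exactly what makes this merger work — it guarantees that size-$k$ recovery sets can be drawn from the non-systematic side alone, so the perfect matching on the $(n-i_A)$-vertex sub-hypergraph is nonempty and can absorb the full residual capacity.
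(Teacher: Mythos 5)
Your reading of ``achievable'' is too weak, and even under that weak reading the construction breaks. The paper needs --- and its proof establishes --- that \emph{every} sorted demand vector meeting \eqref{eq:lambdaRemain:define} with equality (with $i_A$ determined by that vector) is servable; this is what Theorem~\ref{theo:polytope} later relies on to conclude that the $2^i$ inequalities exactly describe $\mathcal{S}_i$ when $n\ge k+i$. Exhibiting one symmetric witness per value of $i_A$ would only show that each bounding hyperplane touches the region somewhere. Moreover, your witness frequently does not exist in the prescribed shape: pinning $\lambda_1=\cdots=\lambda_{i_A}=1$ leaves no excess demand on the saturated coordinates, so equality pushes all the remaining weight onto the common value $c$, and $n-i\ge k$ does \emph{not} force $c\in[0,1]$. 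Concretely, for $k=3$, $i=3$, $n=6$, $i_A=1$ the equality reads $3\lambda_1+\lambda_2+\lambda_3=8$; with $\lambda_1=1$ and $\lambda_2=\lambda_3=c$ this gives $c=2.5$, contradicting the requirement $\lambda_j<1$ for $i_A<j\le i$ that defines $i_A$. Points on this facet do exist (e.g.\ $\lambda_1=2+2\epsilon/3$, $\lambda_2=\lambda_3=1-\epsilon$), but they have $\lambda_1>1$ and distinct coordinate values, so the argument must handle an arbitrary sorted $\boldsymbol{\lambda}$ on the hyperplane, not a one-parameter symmetric family.

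Beyond that, the step you defer as ``the main obstacle'' is essentially the entire content of the theorem. After Greedy, the residual capacities are non-uniform ($1-\lambda_j$ on the partially used systematic servers, $1$ on the $n-i$ untouched servers), so \eqref{eq:matching_non_systematic} --- the matching number $n/k$ of a $k$-uniform hypergraph with \emph{unit} vertex capacities --- cannot be invoked on the residual graph as a whole, and ``distribute among the labels'' is an assertion rather than a construction; likewise, saturating every server so that Lemma~\ref{lemma:MatchingBound} holds with equality is exactly what must be built, not assumed. The paper closes this gap by slicing the residual hypergraph into $i-i_A+1$ nested $k$-uniform sub-hypergraphs with \emph{uniform} capacities $\alpha_1=1-\lambda_{i_A+1}$, $\alpha_\ell=\lambda_{i_A+\ell-1}-\lambda_{i_A+\ell}$, \dots, $\alpha_{i-i_A+1}=\lambda_i$, each on at least $k$ vertices because $n-i\ge k$; slice $\ell$ serves $\alpha_\ell m_\ell/k$, and the total, $\frac{n-i_A}{k}-\frac{1}{k}\sum_{j=i_A+1}^{i}\lambda_j$, matches the post-Greedy leftover demand exactly when \eqref{eq:lambdaRemain:define} is tight. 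In your special two-valued instance a two-slice version of this argument would suffice, but without carrying it out (and without allowing general $\boldsymbol{\lambda}$) the proposal does not prove the theorem.
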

\begin{proof} We proceed through the following steps.

\textbf{Post-Greedy Set-Up.}
Under Greedy Matching, each systematic server associated with a basis vector $\boldsymbol{e}_j$, \( j \leq i_A \) is fully saturated. Those associated with $\boldsymbol{e}_j$, \(i_A < j \le i\) were partially used, leaving capacity \(1 - \lambda_j\). Finally, any node beyond node \(i\) remains fully available (capacity \(1\)). Formally, the capacity (``Cap") of node \(G^j\) after Greedy is
\[
\text{Cap}(G^j)
\;=\;
\begin{cases}
0, & j \le i_A,\\
1 - \lambda_j, & i_A < j \le i,\\
1, & j > i.
\end{cases}
\]
Because \(n - i \ge k\), at least \(k\) columns (nodes) remain entirely unused, which will be crucial to constructing further \(k\)-sized matchings.

\textbf{Residual Hypergraph $H$.}
Let \(H\) denote the \emph{residual} hypergraph after the Greedy step, where each node has the above capacity. We want to show that the leftover demands
\[
\sum_{j=1}^{i_A}(\lambda_j - 1)
\;+\;
\sum_{j=i+1}^k \lambda_j
\]
are exactly matched by size-\(k\) edges in \(H\). However, node capacities in \(H\) are non-uniform: some have capacity \(1-\lambda_j\), others have capacity \(1\). To handle this, we ``slice'' \(H\) into smaller \(k\)-uniform hypergraphs whose nodes each have a \emph{uniformly assigned} capacity.

\textbf{Slicing Construction.}
We form \((i - i_A + 1)\) sub-hypergraphs \(H_1,\dots,H_{i-i_A+1}\), each resembling a \(k\)-uniform hypergraph on some subset of columns:
\begin{itemize}
    \item \(\displaystyle H_1\) uses \((n - i_A)\) columns each at capacity 
          \(\alpha_1 = 1 - \lambda_{i_A+1}\)
          (or \(\alpha_1 = 1\) if \(i_A = i\)).
    \item \(\displaystyle H_2\) has \((n - i_A - 1)\) columns each at capacity 
          \(\alpha_2 = \lambda_{i_A+1} - \lambda_{i_A+2}\),
          and so on.
    \item In general, each \(H_\ell\) has uniform capacity 
          \(\alpha_\ell = \lambda_{i_A+\ell-1} - \lambda_{i_A+\ell}\) 
          among a progressively smaller set of columns.
    \item \(\displaystyle H_{\,i-i_A+1}\) (the final slice) has 
          \(\alpha_{\,i-i_A+1} = \lambda_i\)
          among \((n-i)\) columns.
\end{itemize}
(If some \(\lambda_{j}\) are zero for \(j>i\) or \(j \le i_A\), we adjust accordingly, but the concept remains the same.)

\textbf{Capacity Summation.}
Since \( n - i_A \geq n - i \geq k \), each \( H_\ell \) is essentially a \( k \)-uniform hypergraph on \( m_\ell \) vertices (columns), where \( m_\ell \) denotes the number of vertices in \( H_\ell \) and satisfies \( m_\ell \geq k \). By the known \(\mathcal{S}_0\) bound
\(\sum_{j=1}^k \lambda_j \le \tfrac{m_\ell}{k}\),
we can serve exactly 
\(\alpha_\ell\,\tfrac{m_\ell}{k}\)
units of demand from the slice \(H_\ell\). Summing over slices, the total capacity becomes
\begin{align*}
    & (1-\lambda_{i_A+1})\dfrac{n-i_A}{k} + (\lambda_{i_A+1}-\lambda_{i_A+2})\dfrac{n-i_A-1}{k} + \cdots + (\lambda_{i-1}-\lambda_{i})\dfrac{n-i+1}{k} + \lambda_{i}\dfrac{n-i}{k}\\
    & = \dfrac{n-i_A}{k} - \dfrac{1}{k}\sum\limits_{j=i_A+1}^i\lambda_j.    
\end{align*}
% Here, \(\frac{n - i_A}{k}\) is the maximum total demand if each of the \((n-i_A)\) columns had capacity 1 and formed a \(\boldsymbol{K}^k_{\,n-i_A}\). Subtracting \(\frac{1}{k}\sum_{j=i_A+1}^i \lambda_j\) accounts for the partial columns (\(\lambda_j<1\)).

\textbf{Matching the Leftover Demands Exactly.}
The leftover demand after Greedy is
\[
\sum_{j=1}^{i_A} (\lambda_j - 1)
\;+\;
\sum_{j=i+1}^k \lambda_j.
\]
Because the slices \(H_\ell\) are \(k\)-uniform sub-hypergraphs with uniform capacity \(\alpha_\ell\), each can form a fractional (or integral) matching that fully utilizes its \(\alpha_\ell\,\tfrac{m_\ell}{k}\) capacity. 
By choosing appropriate edges (of size \(k\)) in each sub-hypergraph, we allocate these capacities to meet the leftover demands perfectly. Hence, equality
\[
\sum_{j=1}^{i_A} (\lambda_j - 1)
\;+\;
\sum_{j=i+1}^k \lambda_j = \dfrac{n-i_A}{k} - \dfrac{1}{k}\sum\limits_{j=i_A+1}^i\lambda_j.
\]
can be \emph{attained} in the residual hypergraph \(H\). In other words, Eq.~\eqref{eq:lambdaRemain:define} can be achieved with equality.

\textbf{Conclusion.}
Thus, by constructing the sub-hypergraphs \(H_1,\dots,H_{\,i-i_A+1}\) and allocating the leftover demands slice by slice, we show that the bound in Eq.~\eqref{eq:lambdaRemain:define} is exactly achievable when \(n - i \ge k\). This completes the proof.
\end{proof}
\begin{figure}[hbt]
\centering
 \includegraphics[scale=1]{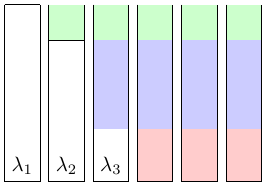}
\caption{Illustration of a storage system \( \boldsymbol{G}_3(6,3) \). In this example, \( \lambda_1 > 1 \), \( \lambda_2 = 0.8 \), and \( \lambda_3 = 0.3 \), so \( i_A = 1 \). After Greedy matching, the hypergraph \( H \) is ``sliced'' into three smaller \( 3 \)-uniform hypergraphs: the first (green) spans the last five vertices with \( \alpha_1 = 1 - \lambda_2 = 0.2 \), the second (purple) spans the last four vertices with \( \alpha_2 = \lambda_2 - \lambda_3 = 0.5 \), and the third (red) spans the last three vertices with \( \alpha_3 = \lambda_3 = 0.3 \). The maximum additional request for $\lambda_1$ that the system can serve after the Greedy matching is therefore constrained by \ensuremath{\lambda_1 - 1 \le \alpha_1\cdot5/3 + \alpha_2\cdot4/3 + \alpha_3\cdot3/3 = 1.3}.}
\label{fig:slicing}
\end{figure}

Figure~\ref{fig:slicing} illustrates this subgraph slicing process based on the values of \( \lambda_1, \lambda_2, \) and \( \lambda_3 \).
\begin{remark}
Eq.~\eqref{eq:lambdaRemain:define} can be equivalently rewritten as
\begin{equation}\label{eq:generalBound}
\sum_{j=1}^{i_A} \lambda_j 
\;+\;
\sum_{j=i+1}^{k} \lambda_j 
\;-\;
i_A
\;\;=\;\;
\frac{n - i_A - \sum_{j=i_A+1}^{i}\lambda_j}{k}.
\end{equation}
When \(i_A = k\) (meaning $\boldsymbol{G} = \boldsymbol{G}_k$ is a systematic matrix), this recovers Theorem~1 of \cite{Service:journals/tit/AktasJKKS21} as a special case. In other words, \eqref{eq:generalBound} generalizes that result to systems where some (but not all) columns are systematic, showing that similarly tight bounds hold for both systematic and partially systematic codes.
\end{remark}
We move on to characterize the SRRs in some specific cases.

\subsection{$\mathcal{S}_i(n, k)$ when \(n \ge k + i\)}
\label{sec:case1-n-gi}

In this scenario, \(n - i \ge k\). By Theorem~\ref{theo:FWMatching}, the bound therein is achieved, implying 
\begin{equation}\label{eq:polytopeDerive}
    k \Bigl(\sum_{j=1}^{i_A}\lambda_j + \sum_{j=i+1}^{k}\lambda_j \Bigr)
    \;+\;
    \sum_{j=i_A+1}^{i}\lambda_j
    \;=\;
    n + i_A(k - 1).
\end{equation}

Let us define a partition of the index set \(\{1, 2, \ldots, i\}\) into three subsets \(\mathcal\mathcal{A}, \mathcal\mathcal{B}, \mathcal\mathcal{C}\) as follows:
\[
\mathcal\mathcal{A} \;=\;\{\,j \, \in \, [i] : \lambda_j \ge 1\},
\qquad
\mathcal\mathcal{B} \;=\; [i]\setminus\mathcal\mathcal{A},
\qquad
\mathcal\mathcal{C} \;=\;\{\,i+1, i+2, \dots, k\}.
\]

Then \eqref{eq:polytopeDerive} can be rewritten as
\[
k\Bigl(\sum_{j\, \in \, \mathcal\mathcal{A}} \lambda_j + \sum_{j\, \in \, \mathcal\mathcal{C}} \lambda_j\Bigr) 
\;+\;
\sum_{j \, \in \, \mathcal\mathcal{B}} \lambda_j
\;=\;
n \;+\; |\mathcal\mathcal{A}|\,(k - 1) = n+a(k-1),
\]

where we denote $a = |\mathcal{A}|$. Because any choice of \(\mathcal\mathcal{A}\subseteq [i]\) uniquely determines \(\mathcal\mathcal{B} = [i]\setminus \mathcal\mathcal{A}\), and there are \(2^i\) ways to pick \(\mathcal\mathcal{A}\subseteq [i]\), we obtain a family of linear constraints describing \(\mathcal{S}_i\). Concretely, for \(\boldsymbol{G}_i(n,k)\) with \(n \ge k + i\), the SRR \(\mathcal{S}_i\) satisfies:

\begin{theorem}\label{theo:polytope}
If \(n \ge k + i\), the service region \(\mathcal{S}_i\) is given by the set of all nonnegative \(\boldsymbol{\lambda}\) satisfying:
\begin{equation}\label{equation1}
\begin{cases}
\lambda_l \ge 0, & \forall \,\, l \, \in \, [k], \\[4pt]
k\,\Bigl(\sum_{j=1}^k \lambda_j\Bigr) \;\le\; n + i\,(k-1), & \\[4pt]
\lambda_l \;+\; k\Bigl(\sum_{j \, \in \, [k]\setminus \{l\}} \lambda_j\Bigr) 
\;\le\; n \;+\; (i-1)\,(k-1), 
& \forall \,\,l \, \in \, [i],\\[4pt]
\lambda_l \;+\; \lambda_h \;+\; 
k\Bigl(\sum_{j \, \in \, [k]\setminus \{l,h\}} \lambda_j\Bigr)
\;\le\; n \;+\; (i-2)\,(k-1), 
& \forall \,\, l,h \, \in \, [i], \; l \ne h, \\[2pt]
\quad \vdots & \\[2pt]
\sum_{l \, \in \, [i]} \lambda_l \;+\; 
k \Bigl(\sum_{j \, \in \, [k]\setminus [i]} \lambda_j\Bigr)
\;\le\; n. &
\end{cases}
\end{equation}
\end{theorem}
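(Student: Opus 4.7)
The plan is to establish the characterization in two directions. The nonnegativity inequalities are immediate, so we focus on the $2^i$ remaining constraints, which are naturally indexed by subsets $\mathcal{A} \subseteq [i]$, with $\mathcal{A}$ playing the role of the ``heavy'' indices $j \in [i]$ for which $\lambda_j \ge 1$. Each such constraint takes the form $k\sum_{j \in \mathcal{A} \cup \{i+1,\dots,k\}} \lambda_j + \sum_{j \in [i]\setminus \mathcal{A}} \lambda_j \le n + |\mathcal{A}|(k-1)$.

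For necessity, I would use a weighted generalization of Proposition~\ref{prop:sum_bound}: given nonnegative label weights $z_j$, any fractional matching $\boldsymbol{w}$ on $\Gamma_i$ satisfies $\sum_j z_j \lambda_j(\boldsymbol{w}) \le \sum_v y_v$ whenever $y_v \ge 0$ and $\sum_{v \in \epsilon} y_v \ge z_{\mathrm{label}(\epsilon)}$ on every edge $\epsilon$; this is standard weighted matching/vertex-cover LP duality. For each $\mathcal{A}$, I set $z_j = k$ when $j \in \mathcal{A}$ or $j > i$, and $z_j = 1$ when $j \in [i] \setminus \mathcal{A}$, then construct the explicit cover that puts weight $k$ on $G^j$ for $j \in \mathcal{A}$, weight $1$ on every other column vertex, and weight $0$ on each zero-vertex. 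Coverage of systematic edges is immediate; coverage of non-systematic edges follows from Remark~\ref{remark:MDS} since every such edge has size exactly $k$, so the minimum weight sum along it is at least $k \ge z_j$. The total cover weight computes to $k|\mathcal{A}| + (i-|\mathcal{A}|) + (n-i) = n + |\mathcal{A}|(k-1)$, yielding the claimed inequality.

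For sufficiency, fix any $\boldsymbol{\lambda}$ satisfying all listed constraints and let $\mathcal{A}^* = \{j \in [i] : \lambda_j \ge 1\}$. Apply Greedy matching (Theorem~\ref{thm:Greedy}) to saturate the systematic server for each $j \in \mathcal{A}^*$ and to serve $\lambda_j$ directly via the systematic server for $j \in [i]\setminus \mathcal{A}^*$. The leftover demands $\{\lambda_j - 1 : j \in \mathcal{A}^*\} \cup \{\lambda_j : j > i\}$ must be routed by size-$k$ edges on the residual hypergraph. A direct rearrangement shows that the constraint corresponding to $\mathcal{A}^*$ is precisely the statement that the aggregate leftover demand does not exceed the aggregate capacity of the sliced subgraphs $H_1,\dots,H_{i-|\mathcal{A}^*|+1}$ built in the proof of Theorem~\ref{theo:FWMatching}. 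Since $n - i \ge k$, each slice is $k$-uniform on at least $k$ vertices, and the MDS property ensures every size-$k$ edge can carry any label $\boldsymbol{e}_j$, so the leftover demand profile can be freely redistributed across slices to produce a valid matching.

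The main obstacle is the sufficiency direction, specifically the step where one transfers from a single scalar inequality---the $\mathcal{A}^*$-constraint---to a feasible multi-label routing across the slices $H_\ell$. This is essentially a transportation-style feasibility argument enabled by the MDS flexibility of $k$-column recovery sets; while Theorem~\ref{theo:FWMatching} already exhibits the slicing allocation at the boundary, one must verify that any sub-boundary demand profile remains feasible, which follows by scaling down the boundary matching appropriately. Once this step is carried out, combining the two directions shows that the polytope defined by the listed inequalities coincides with $\mathcal{S}_i$.
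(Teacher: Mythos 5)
Your proposal is correct, and its sufficiency direction coincides with the paper's: greedily allocate the systematic servers, then route the leftover demand for objects in $\mathcal{A}^*$ and in $\{i+1,\dots,k\}$ through the $k$-uniform slices constructed in Theorem~\ref{theo:FWMatching}; the constraint indexed by $\mathcal{A}^*$ is precisely the statement that this leftover does not exceed the aggregate slice capacity, and sub-boundary profiles are indeed fine because each slice is a scaled copy of an $\mathcal{S}_0$-type region on at least $k$ vertices, hence serves any label mix up to its capacity. Where you genuinely depart from the paper is the necessity direction. The paper derives only the single inequality attached to the actual heavy set: it invokes Theorem~\ref{thm:Greedy} to put an arbitrary achievable $\boldsymbol{\lambda}$ into greedy form and then applies the capacity bound of Lemma~\ref{lemma:MatchingBound} to get \eqref{eq:lambdaRemain:define}; the remaining constraints are then asserted by letting $\mathcal{A}$ range over all subsets of $[i]$, which is legitimate only because the $\mathcal{A}^*$-inequality is the tightest one for that particular $\boldsymbol{\lambda}$ (adding an index $j$ with $\lambda_j\le 1$ to $\mathcal{A}$ raises the left side by $(k-1)\lambda_j$ and the right side by $k-1$), a monotonicity step the paper leaves implicit. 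You instead certify every one of the $2^i$ inequalities directly by exhibiting, for each $\mathcal{A}\subseteq[i]$, a weighted fractional vertex cover (weight $k$ on the systematic columns in $\mathcal{A}$, weight $1$ on all other columns, $0$ on the auxiliary zero-vertices, total $n+|\mathcal{A}|(k-1)$) and invoking weighted weak duality, a one-line generalization of Proposition~\ref{prop:sum_bound}. This buys a necessity proof that is independent of Theorem~\ref{thm:Greedy} and makes the validity of all constraints explicit, at the cost of introducing the weighted-cover lemma; the paper's route is shorter given its earlier machinery but relies on the unstated subset-monotonicity reduction. One cosmetic caveat: in the sufficiency step you cite Theorem~\ref{thm:Greedy}, which presupposes $\boldsymbol{\lambda}\in\mathcal{S}_i$; what you actually use there is the greedy assignment as a construction (exactly as in the proof of Theorem~\ref{theo:FWMatching}), so the reference should be to the construction rather than to that theorem.
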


\noindent
There are \( k + 2^i \) constraints in total: \( k \) nonnegativity constraints and \( 2^i \) constraints arising from all subsets \( \mathcal\mathcal{A} \subseteq [i] \). In the special case \( i = k \), the second and last constraints in \eqref{equation1} become linearly dependent. Thus, only the tighter one—the last constraint—remains effective, resulting in \( 2^k + k - 1 \) distinct constraints.
\begin{remark}
If \( n \geq 2k \), the condition \( n \geq k + i \) is always satisfied since \( i \leq k \). Therefore, for any MDS-coded system with at least twice as many servers as data objects, the SRR is fully determined as in Theorem~\ref{theo:polytope} regardless of the number of systematic servers. 
\end{remark}
\subsection{$\mathcal{S}_k(k+1, k)$ of systematic coding}
Now consider the matrix \(\boldsymbol{G} = \boldsymbol{G}_k(k+1,k)\), which corresponds to a systematic code with \(k\) systematic columns and 1 parity column $p$. The code generated by this generator matrix is called Single-parity check (SPC) code. Hence, \(\mathcal\mathcal{C} = \emptyset\) and the partition reduces to \(\mathcal\mathcal{A}, \mathcal\mathcal{B}\subseteq [k]\) with \(\mathcal\mathcal{A}\,\cup\,\mathcal\mathcal{B} = [k]\). Then \eqref{eq:lambdaRemain:define} simplifies to
\begin{equation}\label{eq:k+1}
    k\,\Bigl(\sum_{j \, \in \, \mathcal\mathcal{A}} \lambda_j \;-\; |\mathcal\mathcal{A}|\Bigr) 
    \;+\;
    \sum_{j \, \in \, \mathcal\mathcal{B}} \lambda_j
    \;\;\le\;\; (k+1) \;-\; |\mathcal\mathcal{A}|.
\end{equation}

We show that in this case, each pair of distinct requests \(\lambda_i,\lambda_j\) must satisfy \(\lambda_i + \lambda_j \le 2\). Formally:

\begin{lemma}
For any data request \(\boldsymbol{\lambda}\, \in \, \mathcal{S}_k(k+1,k)\), one has
\[
\lambda_i \;+\; \lambda_j \;\le\; 2 
\quad
\forall \,\, i\neq j,\; i,j\, \in \, [k].
\]
\end{lemma}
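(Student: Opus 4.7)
The plan is to exploit the very simple recovery-set structure of $\boldsymbol{G}_k(k+1,k)$. Because this generator matrix has exactly $k$ systematic columns $\boldsymbol{e}_1,\ldots,\boldsymbol{e}_k$ and only one parity column $p$, each basis vector $\boldsymbol{e}_l$ admits exactly two recovery sets: the systematic singleton $\{\boldsymbol{e}_l\}$, and, by Remark~\ref{remark:MDS}, the \emph{unique} non-systematic $k$-set $\{\boldsymbol{e}_1,\ldots,\widehat{\boldsymbol{e}_l},\ldots,\boldsymbol{e}_k,p\}$, since there is only one $k$-subset of the $k+1$ columns that excludes $\boldsymbol{e}_l$.

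Given $\boldsymbol{\lambda}\in\mathcal{S}_k(k+1,k)$, Proposition~\ref{prop:demand_to_matching} supplies a feasible fractional matching $\boldsymbol{w}$ on $\Gamma_k(k+1,k)$. Let $w_l^{\mathsf{s}}$ denote its weight on the systematic edge labeled $\boldsymbol{e}_l$, and $w_l^{\mathsf{c}}$ its weight on the unique non-systematic edge labeled $\boldsymbol{e}_l$, so that $\lambda_l=w_l^{\mathsf{s}}+w_l^{\mathsf{c}}$ for every $l\in[k]$. The systematic node $\boldsymbol{e}_l$ lies in its own systematic edge and in every non-systematic edge labeled $\boldsymbol{e}_m$ with $m\ne l$; hence the unit-capacity constraint at $\boldsymbol{e}_l$ reads
\begin{equation*}
w_l^{\mathsf{s}}+\sum_{m\ne l}w_m^{\mathsf{c}}\;\le\;1.
\end{equation*}

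Writing this inequality for $l=i$ and for $l=j$ and adding the two yields
\begin{equation*}
(w_i^{\mathsf{s}}+w_j^{\mathsf{s}})+(w_i^{\mathsf{c}}+w_j^{\mathsf{c}})+2\sum_{m\ne i,j}w_m^{\mathsf{c}}\;\le\;2.
\end{equation*}
Dropping the nonnegative tail $2\sum_{m\ne i,j}w_m^{\mathsf{c}}$ leaves $\lambda_i+\lambda_j\le 2$, as desired. The whole argument hinges on the single structural observation that $n-i=1<k$ forces the non-systematic recovery set for each $\boldsymbol{e}_l$ to be unique; this collapses the feasibility LP into two clean node-capacity inequalities, so no real obstacle arises beyond identifying the hypergraph correctly.
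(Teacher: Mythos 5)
Your proof is correct and is essentially the paper's argument in primal form: the paper places weight $1$ on the two systematic vertices for $i$ and $j$, checks this is a vertex cover of the $\{i,j\}$-labeled subgraph, and invokes Proposition~\ref{prop:sum_bound}, whereas you carry out the identical certificate explicitly by summing the capacity constraints at those two vertices and dropping nonnegative terms. Both hinge on the same structural fact that every non-systematic edge labeled $\boldsymbol{e}_i$ or $\boldsymbol{e}_j$ must contain one of these two systematic nodes, so there is no substantive difference.
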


\begin{proof}
In the recovery graph \( \Gamma_G \), consider the subgraph \( \Gamma' \) obtained by removing all edges except those labeled by \( \boldsymbol{e}_i \) and \( \boldsymbol{e}_j \). Assign weight 1 to the two nodes corresponding to the systematic columns for \( i \) and \( j \), and weight 0 to all other nodes. Since the remaining \( (k-1) \) nodes cannot independently form a recovery set for either \( \boldsymbol{e}_i \) or \( \boldsymbol{e}_j \) without involving their respective systematic columns, this weight assignment forms a valid vertex cover of $\Gamma'$ of size 2. Consequently, Proposition~\ref{prop:sum_bound} implies \( \lambda_i + \lambda_j \leq 2 \).
\end{proof}

We argue that this pairwise constraint, along with \eqref{eq:k+1} and nonnegativity, fully characterizes the SRR polytope. Concretely, we have the following result.

\begin{theorem}
The service region $\mathcal{S}_k(k+1,k)$ is given by
\begin{align}
    \lambda_j &\;\ge\; 0, 
    \quad \forall \,\, j \, \in \, [k], 
    && \text{(nonnegativity)} \label{eq:non-negative-k+1}\\[4pt]
    \lambda_i + \lambda_j &\;\le\; 2, 
    \quad \forall \,\, i \neq j,\; i,j \, \in \, [k],
    && \text{(vertex-cover constraints)} \label{eq:vertex-cover-k+1}\\[4pt]
    k\,\Bigl(\sum_{j \, \in \, \mathcal\mathcal{A}}\lambda_j \;-\; |\mathcal\mathcal{A}|\Bigr)
    + \sum_{j \, \in \, \mathcal\mathcal{B}} \lambda_j 
    &\;\le\; k+1 - |\mathcal\mathcal{A}|,
    && \text{(node-capacity constraint)} \label{eq:node-capacity-k+1}
\end{align}
where \(\mathcal\mathcal{A}\) is the set of indices \(j\) with \(\lambda_j > 1\) and \(\{\mathcal\mathcal{A},\mathcal\mathcal{B}\}\) forms a partition of \([k]\).
\end{theorem}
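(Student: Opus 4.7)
The plan is to establish necessity and sufficiency of the three families of constraints separately. Necessity of non-negativity is immediate from the definition of $\mathcal{S}(\boldsymbol{G})$, and \eqref{eq:vertex-cover-k+1} is precisely the content of the lemma just above the theorem. For \eqref{eq:node-capacity-k+1}, I would combine Theorem~\ref{thm:Greedy} with Lemma~\ref{lemma:MatchingBound}: by Theorem~\ref{thm:Greedy} any feasible $\boldsymbol{\lambda}$ admits a Greedy allocation, so every systematic edge labeled $\boldsymbol{e}_j$ for $j\in\mathcal{A}$ is saturated at weight $1$ and every systematic edge for $j\in\mathcal{B}$ carries weight $\lambda_j$; the leftover demand $\sum_{j\in\mathcal{A}}(\lambda_j-1)$ must be matched by non-systematic edges of size $k$, and Lemma~\ref{lemma:MatchingBound} applied to the residual hypergraph yields exactly inequality~\eqref{eq:node-capacity-k+1}.

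For sufficiency I would exhibit an explicit valid matching for any $\boldsymbol{\lambda}$ satisfying the stated constraints. The key preliminary observation is that~\eqref{eq:vertex-cover-k+1} forces $|\mathcal{A}|\le 1$: if two distinct indices both satisfied $\lambda_i,\lambda_j>1$, then $\lambda_i+\lambda_j>2$ would contradict the pairwise bound. Only two cases remain. When $\mathcal{A}=\emptyset$, every $\lambda_j\le 1$ and the pure Greedy assignment $w(\{\boldsymbol{e}_j,\boldsymbol{0}_j\})=\lambda_j$ is already a valid matching on $\Gamma_k(k+1,k)$. When $\mathcal{A}=\{i\}$, I would start from Greedy and add weight $\lambda_i-1$ to the unique non-systematic recovery set $R_i=\{p_1\}\cup\{\boldsymbol{e}_\ell:\ell\ne i\}$, which exists and is unique because $\boldsymbol{G}_k(k+1,k)$ has exactly $k+1$ columns and MDS forces every non-systematic recovery set to have size $k$ (Remark~\ref{remark:MDS}).

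Feasibility of this augmented Greedy then reduces to three elementary capacity checks. On each systematic node $\boldsymbol{e}_\ell$ with $\ell\ne i$, the total weight is $\lambda_\ell+(\lambda_i-1)\le 1$, which is exactly~\eqref{eq:vertex-cover-k+1}. On the parity node $p_1$, the total weight is $\lambda_i-1\le 1$, which follows from $\lambda_i\le 2$, itself a consequence of pairing $i$ with any $j\ne i$ in~\eqref{eq:vertex-cover-k+1} together with $\lambda_j\ge 0$. Finally, $\boldsymbol{e}_i$ and the auxiliary zero-vertices are loaded only by their systematic edges and thus satisfy the unit bound. Hence all constraints in the fractional matching polytope are met, and Proposition~\ref{prop:demand_to_matching} certifies $\boldsymbol{\lambda}\in\mathcal{S}_k(k+1,k)$.

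The main obstacle is that neither Theorem~\ref{theo:FWMatching} nor Theorem~\ref{theo:polytope} applies, because here $n-i=1<k$; the residual hypergraph after Greedy is severely constrained, offering just one non-systematic recovery set per basis vector, so the subgraph-slicing argument is unavailable. The novelty lies in coupling the pairwise vertex-cover bound from the preceding lemma with the single available non-systematic set, and in recognizing that the pairwise bound alone collapses the a priori $2^k$-many case analyses indexed by $\mathcal{A}\subseteq[k]$ into just the two cases $|\mathcal{A}|\in\{0,1\}$.
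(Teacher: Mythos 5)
Your proposal is correct and follows essentially the same route as the paper: necessity via the pairwise vertex-cover lemma and the Greedy-plus-capacity-bound inequality, and sufficiency by Greedy matching augmented with weight $\lambda_i-1$ on the single non-systematic recovery set formed by the parity column and the residual capacity of the other systematic nodes. The only cosmetic differences are that you deduce $|\mathcal{A}|\le 1$ directly from the pairwise constraints instead of ordering the $\lambda_j$'s as the paper does, and your label $p_1$ for the parity column should be $p_n$ in the paper's notation.
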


\begin{proof}
Because \( \boldsymbol{G} \) includes systematic columns for all \( k \) objects, the roles of \( \lambda_1, \dots, \lambda_k \) are symmetric. Without loss of generality, let \( \lambda_1 \geq \lambda_2 \geq \dots \geq \lambda_k \). Consider the constraint 
\(\lambda_1 + \lambda_2 \leq 2\), which implies that only \( \lambda_1 \) can exceed 1.  

\paragraph{Case 1: \( \lambda_1 \leq 1 \)}
In this case, \( \lambda_j \leq 1 \) for all \( j \). This request is clearly servable via Greedy matching, i.e., each request is served by its corresponding systematic server.

\paragraph{Case 2: \( \lambda_1 > 1 \).}
Let \( \lambda_1 = 1 + \delta \) with \( \delta > 0 \). From \( \lambda_1 + \lambda_2 \leq 2 \), it follows that \( \lambda_2 \leq 1 - \delta \). Thus, we have \( \lambda_1 > 1 > \lambda_2 \geq \dots \geq \lambda_k \). 

After Greedy matching, requests \( \lambda_2, \dots, \lambda_k \) are fully served by their corresponding systematic columns. The remaining portion, \( \lambda_1 - 1 = \delta \), can be served using a non-systematic recovery set consisting of:
\begin{itemize}
\item the free capacity in \( (k-1) \) systematic nodes \( \boldsymbol{e}_2, \dots, \boldsymbol{e}_k \) (each with at least \( \delta \) available), and
\item the single unused parity column \( p \) in \( \boldsymbol{G}_k \).
\end{itemize}
Thus, \( \lambda_1 \) is fully servable. These arguments show that the conditions \eqref{eq:non-negative-k+1}, \eqref{eq:vertex-cover-k+1}, and \eqref{eq:node-capacity-k+1} indeed describe \(\mathcal{S}_k(k+1,k)\) precisely.
\end{proof}
\subsection{$\mathcal{S}_i(n, k)$ when $n = k+i-1$}
In this case $\boldsymbol{G} = \boldsymbol{G}_i(k+i-1, k), \mathcal{C} = [k]\setminus[i]$ and $\mathcal{A}\,\cup\,\mathcal{B} = [i]$, therefore Eq.~\eqref{eq:generalBound} becomes:
\begin{align}
    % \sum\limits_{i \, \in \, \mathcal{A}} \lambda_i - |\mathcal{A}| & \le \sum\limits_{i \, \in \, \mathcal{B}} \dfrac{1-\lambda_i}{k} + \dfrac{n-k}{k}\\
    % k\Bigl(\sum\limits_{i \, \in \, \mathcal{A}} \lambda_i - |\mathcal{A}|\Bigr) & \le \sum\limits_{i \, \in \, \mathcal{B}} (1-\lambda_i) + n-k \\
    k\Bigl(\sum\limits_{j \, \in \, \mathcal{A} \,\cup\, \mathcal{C}} \lambda_j - |\mathcal{A}|\Bigr) + \sum\limits_{j \, \in \, \mathcal{B}} \lambda_j & \le k+i-1 - |\mathcal{A}|
    \label{eq:k+i-1}
\end{align}

This corner case is substantially more complicated than the previous cases when $n \ge k+i$. % Recall that for cases of systematic $\boldsymbol{G}$ with $n \ge k+i$ we can characterize the SRRs while with $n < i+k-1$ we couldn't. In between them lies this special case.
We present and prove the following constraint:
\begin{lemma} For any service vector $\boldsymbol{\lambda} \, \in \, \mathcal{S}_i(k+i-1, k)$: 
\begin{equation}
    \sum\limits_{j \, \in \, \mathcal{A} \,\cup\, \mathcal{C}} \lambda_j + \sum\limits_{j \, \in \, \mathcal{B}} \lambda_j = \sum\limits_{j=1}^k \lambda_j \le i
\end{equation}
\label{lem:sum_rate_bound}
\end{lemma}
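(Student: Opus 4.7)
The plan is to obtain this bound as a direct consequence of the Case 2 analysis of the Maximal Matching Simplex (Section~\ref{Sec:Bounding_simplices}) together with Proposition~\ref{prop:sum_bound}. The key structural observation is that under the hypothesis $n = k+i-1$, the number of non-systematic servers is exactly $n - i = k - 1 < k$, so we fall squarely into Case 2 of the matching-number computation.

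First, I would recall the explicit vertex cover constructed in Case 2: assign weight $w_v = 1$ to each of the $i$ systematic vertices and $w_v = 0$ to every other vertex (non-systematic columns and zero vertices). Since every systematic edge of $\Gamma_i(k+i-1, k)$ contains exactly one systematic vertex, each such edge is covered with total weight $1$. For the non-systematic edges, the crucial point is Remark~\ref{remark:MDS}: every non-systematic edge has size exactly $k$, and since only $k-1$ non-systematic vertices exist in $\Gamma_i(k+i-1, k)$, each such edge must include at least one systematic vertex. Thus every non-systematic edge is also covered. This gives a fractional vertex cover of size exactly $i$, so
\[
\tau^*\bigl(\Gamma_i(k+i-1, k)\bigr) \;\le\; i,
\]
which together with the explicit matching from Case 2 (just the $i$ systematic edges each weighted $1$) yields equality via LP duality, recovering Equation~\eqref{eq:matching_number_2}.

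Applying Proposition~\ref{prop:sum_bound} to the achievable vector $\boldsymbol{\lambda} \in \mathcal{S}_i(k+i-1, k)$ then gives
\[
\sum_{j=1}^k \lambda_j \;\le\; \tau^*\bigl(\Gamma_i(k+i-1, k)\bigr) \;=\; i,
\]
which is exactly the claim. The entire argument reduces to verifying that \emph{every} size-$k$ non-systematic hyperedge must touch a systematic vertex, which follows from a simple counting argument: $n - i = k - 1 < k$.

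There is no real obstacle here; the only thing to double-check is that the partition of $[k]$ into $\mathcal{A},\mathcal{B},\mathcal{C}$ used in the surrounding discussion is consistent with this sum bound, i.e., $\sum_{j\in\mathcal{A}}\lambda_j + \sum_{j\in\mathcal{B}}\lambda_j + \sum_{j\in\mathcal{C}}\lambda_j = \sum_{j=1}^k \lambda_j \le i$. Since $\mathcal{A} \cup \mathcal{B} = [i]$ and $\mathcal{C} = [k]\setminus[i]$ partition $[k]$, the two sums match term-by-term, completing the proof.
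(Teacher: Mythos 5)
Your argument is correct and matches the paper's proof: both place weight $1$ on the $i$ systematic vertices, observe that the $n-i = k-1 < k$ non-systematic vertices cannot by themselves form a size-$k$ non-systematic edge (so this is a valid vertex cover of size $i$), and then invoke Proposition~\ref{prop:sum_bound} to conclude $\sum_{j=1}^k \lambda_j \le i$. The extra remarks about LP duality and the partition $\mathcal{A}, \mathcal{B}, \mathcal{C}$ are harmless but not needed.
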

\begin{proof}
    Put a weight one on $i$ systematic vertices of the recovery graph $\Gamma_G$;  the remaining \( (k-1) \) nodes cannot independently form a non-systematic recovery set for any object without involving their respective systematic columns. Thus, we have a valid vertex cover of size $i$, by Proposition~\ref{prop:sum_bound}, $i \ge \sum\limits_{j=1}^k \lambda_j$.
\end{proof}
We will prove that this constraint and constraint Eq.~\eqref{eq:k+i-1}, along with non-negativity constraints characterize the SRR polytope.
\begin{theorem}
$\mathcal{S}_i(k+i-1,k)$ is given by:
\begin{align}
    \lambda_j &\ge 0, \quad \forall \, j \, \in \, [k] 
    && \text{(Non-negativity constraints)} \label{eq:non-negative} \\
    \sum\limits_{j=1}^{k} \lambda_j &\le i 
    && \text{(Vertex cover constraint)} \label{eq:vertex:cover} \\  
    k\sum\limits_{j \, \in \, \mathcal{A}} (\lambda_j - 1) + k\sum\limits_{j \, \in \, \mathcal{C}} \lambda_j + \sum\limits_{j \, \in \, \mathcal{B}} \lambda_j &\le k + i - 1 - |\mathcal{A}|
    && \text{(Node capacity)} \label{eq:node:capacity}
\end{align}
    
\end{theorem}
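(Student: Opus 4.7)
I would prove the two inclusions $\srr \subseteq$ polytope and polytope $\subseteq \srr$. Necessity is short and reuses earlier tools; sufficiency is the main novelty, since Theorem~\ref{theo:FWMatching}'s slicing construction breaks when $n-i<k$.

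\textbf{Necessity.} Non-negativity~\eqref{eq:non-negative} is built into the SRR definition; the vertex cover bound~\eqref{eq:vertex:cover} is Lemma~\ref{lem:sum_rate_bound}. For the node capacity family~\eqref{eq:node:capacity}, I would first verify the inequality for the canonical partition $\mathcal{A}^\star=\{j\in[i]:\lambda_j\ge 1\}$; this is exactly~\eqref{eq:k+i-1}, obtained by applying Greedy matching (Theorem~\ref{thm:Greedy}) followed by Lemma~\ref{lemma:MatchingBound}. A short calculation shows that swapping any index $l$ between $\mathcal{A}$ and $\mathcal{B}$ changes the slack (RHS $-$ LHS) by $(k-1)(\lambda_l-1)$ (out of $\mathcal{A}$) or $(k-1)(1-\lambda_l)$ (into $\mathcal{A}$). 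Moving an element of $\mathcal{A}^\star$ out, or one outside $\mathcal{A}^\star$ in, therefore only weakens the inequality, so the canonical partition dominates all $2^i$ choices.

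\textbf{Sufficiency.} Fix $\boldsymbol{\lambda}$ satisfying the constraints; WLOG $\lambda_1\ge\dots\ge\lambda_k$ and let $i_A=|\mathcal{A}^\star|$. After applying Greedy matching (Theorem~\ref{thm:Greedy}), a residual demand $R=\sum_{j\in\mathcal{A}^\star}(\lambda_j-1)+\sum_{j\in\mathcal{C}}\lambda_j$ remains to be served by size-$k$ non-systematic edges on the residual hypergraph, whose $n-i_A$ nodes carry capacity $1-\lambda_j$ (systematic) or $1$ (parity). When $i_A=i$, the vertex cover bound forces $\lambda_j=1$ on $[i]$ and $\lambda_j=0$ on $\mathcal{C}$, giving $R=0$ and Greedy alone suffices. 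When $i_A<i$, the MDS property makes every $k$-subset of the residual a recovery set for every relevant label, so the residual hypergraph is a \emph{complete} $k$-uniform hypergraph with total capacity $M=k+i-1-i_A-\sum_{j\in\mathcal{B}}\lambda_j$. By LP duality and a rearrangement argument, the optimal fractional vertex cover assigns weight $1/(k-t)$ to all but the top-$t$ capacity (parity) nodes, yielding
\[
\nu^\star=\min_{0\le t\le k-1}\frac{M-t}{k-t}=\min\Bigl(\tfrac{M}{k},\,M-k+1\Bigr),
\]
since $(M-t)/(k-t)$ is monotonic in $t$. A direct computation shows that $R\le M/k$ is precisely the node capacity bound~\eqref{eq:node:capacity} at $\mathcal{A}^\star$, while $R\le M-k+1$ rearranges to the vertex cover bound $\sum_j\lambda_j\le i$. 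Both holding by hypothesis, a fractional matching of size $R$ exists, and composing it with Greedy certifies $\boldsymbol{\lambda}\in\srr$.

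\textbf{Main obstacle.} The slicing approach of Theorem~\ref{theo:FWMatching} strands $(k-1)\lambda_i$ units of parity capacity in a final slice that is too small ($k-1$ nodes) to carry any $k$-edge. The fix is to abandon explicit slicing and instead read off the residual hypergraph's matching number directly via LP duality. The delicate point is that this dual minimum is the \emph{smaller} of two quantities that each bind in a different regime: $M/k$ is tight when $M\ge k$ and reflects the node capacity constraint, whereas $M-k+1$ is tight when $M\le k$ and reflects the vertex cover. The two families of constraints in the theorem match these two dual bounds one-to-one, and neither family alone would suffice.
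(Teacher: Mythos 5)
Your proposal is correct, but the sufficiency direction takes a genuinely different route from the paper. The paper proves achievability by a three-way case analysis: when the vertex-cover constraint $\sum_j\lambda_j=i$ is tight it invokes a ``Successive Tiling'' lemma (each node $j\in\mathcal{B}$ is paired with the $k-1$ parity nodes, using $\sum_{j\in\mathcal{B}}(1-\lambda_j)\le 1$); when the node-capacity constraint is tight it bounds the sum rate from below and expresses the vector as a convex combination of two servable extreme points; and when neither is tight it perturbs $\boldsymbol{\lambda}$ until one becomes tight. You instead compute, in one stroke, the capacitated fractional matching number of the post-Greedy residual hypergraph (a complete $k$-uniform hypergraph on the $k-1$ parity nodes of capacity $1$ and the $\mathcal{B}$-nodes of capacity $1-\lambda_j$), obtaining $\nu^*=\min\bigl(M/k,\,M-k+1\bigr)$ via LP duality, and observe that these two dual bounds are exactly the node-capacity and vertex-cover constraints at the canonical partition; your identifications $R\le M/k\Leftrightarrow\eqref{eq:node:capacity}$ and $R\le M-k+1\Leftrightarrow\eqref{eq:vertex:cover}$ check out, as does the monotonicity of $(M-t)/(k-t)$. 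This is cleaner and more explanatory (it shows why precisely these two constraint families appear, and it subsumes the paper's corner cases, including $i_A=i$ where $M-k+1=0$), at the cost of being non-constructive: feasibility of the residual demand is certified by strong duality rather than by an explicit allocation, whereas the paper exhibits the schedule. Your necessity argument (Greedy via Theorem~\ref{thm:Greedy} plus the capacity bound of Lemma~\ref{lemma:MatchingBound}, and Lemma~\ref{lem:sum_rate_bound}) coincides with the paper's; the swap computation showing the canonical partition dominates all $2^i$ partitions is correct but not needed for the theorem as stated.

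Two points would need to be written out to make this airtight. First, the asserted form of the optimal fractional vertex cover of a capacitated complete $k$-uniform hypergraph requires a short exchange argument: by the rearrangement inequality an optimum exists with dual values nondecreasing as capacities decrease, feasibility then forces every node beyond the $k$-th largest to carry the same value as the $k$-th, and minimizing over the resulting ordered simplex puts the optimum at a two-level vertex $y=(0,\dots,0,\tfrac{1}{k-t},\dots,\tfrac{1}{k-t})$, giving $\tau^*=\min_{0\le t\le k-1}\frac{M-t}{k-t}$; this is elementary but is the crux and should not be left as an assertion. Second, you should state explicitly why every $k$-subset of residual nodes is a \emph{minimal} recovery set for every residual label: the saturated columns $\boldsymbol{e}_j$, $j\in\mathcal{A}$, are excluded from the residual and the columns $\boldsymbol{e}_j$, $j\in\mathcal{C}$, do not exist in $\boldsymbol{G}_i$, so Remark~\ref{remark:MDS} applies; this also licenses splitting the size-$R$ matching arbitrarily among the residual labels. (Also, the ``WLOG $\lambda_1\ge\dots\ge\lambda_k$'' can only permute indices within $[i]$ and within $\mathcal{C}$, but your argument never actually uses the global ordering, only the partition, so this is cosmetic.)
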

% , where $a = |\mathcal{A}|$. In other words, any non-negative request vector satisfies \eqref{eq:node:capacity} and \eqref{eq:vertex:cover} is servable.

\begin{proof}
    Let $a = |\mathcal{A}|$. For any non-negative request vector $\boldsymbol{\lambda} = (\lambda_1, \lambda_2, \hdots, \lambda_k)$ that satisfies \eqref{eq:vertex:cover} and \eqref{eq:node:capacity}, we prove that it is servable by the system by considering one of the following 3 cases:
\begin{enumerate}
    \item The Vertex cover constraint satisfied with equality:
% \begin{equation}
\begin{equation}
\begin{cases}
    & k\sum\limits_{j \, \in \, \mathcal{A}}(\lambda_j - 1) + k\sum\limits_{j \, \in \, \mathcal{C}}\lambda_j +
    \sum\limits_{j \, \in \, \mathcal{B}}\lambda_j \le k+i-1-a\\
    & \sum\limits_{j=1}^{k}\lambda_j = i \label{eq:n=k+i-1:firstcase}
\end{cases}
\end{equation}
% \end{equation}
We first serve this request $\boldsymbol{\lambda}$ using Greedy matching. After Greedy matching allocation, the remaining request to be served is $\sum\limits_{i \, \in \, \mathcal{A}}(\lambda_i-1) + \sum\limits_{j \, \in \, \mathcal{C}}\lambda_j$. Note that because $n = k+i-1$, $k-1$ columns (nodes) remain entirely unused.

We now prove that $\sum\limits_{j \, \in \, \mathcal{A}}(\lambda_j-1) + \sum\limits_{j \, \in \, \mathcal{C}}\lambda_j \le 1$, i.e., $\sum\limits_{j \, \in \, \mathcal{A} \,\cup\, \mathcal{C}}\lambda_j \le a+1$, where $a = |\mathcal{A}|$. Indeed, assume otherwise that $\sum\limits_{j \, \in \, \mathcal{A} \,\cup\, \mathcal{C}}\lambda_j = a + 1 + \delta$ for some $\delta > 0$. From \eqref{eq:n=k+i-1:firstcase}, we have
\begin{align}
k+i-1-a & \ge k\sum\limits_{j \, \in \, \mathcal{A}}(\lambda_j - 1) + k\sum\limits_{j \, \in \, \mathcal{C}}\lambda_j + \sum\limits_{j \, \in \, \mathcal{B}}\lambda_j \notag \\
& = k\Bigl(\sum\limits_{j \, \in \, \mathcal{A}\,\cup\, \mathcal{C}}\lambda_j - a \Bigr) + \sum\limits_{j \, \in \, \mathcal{B}}\lambda_j \notag \\
& = k\delta + k + \Bigl(i - \sum\limits_{j \, \in \, \mathcal{A} \,\cup\, \mathcal{C}}\lambda_j\Bigr) \notag \\
& = k\delta + k+i - (a+1+\delta)
\end{align}
where the second equality comes from the original assumption that $i = \sum_{j=1}^k \lambda_j= \sum\limits_{j \, \in \, \mathcal{A}\,\cup\, \mathcal{C}}\lambda_j + \sum\limits_{j \, \in \, B}\lambda_j$. The last equation leads us to $0 \ge k\delta - \delta$ or $1 \ge k$, which could not happen when $k \ge 2$. Thus by contradiction, we have: $\sum\limits_{j \, \in \, \mathcal{A} \,\cup\, \mathcal{C}}\lambda_j \le a+1$.

Therefore, note that $\mathcal{A} \,\cup\, \mathcal{B} = [i]$, we have
\begin{align*}
1 \ge \sum\limits_{j \, \in \, {\mathcal{A}\,\cup\, \mathcal{C}}}\lambda_j - a = i - \sum\limits_{j \, \in \, \mathcal{B}}\lambda_j - a = |\mathcal{B}| - \sum\limits_{j \, \in \, \mathcal{B}}\lambda_j = \sum\limits_{j \, \in \, \mathcal{B}}(1-\lambda_j) = \sum\limits_{j \, \in \, \mathcal{A}}(\lambda_j - 1) + \sum\limits_{j \, \in \, \mathcal{C}}\lambda_j.
\end{align*}
which means that the sum of the free capacity of nodes in $\mathcal{B}$ after the greedy matching step is at most 1. Under this condition, we prove that the request rate can be served using the ensuing lemma.
\begin{lemma}{(Successive Tiling)}
Under the coding scheme \( \boldsymbol{G}_i(k+i-1, k) \), any request vector \( \boldsymbol{\lambda} = (\lambda_1, \lambda_2, \dots, \lambda_k) \) that satisfies the following constraints is servable:
\begin{equation}
\begin{cases}
    k\sum\limits_{j \, \in \, \mathcal{A}}(\lambda_j - 1) + k\sum\limits_{j \, \in \, \mathcal{C}}\lambda_j + \sum\limits_{j \, \in \, \mathcal{B}}\lambda_j \le k+i-1-a,\\
    \sum\limits_{j \, \in \, \mathcal{A} \,\cup\, \mathcal{C}}\lambda_j + \sum\limits_{j \, \in \, \mathcal{B}}\lambda_j = \sum\limits_{j=1}^{k}\lambda_j = i, \label{eq:n=k+i-1:successive}\\
    \sum\limits_{j \, \in \, \mathcal{A}}(\lambda_j - 1) + \sum\limits_{j \, \in \, \mathcal{C}}\lambda_j \le 1.
\end{cases}
\end{equation}
\end{lemma}

\begin{proof}
We again serve this request using Greedy matching:
\begin{equation}
    \begin{cases}
      \lambda_j^{\mathsf{s}} = 1, \quad \forall \, j \, \in \, \mathcal{A}, \\
      \lambda_j^{\mathsf{s}} = \lambda_j, \quad \forall \, j \, \in \, \mathcal{B}, \label{eq:greedy:successive}\\
      \lambda_j^{\mathsf{s}} = 0, \quad \forall \, j \, \in \, \mathcal{C}.
    \end{cases}
\end{equation}
The remaining request to be served is:
\[
\sum\limits_{j \, \in \, \mathcal{A}}(\lambda_j - 1) + \sum\limits_{j \, \in \, \mathcal{C}}\lambda_j.
\]
% We prove that $\sum\limits_{j \, \in \, \mathcal{A}}(\lambda_j - 1) + \sum\limits_{j \, \in \, \mathcal{C}}\lambda_j \le \nu_{\boldsymbol{\alpha}_{\mathsf{GD}}}^*(\Gamma_{\hat{G}})$, where $\boldsymbol{\alpha}_{\mathsf{GD}} = (\alpha_1, \alpha_2, \hdots, \alpha_n)$ is defined by \eqref{eq:alpha:define} and \eqref{eq:greedy:successive}.

After the Greedy matching step, \( k-1 \) non-systematic nodes remain completely unused. For each systematic node \( j \, \in \, \mathcal{B} \), we allocate a \( (1-\lambda_j) \) portion of all the \( k-1 \) non-systematic nodes. Together with the remaining \( (1-\lambda_j) \) portion of systematic node \( j \), this forms a recovery set of size \( k \) (a non-systematic recovery set), which can be used to serve any object with demand \( 1-\lambda_j \).  

From our assumption in \eqref{eq:n=k+i-1:successive}, we have
\[
\sum\limits_{j \, \in \, \mathcal{B}}(1-\lambda_j) = \sum\limits_{j \, \in \, \mathcal{A}}(\lambda_j - 1) + \sum\limits_{j \, \in \, \mathcal{C}}\lambda_j \le 1.
\]
Thus, we can perform this allocation for all systematic nodes \( j \, \in \, \mathcal{B} \) without exhausting the \( k-1 \) non-systematic nodes. In the end, we have \( |\mathcal{B}| \) non-systematic recovery sets of size \( k \) with capacity \( 1-\lambda_j \) for \( j \, \in \, \mathcal{B} \), which can be used to recover any data object. 

Since
\[
\sum\limits_{j \, \in \, \mathcal{B}}(1-\lambda_j) = \sum\limits_{j \, \in \, \mathcal{A}}(\lambda_j-1) + \sum\limits_{j \, \in \, \mathcal{C}}\lambda_j,
\]
all the remaining demand for objects in \( \mathcal{A} \) and \( \mathcal{C} \) can be served. This proves that \( \boldsymbol{\lambda} \) is servable.
\end{proof}
Fig.~\ref{fig:successive_tiling} exemplifies this scenario in $\boldsymbol{G}_3(5, 3)$ where $\boldsymbol{\lambda} = (1.9,\, 0.6,\, 0.5)$. In this example, 
\begin{align*}
\begin{cases}
\mathcal{A} = \{1\},\, \mathcal{B} = \{2, 3\}, \, \mathcal{C} = [3]\setminus\{\mathcal{A} \,\cup\, \mathcal{B}\} = \emptyset\\
\sum\limits_{j \, \in \, \mathcal{A}\,\cup\, \mathcal{C}}\lambda_j+\sum\limits_{j \, \in \, \mathcal{B}}\lambda_j = 3 = i
\end{cases}
\end{align*}
Moreover, $\sum\limits_{j \, \in \, \mathcal{B}} (1 - \lambda_j) = 1 - 0.6 + 1 - 0.5 = 0.9 < 1$.

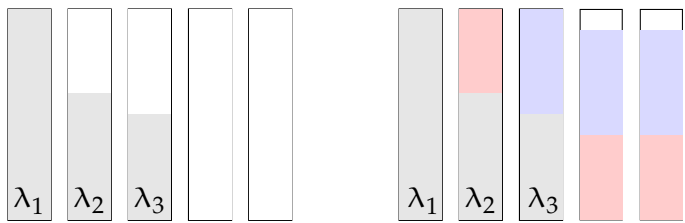
\begin{figure}[hbt]
    \centering
    % \enote{This figure is too big.}
\begin{tikzpicture}[scale=0.4, every node/.style={scale=3, font=\large}]
    % Second Plot (Right)
    \begin{scope}[shift={(0,0)}] % Shift to the right
    % Column A
    \draw[thick] (0,0) rectangle (1.4,7);
    \fill[gray!20] (0,0) rectangle (1.4,7); % Top white region, i.e., free capacity
    \node[font=\tiny, anchor=north] at (0.7,2.2) {\scalebox{0.6}{$\lambda_1$}};
    % \node[font=\large] at (0.5,-0.5) {\textbf\mathcal{A}};
    
    % Column B
    \draw[thick] (2,0) rectangle (3.4,7);
    \fill[white!20] (2,4.2) rectangle (3.4,7); % Top white region
    \fill[gray!20] (2,0) rectangle (3.4,4.2); % Bottom grey region
    \node[font=\tiny\tiny, anchor=north] at (2.7,2.2) {\scalebox{0.6}{$\lambda_2$}};
    
    % Column C
    \draw[thick] (4,0) rectangle (5.4,7);
    \fill[white!15] (4,3.5) rectangle (5.4,7); % Top white region
    \fill[gray!20] (4,0) rectangle (5.4,3.5); % Bottom grey region
    \node[font=\tiny, anchor=north] at (4.7,2.2) {\scalebox{0.6}{$\lambda_3$}};
    % \node[font=\tiny, anchor=north] at (4.5,0.2) {(a)};
    % \node[font=\large] at (4.5,-0.5) {\textbf\mathcal{C}};
    
    % Column D
    \draw[thick] (6,0) rectangle (7.4,7);
    \fill[white!15] (6,0) rectangle (7.4,7); % Top white region
    
    % Column E
    \draw[thick] (8,0) rectangle (9.4,7);
    \fill[white!15] (8,0) rectangle (9.4,7); % Top white region
    
    \end{scope}

    \begin{scope}[shift={(13,0)}] % Shift to the right
    \draw[thick] (0,0) rectangle (1.4,7);
    \fill[gray!20] (0,0) rectangle (1.4,7); % Top white region
    \node[font=\tiny, anchor=north] at (0.8,2.2) {\scalebox{0.6}{$\lambda_1$}};
    % \node[font=\large] at (0.5,-0.5) {\textbf\mathcal{A}};
    
    % Column B
    \draw[thick] (2,0) rectangle (3.4,7);
    \fill[red!20] (2,4.2) rectangle (3.4,7); % Top red region
    \fill[gray!20] (2,0) rectangle (3.4,4.2); % Bottom grey region
    \node[font=\tiny, anchor=north] at (2.7,2.2) {\scalebox{0.6}{$\lambda_2$}};
    
    % Column C
    \draw[thick] (4,0) rectangle (5.4,7);
    \fill[blue!15] (4,3.5) rectangle (5.4,7); % Top blue region
    \fill[gray!20] (4,0) rectangle (5.4,3.5); % Bottom grey region
    \node[font=\tiny, anchor=north] at (4.8,2.2) {\scalebox{0.6}{$\lambda_3$}};
    % \node[font=\tiny, anchor=north] at (4.5,0.2) {(b)};
    % \node[font=\large] at (4.5,-0.5) {\textbf\mathcal{C}};
    
    % Column D
    \draw[] (6,0) rectangle (7.4,7);
    \fill[blue!15] (6,2.8) rectangle (7.4, 6.3); % Top blue region
    \fill[red!20] (6,0) rectangle (7.4, 2.8); % Bottom red region
    % \node[font=\large] at (6.5,-0.5) {\textbf{D}};
    
    % Column E
    \draw[] (8,0) rectangle (9.4,7);
    \fill[blue!15] (8,2.8) rectangle (9.4,6.3); % Top blue region
    \fill[red!20] (8,0) rectangle (9.4,2.8); % Bottom red region
    \end{scope}

\end{tikzpicture}
\caption{System using $\boldsymbol{G}_3(5, 3)$ with request vector $\boldsymbol{\lambda} = (1.9,\, 0.6,\, 0.5)$. In this example, $\mathcal{A} = \{1\}, \mathcal{B} = \{2, 3\}, \mathcal{C} = \emptyset$, and $\sum_{j=1}^k \lambda_j = 3$. To the left, three systematic nodes were utilized according to Greedy matching. The white portions represent the free capacities remaining after the greedy allocation step. To the right, the remaining capacities of the systematic nodes in $\mathcal{B}$ were used together with the last two non-systematic nodes (associated with two parity columns) to serve the remaining request for $\lambda_1$. Concretely, three red portions, each of capacity 0.4, form a non-systematic recovery set of capacity 0.4, which serves the remaining demand for $\lambda_1$. The same applies to the three blue portions, each of capacity 0.5. Because $(1-\lambda_2) + (1-\lambda_3) < 1$, the non-systematic nodes are not fully utilized (i.e., the associated matching is not perfect), even though this vector satisfies $\sum_{j=1}^k \lambda_j = 3$ and thus lies on the boundary of the SRR.}
\label{fig:successive_tiling}
\end{figure}

\item The Node capacity constraint is tightly satisfied:
\begin{equation}
\begin{cases}
    & k\sum\limits_{j \, \in \, \mathcal{A}}(\lambda_j - 1) + k\sum\limits_{j \, \in \, \mathcal{C}}\lambda_j + \sum\limits_{j \, \in \, \mathcal{B}}\lambda_j = k+i-1-a \label{eq:n=k+i-1:secondcase}\\
    & \sum\limits_{j=1}^{k}\lambda_j < i
\end{cases}
\end{equation}
Observe first that because $\sum\limits_{j=1}^{k}\lambda_j < i$, then there must exists $l \, \in \, [i]$ such that $\lambda_l < 1$ (for otherwise $\sum\limits_{j=1}^{k}\lambda_j \ge \sum\limits_{j=1}^{i}\lambda_j \ge i$, contradiction!). Therefore, $a = |\mathcal{A}| \le i-1$.

We now find a lower bound for $\sum\limits_{j=1}^{k}\lambda_j$. Let $T = \sum\limits_{j \, \in \, \mathcal{B}}\lambda_j$, from \eqref{eq:n=k+i-1:secondcase} we have:
\begin{align}
k\sum\limits_{j \, \in \, \mathcal{A}}(\lambda_j-1) + k\sum\limits_{j \, \in \, \mathcal{C}}\lambda_j & = k+i-1-a-T \\
\Leftrightarrow \quad \sum\limits_{j=1}^{k}\lambda_j = \sum\limits_{j \, \in \, \mathcal{A} \,\cup\, \mathcal{C}}\lambda_j +T & = \dfrac{k+i-1-a-T}{k}+a+T  \\
& = \dfrac{k+i-1-a}{k}+a+\Bigl(T - \dfrac{T}{k}\Bigr)\\
& \ge \dfrac{k+i-1-a}{k}+a
\end{align}
Therefore we see that $\dfrac{k+i-1-a}{k}+a$ is the lower bound for $\sum\limits_{j=1}^{k}\lambda_j$, and $\sum\limits_{j=1}^{k}\lambda_j = \dfrac{k+i-1-a}{k}+a$ if and only if $T = \sum\limits_{j \, \in \, \mathcal{B}}\lambda_j = 0$ or 
\[
\begin{cases}{}
    & k\sum\limits_{j \, \in \, \mathcal{A}}(\lambda_j - 1) + k\sum\limits_{j \, \in \, \mathcal{C}}\lambda_j = k+i-1-a \\
    & \sum\limits_{j \, \in \, \mathcal{B}}\lambda_j = 0
\end{cases}
\]
The second constraint $\sum\limits_{j \, \in \, \mathcal{B}}\lambda_j = 0$ means that $\lambda_{j} = 0$ for all $j \, \in \, \mathcal{B}$. Therefore, all the nodes in $\mathcal{B}$ are untouched, and the Node capacity constraint $k\sum\limits_{j \, \in \, \mathcal{A}}(\lambda_j - 1) + k\sum\limits_{j \, \in \, \mathcal{C}}\lambda_j = k+i-1-a$ can be tightly satisfied just like the case $\boldsymbol{G}_i(n, k)$ when $n \ge k +i$ (recall that we have $|\mathcal{A}| = a \le i-1$, so after the Greedy matching allocation, we will be left with at least $(k+i-1) - (i-1) = k$ entirely unused nodes).

We have just proved that for any request vector $\boldsymbol{\lambda}_A$ such that:
\[
\begin{cases}{}
    & k\sum\limits_{j \, \in \, \mathcal{A}}(\lambda_j - 1)+ k\sum\limits_{j \, \in \, \mathcal{C}}\lambda_j + \sum\limits_{j \, \in \, \mathcal{B}}\lambda_j = k+i-1-a\\
    & \sum\limits_{j=1}^{k}\lambda_j = \dfrac{k+i-1-a}{k}+a
\end{cases}
\]
is servable. On the other hand, part 1 proved that any request vector $\boldsymbol{\lambda}_B$ such that
\[
\begin{cases}{}
    & k\sum\limits_{j \, \in \, \mathcal{A}}(\lambda_j - 1)+ k\sum\limits_{j \, \in \, \mathcal{C}}\lambda_j + \sum\limits_{j \, \in \, \mathcal{B}}\lambda_j = k+i-1-a \\
    & \sum\limits_{j=1}^{k}\lambda_j = i
\end{cases}
\]
is also servable. For any request vector \( \boldsymbol{\lambda} \) that tightly satisfies the Node capacity constraint (i.e., the first part of \eqref{eq:n=k+i-1:secondcase} holds), its sum rate \( \sum\limits_{j=1}^{k} \lambda_j \) must lie within the range  
\[
\Biggl[\frac{k+i-1-a}{k} + a, i\Biggr].
\]
Here, we have established that \( \frac{k+i-1-a}{k} + a \) serves as the lower bound for \( \sum\limits_{j=1}^{k} \lambda_j \), and Lemma~\ref{lem:sum_rate_bound} has proven that its upper bound is \( i \). Consequently, \( \boldsymbol{\lambda} \) must be servable, as it can be expressed as a linear combination of the two servable extreme servable points \( \boldsymbol{\lambda}_A \) and \( \boldsymbol{\lambda}_B \) (recall that the service polytope is convex).

\item None of the two constraints is tightly satisfied:
\begin{equation}
\begin{cases}
    & k\sum\limits_{j \, \in \, \mathcal{A}}(\lambda_j - 1) + k\sum\limits_{j \, \in \, \mathcal{C}}\lambda_j + \sum\limits_{j \, \in \, \mathcal{B}}\lambda_j < n-a = k+i-1-a \label{eq:n=k+i-1:thirdcase}\\
    & \sum\limits_{j=1}^{k}\lambda_j < i
\end{cases}
\end{equation}
In this case, we can always increase 1 request component $\lambda_j$ until either the first or second part of \eqref{eq:n=k+i-1:thirdcase} becomes equality (whichever comes first), and we come back to the previous cases (1) and (2). Thus, any vector that falls into this case can be served.
% In summary, the service rate region for systematic $\boldsymbol{G}_i(k+i-1, k)$ is given by \eqref{eq:node:capacity} and \eqref{eq:vertex:cover}.
\end{enumerate}
\end{proof}
Fig.~\ref{mds_example} plots the SRRs of $\boldsymbol{G}_3(4, 3)$ and $\boldsymbol{G}_3(5, 3)$. The SRR of $\boldsymbol{G}_3(5, 3)$ is given by:
% \begin{align*}
%     \begin{cases}
%         \lambda_j & \ge 0, \, j\, \in \, [3]\\
%         \sum\limits_{j=1}^3 \lambda_j & \le 3 \\
%         3\lambda_1 + \lambda_2 + \lambda_3 &\le 7\\
%         3\lambda_2 + \lambda_1 + \lambda_3 &\le 7\\
%         3\lambda_3 + \lambda_2 + \lambda_1 &\le 7\\
%         3(\lambda_1 + \lambda_2) + \lambda_3 &\le 9\\
%         3(\lambda_1 + \lambda_3) + \lambda_2 &\le 9\\
%         3(\lambda_3 + \lambda_2) + \lambda_1 &\le 9
%     \end{cases}
% \end{align*}
\[
\begin{aligned}
    &\begin{cases}
        \lambda_j \ge 0, \quad j \, \in \, [3], \\
        \sum\limits_{j=1}^3 \lambda_j \le 3, & \\
        3\lambda_1 + \lambda_2 + \lambda_3 \le 7, \quad    3\lambda_2 + \lambda_1 + \lambda_3 \le 7, \quad 3\lambda_3 + \lambda_2 + \lambda_1 \le 7, & \\
        3(\lambda_1 + \lambda_2) + \lambda_3 \le 9, \quad 3(\lambda_1 + \lambda_3) + \lambda_2 \le 9, \quad 3(\lambda_2 + \lambda_3) + \lambda_1 \le 9. &
    \end{cases}
\end{aligned}
\]
Note that the constraint \( \sum\limits_{j=1}^3 \lambda_j \leq 3 \) is tighter than the last three constraints, causing them inactive and redundant.
\begin{figure}
\centering  
\begin{tikzpicture}
[scale=1.05,
	back/.style={dotted},
	edge/.style={color=black},
	facet/.style={fill=gray,fill opacity=0.200000},
	vertex/.style={inner sep=0.5pt,circle,draw=blue!25!black,fill=blue!75!black,thick,anchor=base}]
    %MDs (4, 3) 
    \begin{scope}[xshift = 0cm]
        %% Coordinate of the vertices:
        %%
        \coordinate (0.00000, 0.00000, 0.00000) at (0.00000, 0.00000, 0.00000);
        \coordinate (0.00000, 0.00000, 2.00000) at (0.00000, 0.00000, 2.00000);
        \coordinate (0.00000, 2.00000, 0.00000) at (0.00000, 2.00000, 0.00000);
        \coordinate (1.00000, 1.00000, 1.00000) at (1.00000, 1.00000, 1.00000);
        \coordinate (2.00000, 0.00000, 0.00000) at (2.00000, 0.00000, 0.00000);
        %%
        %%
        %% Drawing edges in the back
        %%
        \draw[edge,back] (0.00000, 0.00000, 0.00000) -- (0.00000, 0.00000, 2.00000);
        \draw[edge,back] (0.00000, 0.00000, 0.00000) -- (0.00000, 2.00000, 0.00000);
        \draw[edge,back] (0.00000, 0.00000, 0.00000) -- (2.00000, 0.00000, 0.00000);
        %%
        %%
        %% Drawing vertices in the back
        %%
        \node[vertex] at (0.00000, 0.00000, 0.00000)     {};
        %%
        %%
        %% Drawing the facets
        %%
        \fill[facet] (1.00000, 1.00000, 1.00000) -- (0.00000, 0.00000, 2.00000) -- (0.00000, 2.00000, 0.00000) -- cycle {};
        \fill[facet] (2.00000, 0.00000, 0.00000) -- (0.00000, 0.00000, 2.00000) -- (1.00000, 1.00000, 1.00000) -- cycle {};
        \fill[facet] (2.00000, 0.00000, 0.00000) -- (0.00000, 2.00000, 0.00000) -- (1.00000, 1.00000, 1.00000) -- cycle {};
        %%
        %%
        %% Drawing edges in the front
        %%
        \draw[edge] (0.00000, 0.00000, 2.00000) -- (0.00000, 2.00000, 0.00000);
        \draw[edge] (0.00000, 0.00000, 2.00000) -- (1.00000, 1.00000, 1.00000);
        \draw[edge] (0.00000, 0.00000, 2.00000) -- (2.00000, 0.00000, 0.00000);
        \draw[edge] (0.00000, 2.00000, 0.00000) -- (1.00000, 1.00000, 1.00000);
        \draw[edge] (0.00000, 2.00000, 0.00000) -- (2.00000, 0.00000, 0.00000);
        \draw[edge] (1.00000, 1.00000, 1.00000) -- (2.00000, 0.00000, 0.00000);
        %%
        %%
        %% Drawing the vertices in the front
        %%
        \node[vertex] at (0.00000, 0.00000, 2.00000)     {};
        \node at (0,2) [left] {\small \textcolor{black}{2}};
        \node[vertex] at (0.00000, 2.00000, 0.00000)     {};
        \node at (2,0) [below] {\small \textcolor{black}{2}};
        \node[vertex] at (1.00000, 1.00000, 1.00000)     {};
        \node[vertex] at (2.00000, 0.00000, 0.00000)     {};
        \node at (-0.7,-0.7) [left] {\small \textcolor{black}{2}};
        \draw[-,thin] (0.7,0.65) -- (2.5,1) node[right, inner sep=2pt] {\small $(1, 1, 1)$};
        % \node at (0.9, 1, 1) [below] {\small \textcolor{black}{(1, 1, 1)}};
        %%
        %%
        \draw[->] (2.00000, 0.00000, 0.00000) -- (2.50000, 0.00000, 0.00000)  node[below] {\small $\lambda_c$};
        \draw[->] (0.00000, 2.00000, 0.00000) -- (0.00000, 2.50000, 0.00000)  node[right] {\small $\lambda_b$};
        \draw[->] (0.00000, 0.00000, 2.00000) -- (0.00000, 0.00000, 2.90000)  node[left] {\small $\lambda_a$};
        \end{scope}

    \begin{scope}[xshift=6.5cm, yshift = 0cm]
	\coordinate (r) at (0.00000, 0.00000, 0.00000);
	\coordinate (a) at (0.00000, 0.00000, 2.33333);
	\coordinate (ab) at (0.00000, 2.00000, 1.00000);
	\coordinate (ba) at (0.00000, 1.00000, 2.00000);
	\coordinate (b) at (0.00000, 2.33333, 0.00000);
	\coordinate (ca) at (1.00000, 0.00000, 2.00000);
	\coordinate (ac) at (2.00000, 0.00000, 1.00000);
	\coordinate (bc) at (2.00000, 1.00000, 0.00000);
	\coordinate (cb) at (1.00000, 2.00000, 0.00000);
	\coordinate (c) at (2.33333, 0.00000, 0.00000);
	% \coordinate (abc) at (1.00000, 1.00000, 1.00000);
	%%
	%%
	%% Drawing edges in the back
	%%
	\draw[edge,back] (r) -- (a);
	\draw[edge,back] (r) -- (b);
	\draw[edge,back] (r) -- (c);
	%%
	%%
	%% Drawing vertices in the back
	%%
	\node[vertex] at (0.00000, 0.00000, 0.00000)     {};
	%%
	%%
	%% Drawing the facets
	\fill[facet] (c) -- (ac) -- (ca) -- (a) -- (ba) -- (ab) -- (b) -- (cb) -- (bc) -- cycle {};	
	%%
	%%
	%% Drawing edges in the front
	%%
	\draw[edge] (a) -- (ba) -- (ab) -- (b);
	\draw[edge] (a) -- (ca) -- (ac) -- (c);
	\draw[edge] (b) -- (cb) -- (bc) -- (c);
	%\draw[edge] (ab) -- (abc) -- (ba);
	%\draw[edge] (ac) -- (abc) -- (ca);
	%\draw[edge] (cb) -- (abc) -- (bc);
	\draw[edge] (ba) -- (ca);
	\draw[edge] (bc) -- (ac);
	\draw[edge] (cb) -- (ab);	
		
	%%
	%%
	%% Drawing the vertices in the front
	%%
	\node[vertex] at (a)     {};
	\node[vertex] at (b)     {};
	\node[vertex] at (ac)     {};
	\node[vertex] at (bc)     {};
	\node[vertex] at (c)     {};
	\node[vertex] at (ab)     {};
	\node[vertex] at (ba)     {};
	% \node[vertex] at (abc)     {};
	%%
	%%
     \draw[-,thin] (0.7,1.2) -- (3.5,1.8) node[right, fill=gray!20, inner sep=2pt] {\small $\lambda_a + \lambda_b + \lambda_c = 3$};
	\draw[->] (c) -- (2.8, 0.00000, 0.00000)  node[below] {\small $\lambda_c$};
    \node at (0,2.3) [left] {\small \textcolor{black}{$\frac{7}{3}$}};
	\draw[->] (b) -- (0.00000, 2.8, 0.00000)  node[right] {\small $\lambda_b$};
    \node at (2.3, 0) [below] {\small \textcolor{black}{$\frac{7}{3}$}};
    \node at (1.95, 1.0) [right] {\small \textcolor{black}{(0, 1, 2)}};
    \node at (1, 2) [right] {\small \textcolor{black}{(0, 2, 1)}};
	\draw[->] (a) -- (0.00000, 0.00000, 3.3)  node[left] {\small $\lambda_a$};
    \node at (-0.85,-0.8) [left] {\small \textcolor{black}{$\frac{7}{3}$}};
        \end{scope}
\end{tikzpicture}
\caption{Service polytope of systematic MDS code $\boldsymbol{G}_3(4, 3)$ (left) and $\boldsymbol{G}_3(5, 3)$.}
    \label{mds_example}
\end{figure}
\subsection{Coding schemes with $k \le n < k+i-1$}
In this case, the general SRR is unknown. However, we prove that although Lemma~\ref{lem:sum_rate_bound} still holds, it can not be satisfied by too many vectors $\boldsymbol{\lambda}$. 
We see from Fig.~\ref{mds_example} that in $\mathcal{S}_3(5, 3)$, there is a plane of request vectors $\boldsymbol{\lambda}$ such that $\sum_{j=1}^k\lambda_j=3$, while in $\mathcal{S}_3(4, 3)$ there is only one such vector, $\boldsymbol{\lambda} = (1, 1, 1)$, that satisfies this constraint. In general, the SRR of $\boldsymbol{G}_i(k+i-1, k)$ contains a plane of service vectors $\boldsymbol{\lambda}$ such that $\sum\limits_{j=1}^k \lambda_j = i$.

We will prove that when $n < k+i-1$, there is only one vector $\boldsymbol{\lambda}$ that satisfies the constraint $\sum\limits_{j=1}^k \lambda_j = i$, namely $\boldsymbol{\lambda} = (1, 1, \hdots, 1, 0, 0, \hdots, 0)$ (with the first $i$ elements equal to 1 and the remaining elements equal to 0). The key idea behind the proof is that when $n$ is too small relative to $k+i-1$, the number of non-systematic nodes is insufficient to support any other allocation satisfying the given constraint. This result shows that in this case, apart from Greedy matching, no other rate-splitting scheme can serve this request vector.

\begin{theorem}\label{thm:greedy_optimal}
    For \( k \leq n \leq k+i-2 \), the service polytope \( \mathcal{S}_i(n, k) \) contains only the vector 
    \[
    \begin{array}{c}        
    \boldsymbol{\lambda} = (1, \hdots, 1, 0, \hdots, 0)
    \vspace{-6mm}\\
    \hspace{-6mm}
    \underbrace{\hspace{1.1cm}}_{\text{\( i \) elements}}
    \end{array}
    \]
    as the unique vector satisfying \( \sum\limits_{j=1}^k \lambda_j = i \). Moreover, any scheme other than Greedy matching cannot serve this request vector.
\end{theorem}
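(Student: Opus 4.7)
The plan is to exploit LP duality between maximum fractional matching and minimum fractional vertex cover on the recovery hypergraph $\Gamma_i(n,k)$, together with the scarcity of non-systematic columns in this regime, to force any optimal matching to be supported entirely on systematic edges.

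First I would establish the key structural observation: since $n - i \le k - 2$, any non-systematic edge, which by Remark~\ref{remark:MDS} has size exactly $k$, must contain at least $k - (n-i) \ge 2$ systematic vertices. Next, assigning weight $1$ to every systematic vertex and $0$ to every other vertex gives a valid fractional vertex cover of $\Gamma_i(n,k)$ of size $i$: each systematic edge is covered by its unique systematic endpoint, and each non-systematic edge is in fact over-covered by its at-least-two systematic vertices. This extends Lemma~\ref{lem:sum_rate_bound} verbatim to the range $n \le k + i - 1$ (the original argument for $n = k+i-1$ only used that non-systematic edges meet some systematic vertex), and by LP duality gives $\nu^*(\Gamma_i(n,k)) = \tau^*(\Gamma_i(n,k)) = i$.

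Now suppose $\boldsymbol{\lambda} \in \mathcal{S}_i(n,k)$ satisfies $\sum_{j=1}^{k} \lambda_j = i$. By Proposition~\ref{prop:demand_to_matching} there is a fractional matching $\boldsymbol{w}$ with $\boldsymbol{\lambda} = \boldsymbol{w}\boldsymbol{S}$, and since each edge carries exactly one label,
\[
\sum_{\epsilon \in E} w_\epsilon \;=\; \sum_{j=1}^{k} \lambda_j \;=\; i \;=\; \nu^*(\Gamma_i(n,k)),
\]
so $\boldsymbol{w}$ is a maximum fractional matching. I would then apply LP complementary slackness between $\boldsymbol{w}$ and the systematic vertex cover $\boldsymbol{x}$ above: every edge $\epsilon$ with $w_\epsilon > 0$ must satisfy $\sum_{v \in \epsilon} x_v = 1$. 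Systematic edges meet this with equality, but every non-systematic edge has $\sum_{v \in \epsilon} x_v \ge 2$ by the structural observation, so $w_\epsilon = 0$ on all non-systematic edges. The dual slackness condition on systematic vertices ($x_v = 1$ forces $\sum_{\epsilon \ni v} w_\epsilon = 1$) then forces each of the $i$ systematic edges to carry weight exactly $1$.

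Both conclusions follow immediately: reading off the labels of the surviving edges gives $\lambda_j = 1$ for $j \in [i]$ and $\lambda_j = 0$ for $j > i$, proving uniqueness of the extremal vector; and the matching $\boldsymbol{w}$ itself is uniquely pinned down and coincides exactly with Greedy matching of Definition~\ref{def:GreedyMatching} on this input (each systematic server saturated, no non-systematic edge used), showing no alternative scheme can serve this request vector. I expect the main obstacle to be phrasing the complementary-slackness step cleanly in the hypergraph LP with mixed edge sizes; once the ``at least two systematic vertices per non-systematic edge'' fact is in hand, the rest is essentially forced.
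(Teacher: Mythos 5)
Your argument is correct, and it reaches both conclusions of the theorem, but it takes a genuinely different route from the paper. The paper first invokes Theorem~\ref{thm:Greedy} to assume the hypothetical vector $\boldsymbol{\mu}$ is served greedily, and then runs a capacity-accounting argument on the residual hypergraph: because each non-systematic recovery set contains at least $k-(n-i)\ge 2$ systematic columns, serving an amount $\delta$ of residual demand consumes at least $2\delta$ of the leftover systematic capacity in $\mathcal{B}$, while the balance identity $\sum_{j\in\mathcal{A}}(\mu_j-1)+\sum_{j\in\mathcal{C}}\mu_j=\sum_{j\in\mathcal{B}}(1-\mu_j)$ says residual demand and leftover capacity are equal; hence both are zero and $\boldsymbol{\mu}=\boldsymbol{\lambda}$. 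You instead work purely at the LP level: the weight-one assignment on the $i$ systematic vertices is a fractional vertex cover of size $i$, the $i$ disjoint systematic edges are a matching of size $i$, so $\nu^*=\tau^*=i$; any matching realizing a vector with sum rate $i$ is therefore primal-optimal, and complementary slackness against this optimal cover kills every non-systematic edge (over-covered with dual weight at least $2$) and saturates every systematic edge, which simultaneously pins down $\boldsymbol{\lambda}=(1,\dots,1,0,\dots,0)$ and shows the serving matching is unique, i.e.\ Greedy. Both proofs hinge on the same structural fact that each size-$k$ non-systematic edge meets at least two systematic vertices when $n\le k+i-2$; your version buys independence from Theorem~\ref{thm:Greedy} and delivers uniqueness of the matching over \emph{all} valid allocations in one stroke, while the paper's version is more elementary and consistent with its greedy-matching narrative. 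A small simplification of your write-up: once primal slackness has zeroed the non-systematic edges, you can dispense with dual slackness entirely, since each systematic edge has weight at most $1$ and the $i$ of them must sum to $i$.
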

\begin{proof}
    Since \( \Gamma_G \) has one systematic node for each of the first \( i \) basis vectors, the vector \( \boldsymbol{\lambda} \) is servable, meaning it lies within the service polytope. Assume, toward a contradiction, that there exists another vector \( \boldsymbol{\mu} \neq \boldsymbol{\lambda} \) in the polytope such that \( \sum\limits_{j=1}^k \mu_j = i \). Theorem~\ref{thm:Greedy} guarantees that this vector is servable by Greedy matching. Applying Greedy matching to \( \boldsymbol{\mu} \), we obtain:
    \[
    \begin{cases}
        (\mu_j)^{\mathsf{s}} = 1, \quad \forall \, j \, \in \, \mathcal{A}, \\
        (\mu_j)^{\mathsf{s}} = \mu_j, \quad \forall \, j \, \in \, \mathcal{B}, \\
        (\mu_j)^{\mathsf{s}} = 0, \quad \forall \, j \, \in \, \mathcal{C}.
    \end{cases}
    \]
    After this step, the remaining request to be served is  
    \[
    \sum\limits_{j \, \in \, \mathcal{A}}(\mu_j - 1) + \sum\limits_{j \, \in \, \mathcal{C}}\mu_j,
    \]
    which must be handled by non-systematic recovery sets, each of cardinality \( k \).
    The number of non-systematic nodes in the system is \( n-i \). Each non-systematic recovery set contains at least two systematic nodes since \( k-(n-i) \geq 2 \). Therefore, for every portion \( \delta \) of  
\(
\sum\limits_{j \, \in \, \mathcal{A}}(\mu_j - 1) + \sum\limits_{j \, \in \, \mathcal{C}}\mu_j
\)
served by non-systematic recovery sets, at least \( 2\delta \) must be drawn from the systematic nodes in \( \mathcal{B} \), as all nodes in \( \mathcal{A} \) have already been fully utilized. In other words, to serve an amount \( \delta \) in the remaining requests, at least \( 2\delta \) must be drawn from  
\(
\sum\limits_{j \, \in \, \mathcal{B}}(1-\mu_j).
\)

    On the other hand, from the condition  
    \[
    i = \sum\limits_{j=1}^{k}\lambda_j = \sum\limits_{j \, \in \, \mathcal{A} \,\cup\, \mathcal{B} \,\cup\, \mathcal{C}} \lambda_j,
    \]
    it follows that  
    \[
    \sum\limits_{j \, \in \, \mathcal{A}}(\lambda_j - 1) + \sum\limits_{j \, \in \, \mathcal{C}}\lambda_j = \sum\limits_{j \, \in \, \mathcal{B}}(1-\lambda_j).
    \]
    This means that after the Greedy matching allocation, the sum of the remaining requests to be served must equal the total free capacity of the systematic nodes. However, we have previously shown that to serve an amount \( \delta \) in the remaining requests, at least \( 2\delta \) must be drawn from  
    \(
    \sum\limits_{j \, \in \, \mathcal{B}}(1-\mu_j).
    \)
    This can happen only if $\sum\limits_{j \, \in \, \mathcal{A}}(\lambda_j - 1) + \sum\limits_{j \, \in \, \mathcal{C}}\lambda_j = \sum\limits_{j \, \in \, \mathcal{B}}(1-\lambda_j) = 0$ or equivalently,
    \[
    \begin{cases}
        \lambda_j = 1, \quad \forall \, j \, \in \, \mathcal{A}, \\
        \lambda_j = 0, \quad \forall \, j \, \in \, \mathcal{B} \,\cup\, \mathcal{C}.
    \end{cases}
    \]
    This implies that  
    \[
    \begin{array}{c}        
    \boldsymbol{\mu} = (1, \hdots, 1, 0, \hdots, 0) = \boldsymbol{\lambda},
    \vspace{-6mm}\\
    \hspace{-16mm}
    \underbrace{\hspace{1.05cm}}_{\text{\( i \) elements}}
    \end{array} 
    \]
    contradicting our assumption that \( \boldsymbol{\mu} \neq \boldsymbol{\lambda} \). 

    Thus, no such vector \( \boldsymbol{\mu} \) can exist, proving that \( \boldsymbol{\lambda} \) is the unique vector satisfying \( \sum\limits_{j=1}^k \lambda_j = i \).

    Moreover, the previous capacity argument also shows that $\boldsymbol{\lambda}$ can not be served by any scheme other than Greedy matching. In other words, there is only one matching $\boldsymbol{w}$ in the matching polytope such that $\boldsymbol{\lambda} = \boldsymbol{\lambda}(\boldsymbol{w})$.
\end{proof}
\section{Conclusions and Future Work}\label{Sec:Conclusion}
We presented a rigorous analysis of the \textit{service rate region} (SRR) for distributed storage systems that employ MDS codes. We used graph-theoretic methods to comprehensively characterize achievable rates. By constructing a family of MDS generator matrices with varying numbers of systematic columns, we showed that increasing the number of systematic columns in the generator matrix of the same code strictly enlarges the SRR. We introduced two bounding simplices, the \textbf{Maximal Matching} and \textbf{Maximal Achievable} simplices, which provide clear geometric boundaries on feasible request rates. A key technical contribution was the proposal of a \textbf{Greedy Matching} allocation strategy, for which we proved optimality in several scenarios, thereby demonstrating that specific extreme points of the SRR boundary are attained only via Greedy allocation. Using this scheme, we developed the first explicit characterizations of MDS-coded SRRs under various configurations of \(n\) (servers), \(k\) (data objects), and \(i\) (systematic servers). We anticipate that these results will guide both practical code design and broader theoretical explorations of the interplay between code parameters and data-access performance.

In particular, we showed that the SRR for this class of codes strictly increases with the number of systematic columns, and thus fully systematic MDS codes achieve the largest SRR. Moreover, we have previously observed that limited replacement of some coded nodes with replicated systematic nodes can enlarge the SRR. However, these results should not discourage the study of non-systematic or partially systematic codes. In many practical settings, storing raw data on specific servers may be prohibited due to privacy, security, or architectural constraints, making non-systematic designs unavoidable. Furthermore, distributed storage systems often balance multiple objectives beyond maximizing service capacity, such as reducing data-access cost, supporting access asynchrony, or providing data hiding. From a theoretical standpoint, analyzing the entire family of codes, including non-systematic ones, is essential for developing general tools and insights for SRR characterization. Thus, understanding non-systematic codes remains fundamental both for practical system design and for establishing a unified theoretical framework for service rate analysis.

Until now, only a few general techniques have been used to derive SRRs; e.g., graph-theoretic methods were used for MDS codes because of their combinatorial properties, and geometric methods were used for Reed--Muller codes because of their geometric interpretations. Simplex codes can be seen from both perspectives \cite{Service:journals/tit/AktasJKKS21,Service:preprint/arxiv/LySL25}. A natural direction for future work is to extend SRR analysis to broader classes of storage codes, including regenerating and locally recoverable codes, whose recovery constraints differ fundamentally and may require new analytical techniques. Another promising direction is to leverage the methods developed for SRR characterization to design coding schemes tailored to achieve a desired, predefined SRR.

\section*{Acknowledgment}
This research was partially supported by the National Science Foundation under Grant No. CIF-2122400. The authors thank V.~Lalitha for her essential comments on an earlier draft of this work, and the anonymous reviewers for their valuable input. 

\appendix

\section{Dual Relationship between Fractional Matching and Vertex Cover}
\label{app:duality}

In this appendix, we explicitly formulate the fractional matching problem and the fractional vertex cover problem as a pair of Primal and Dual Linear Programs (LP). We then invoke the Strong Duality Theorem to establish the equality $\nu^*(\Gamma) = \tau^*(\Gamma)$ used in Section~\ref{Sec:Problem_Formulate}-D.

Consider the recovery hypergraph $\Gamma = (V, E)$ with incidence matrix $\boldsymbol{A} \in \{0,1\}^{|V| \times |E|}$, where $A_{v, \epsilon} = 1$ if vertex $v \in \epsilon$ and $0$ otherwise.

\subsection{The Primal Problem: Fractional Matching}
The fractional matching number $\nu^*(\Gamma)$ is defined as the maximum size of a fractional matching. This is an optimization problem where we seek to maximize the sum of edge weights $\boldsymbol{w} \in \mathbb{R}_{\ge 0}^{|E|}$ subject to vertex capacity constraints. Formally, the Primal LP is:

\begin{align}
    (\text{Primal}) \quad \nu^*(\Gamma) = \quad & \text{maximize} \quad \sum_{\epsilon \in E} w_\epsilon \nonumber \\
    & \text{subject to} \quad \sum_{\epsilon \ni v} w_\epsilon \le 1, \quad \forall\, v \in V, \nonumber \\
    & \phantom{\text{subject to}} \quad w_\epsilon \ge 0, \quad \forall\, \epsilon \in E.
\end{align}

In matrix notation, letting $\boldsymbol{1}_{|E|}$ and $\boldsymbol{1}_{|V|}$ denote all-one column vectors of appropriate dimensions, the Primal problem is:
\[
 \text{max } \{ \boldsymbol{1}_{|E|}^\top \boldsymbol{w} : \boldsymbol{A}\boldsymbol{w} \le \boldsymbol{1}_{|V|},\ \boldsymbol{w} \ge \boldsymbol{0} \}.
\]

\subsection{The Dual Problem: Fractional Vertex Cover}
To derive the Dual LP, we associate a dual variable $d_v$ with each constraint in the Primal (i.e., for each vertex $v \in V$). Standard LP duality transformations dictate that~\cite{Linear_programming}:
\begin{itemize}
    \item The maximization objective becomes minimization.
    \item The right-hand side vector of the Primal ($\boldsymbol{1}_{|V|}$) becomes the cost vector of the Dual objective function.
    \item The Primal coefficient matrix $\boldsymbol{A}$ is transposed to $\boldsymbol{A}^\top$.
    \item The inequality direction reverses ($\le$ becomes $\ge$).
\end{itemize}

Thus, the Dual LP is given by:
\begin{align}
    (\text{Dual}) \quad \tau^*(\Gamma) = \quad & \text{minimize} \quad \sum_{v \in V} d_v \nonumber \\
    & \text{subject to} \quad \sum_{v \in \epsilon} d_v \ge 1, \quad \forall\, \epsilon \in E, \nonumber \\
    & \phantom{\text{subject to}} \quad d_v \ge 0, \quad \forall\, v \in V.
\end{align}

In matrix notation, this is:
\[
 \text{min } \{ \boldsymbol{1}_{|V|}^\top \boldsymbol{d} : \boldsymbol{A}^\top \boldsymbol{d} \ge \boldsymbol{1}_{|E|}, \boldsymbol{d} \ge \boldsymbol{0} \}.
\]
The constraints of the Dual problem require that for every edge $\epsilon$, the sum of the weights of the vertices incident to $\epsilon$ is at least 1. This is precisely the definition of a fractional vertex cover, and the objective minimizes the total weight of such a cover.

\subsection{Strong Duality}
The constraints for both the Primal and Dual problems define non-empty, bounded polytopes (since all weights are non-negative and bounded by the number of vertices or edges). Therefore, optimal solutions exist for both. The \textit{Strong Duality Theorem} of Linear Programming states that if the primal problem has an optimal solution, then the dual problem also has an optimal solution, and the optimal objective values are equal~\cite{Linear_programming}. Consequently:
\[
    \nu^*(\Gamma) = \tau^*(\Gamma).
\]
This equality allows us to upper-bound the sum rate of any achievable request vector $\boldsymbol{\lambda}$ using the minimum weight of a fractional vertex cover, as detailed in Proposition~\ref{prop:sum_bound}.
\bibliography{bibliography}
\bibliographystyle{IEEEtran}

% \begin{IEEEbiography}{Michael Shell}
% Biography text here.
% \end{IEEEbiography}
\end{document}